\let\footnote=\endnote
\newtheorem{theorem}{Theorem}
\newtheorem{proposition}{Proposition}
\newenvironment{proof}
 {{\sl Proof.}\hspace*{1 ex}}%
 {{\nopagebreak\hspace*{\fill}$\Box$\par\vspace{12pt}}}
\newcommand{\floor}[1]{{\left\lfloor{#1}\right\rfloor}}
\newcommand{\ceil}[1]{{\left\lceil{#1}\right\rceil}}
\newcommand{\R}{\ensuremath{\mathbb{R}}}
\newcommand{\C}{\ensuremath{\mathbb{C}}}
\DeclarePairedDelimiter\bra{\langle}{\rvert}
\DeclarePairedDelimiter\ket{\lvert}{\rangle}
\DeclarePairedDelimiterX\braket[2]{\langle}{\rangle}{#1 \delimsize\vert #2}
\DeclarePairedDelimiterX\dotp[2]{\langle}{\rangle}{#1, #2}
\begin{document}
%\RUNAUTHOR{Nannicini}

% Title or shortened title suitable for running heads. Sample:
% \RUNTITLE{Bundling Information Goods of Decreasing Value}
% Enter the (shortened) title:
%\RUNTITLE{Fast quantum subroutines for the simplex method}

% Full title. Sample:
\title{Fast quantum subroutines for the simplex method}

% Block of authors and their affiliations starts here:
% NOTE: Authors with same affiliation, if the order of authors allows,
%   should be entered in ONE field, separated by a comma.
%   \EMAIL field can be repeated if more than one author
\author{Giacomo Nannicini\thanks{IBM Quantum, IBM T.J.~Watson research center, Yorktown Heights, NY 10598, \url{nannicini@us.ibm.com}} %, \URL{}}
% Enter all authors
} % end of the block

\date{\today}

% Fill in data. If unknown, outcomment the field
%\KEYWORDS{Linear programming, Quantum algorithms, Simplex method} %\HISTORY{}

\maketitle

\abstract{%

  We propose quantum subroutines for the simplex method that avoid
  classical computation of the basis inverse. We show how to quantize
  all steps of the simplex algorithm, including checking optimality,
  unboundedness, and identifying a pivot (i.e., pricing the columns
  and performing the ratio test) according to Dantzig's rule or the
  steepest edge rule. The quantized subroutines obtain a polynomial
  speedup in the dimension of the problem, but have worse dependence
  on other numerical parameters. For example, for a problem with $m$
  constraints, $n$ variables, at most $d_c$ nonzero elements per
  column of the costraint matrix, at most $d$ nonzero elements per
  column or row of the basis, basis condition number $\kappa$, and
  optimality tolerance $\epsilon$, pricing can be performed in
  $\tilde{O}(\frac{1}{\epsilon}\kappa d \sqrt{n}(d_c n + d m))$ time,
  where the $\tilde{O}$ notation hides polylogarithmic factors;
  classically, pricing requires $O(d_c^{0.7} m^{1.9} + m^{2 + o(1)} +
  d_c n)$ time in the worst case using the fastest known algorithm for
  sparse matrix multiplication. For well-conditioned sparse problems
  the quantum subroutines scale better in $m$ and $n$, and may
  therefore have an advantage for very large problems. The running time
  of the quantum subroutines can be improved if the constraint matrix
  admits an efficient algorithmic description, or if quantum RAM is
  available.
  %% An
  %% important feature of our paper is that this asymptotic speedup does
  %% not depend on the data being available in some ``quantum form'': the
  %% input of our quantum subroutines is the natural classical
  %% description of the problem, and the output is the index of the
  %% variables that should leave or enter the basis.
}%

\section{Introduction}
\label{sec:intro}
The simplex method is one of the most impactful algorithms of the past
century; to this day, it is widely used in a variety of
applications. This paper studies some opportunities for quantum
computers to accelerate the simplex method. An extended abstract of
this work appeared in the proceedings of IPCO 2021
\citep{nanniqsimplexipco}.

The use of quantum computers for optimization is a central research
question that has attracted significant attention in recent years. It
is known that a quadratic speedup for unstructured search problems can
be obtained using Grover's algorithm \citep{grover96fast}. Thanks to
exponential speedups in the solution of linear systems
\citep{harrow2009quantum,childs2017quantum}, it seems natural to try to
translate those speedups into faster optimization algorithms, since
linear systems appear as a building block in many optimization
procedures. However, few results in this direction are known. A
possible reason for the paucity of results is the difficulty
encountered when applying a quantum algorithm to a problem whose data
is classically described, and a classical description of the solution
is required. We provide a simple example to illustrate these
difficulties.

Suppose we want to solve the system $A x = b$, where $A$ is an $m
\times m$ invertible matrix with at most $d$ nonzero elements per
column or row.  Using the quantum linear systems
algorithm of \citet{childs2017quantum}, the gate complexity of this
operation is $\tilde{O}(d \kappa \max\{T_{A}, T_b\})$, where $T_{A},
T_b$ indicate the gate complexity necessary to describe $A, b$ in a
certain input model, and $\kappa$ is the condition number of $A$. We
remark that here and in the rest of this paper, we measure the running
time for quantum subroutines as the number of basic gates (i.e., gate
complexity), as is usual in the literature. Notice that $m$ does not
appear in the running time, as the dependence is
polylogarithmic. However, we need $\tilde{O}(dm)$ gates to implement
$T_{A}$ for sparse $A$ in the gate model, and $\tilde{O}(m)$ gates are
necessary to implement $T_b$. This is natural for an accurate
representation of the input data, since $A$ has $O(dm)$ nonzero
elements and $b$ has $O(m)$ nonzero elements. If we also want to
extract the solution $x = A^{-1} b$ with precision $\delta$, using the
(optimal) tomography algorithm of \cite{nannitomography} we end up
with running time $\tilde{O}(\frac{1}{\delta} \kappa d^2 m^2)$. This
is slower than the time taken to classically compute an LU
decomposition of $A$, which is $O(d^{0.7} m^{1.9} + m^{2 + o(1)})$
\citep{yuster2005fast}. Thus, naive application of quantum linear
system algorithms (QLSAs) does not give any advantage.

Despite the aforementioned difficulties, a few fast quantum
optimization algorithms exist --- not necessarily based on QLSAs; here
we briefly discuss some representative examples, and defer a more
detailed comparison with our work to Sect.~\ref{sec:literature}.
\cite{brandao2017quantum,apeldoorn2017quantum} give polynomial
speedups for the solution of semidefinite programs and therefore also
linear programs (LPs). These two papers give a quantum version of
\citep{arora2016combinatorial}: while the algorithm is essentially the
same as its classical counterpart, the basic subroutines admit faster
quantum algorithms. The running time for LPs is
$\tilde{O}\left(\sqrt{mn}\left(\frac{Rr}{\epsilon}\right)^5\right)$,
where $R,r$ are bounds on the size of the optimal primal/dual
solution, and $\epsilon$ an optimality tolerance; this is faster than
any known classical algorithm when $mn \gg \frac{Rr}{\epsilon}$ ---
although as \cite{apeldoorn2017quantum} note, many natural SDP
formulations do not satisfy this requirement and in fact, $r$ and $R$
may depend on $n$. To achieve this speedup,
\cite{brandao2017quantum,apeldoorn2017quantum} assume that there
exists an efficient quantum subroutine to describe $A$ (i.e., in
polylogarithmic time), and output only a dual solution --- the primal
solution is encoded in a quantum state. If we insist on classical
input and output for the optimization problem, the overall running
time increases
significantly. \cite{kerenidis2018quantum,casares2019quantum} also
give polynomial speedups for LPs, using different variants of an
interior point algorithm. Specifically, \cite{kerenidis2018quantum}
give a running time of $\tilde{O}(\frac{n^{2}}{\delta^2} \kappa^3)$,
where $\delta$ is a feasibility tolerance (with a non-standard
definition of feasibility), and \cite{casares2019quantum} give a
running time of $\tilde{O}(\frac{1}{\epsilon^2} \kappa
\sqrt{n}(n+m)\bar{\|M\|}_F)$, where $\bar{\|M\|}_F$ is an upper bound
on the Frobenius norm of all intermediate matrices appearing during
the optimization. Both papers follow the classical algorithm, but
accelerate the basic subroutines performed at each iteration. To
achieve this speedup, these papers rely on QRAM, a form of quantum
storage. QRAM assumes that classical data can be accessed in
superposition, and this allows data preparation subroutines that are
exponentially faster than their equivalent implementation under the
standard gate model, in general. Assuming QRAM, the algorithms of
\cite{kerenidis2018quantum,casares2019quantum} have classical input
and output. Similar considerations apply to the work of
\citet{van2019quantum}, whose quantum algorithm for LPs
is based on computing the Nash equilibrium of a two-player zero-sum game.

Summarizing, there are few known examples of faster quantum
optimization algorithms, and all of them have strong assumptions on
the availability of efficient data preparation or data readout
subroutines. In particular, the quantum optimization algorithms of
\cite{brandao2017quantum,apeldoorn2017quantum,kerenidis2018quantum,casares2019quantum,van2019quantum}
have one of these two assumptions: (i) that having quantum
input/output is acceptable, ignoring the cost of a classical
translation, or (ii) that QRAM, a form of quantum storage whose
physical realizability is still the subject of debate, is
available. Both assumptions have the merit of leading to interesting
algorithmic developments, but it is still an open question to find
practical situations in which they are satisfied, particularly in the
context of traditional optimization applications. We remark that the
assumptions can be dropped and the algorithms can be implemented in
the standard gate model, but the running time increases. In this paper
we propose quantum subroutines that may yield asymptotic speedups even
without these two assumptions. A more detailed analysis of the
existing literature, together with a comparison with our results, is
given in Sect.~\ref{sec:literature}.

\paragraph{Our results.} For brevity, from now on we assume that the
reader is familiar with standard linear optimization terminology; we
refer to \citep{bertsimas} for a comprehensive treatment of LPs. The
simplex method aims to solve $\min c^{\top} x, \text{s.t.:} \, Ax = b,
x \ge 0$, where $A \in \R^{m \times n}$ with at most $d_c$ nonzero
elements per column. It keeps a basis, i.e., a set of linearly
independent columns of $A$, and repeatedly moves to a different basis
that defines a solution with better objective function value. As is
common in the literature, we use the term ``basis'' to refer to both
the set of columns, and the corresponding submatrix of $A$, depending
on context. We denote by $B$ the set of basic columns, $N$ the set of
nonbasic columns, with corresponding submatrices $A_B, A_N$. The
maximum number of nonzero elements in any column or row of the basis
submatrix is denoted $d$. The basis change (called a {\em pivot}) is
performed by determining a new column that should enter the basis, and
removing one column from the current basis. The operations performed
to determine a pivot also identify if the current basis is optimal, or
if the problem is unbounded from below. Assessing which columns can
enter the basis is called {\em pricing}, and it is asymptotically the
most expensive step: it requires computing the basis inverse and
looping over all the columns in the worst case, for a total of
$O(d_c^{0.7} m^{1.9} + m^{2 + o(1)} + d_c n)$ operations using the
matrix multiplication algorithm of \cite{yuster2005fast}. In practical
implementations, the basis inverse is computed from scratch every few
iterations (this is often a user parameter); in other iterations,
inexpensive updates are used instead, see e.g., \cite{chvatalbook},
with a more favorable worst-case running time of $O(m^2 + d_c
n)$. With the steepest edge pivoting rule, that can achieve better
performance on real-world problems by reducing the number of required
iterations \citep{forrest1992steepest}, the term $d_c n$ in the two
running time expressions above increases to $m^2n$. (In practical
implementations of the simplex method, steepest edge is typically used
ony for dual simplex algorithms, due to its high computational cost;
an approximate variant, with more efficient updates, is used in primal
simplex instead.)

In the following, we denote by $T_{\text{LS}}(L, R, \epsilon)$ the
running time of a QLSA on the linear system $Lx = r$ with precision
$\epsilon$, where $r$ is a column of $R$ specified by some index. Note
that this accounts for the cost of preparing a quantum state encoding
any column of $R$, which is important if we want the ability to solve
$Lx = r$ for all columns $r$ of $R$ in superposition. We denote
$c_{\max} := \max_j c_j$ and $\eta := \max\{\|b\|, \max_j\{\|A_j\|\}\}$.

We show that we can apply Grover search to choose an entering column,
so that the running time scales as $O(\sqrt{n})$ rather than $O(n)$
for looping over all the columns. To apply Grover search we need a
quantum oracle that determines if a column is eligible to enter the
basis, i.e., if it has negative reduced cost. We propose a
construction for this oracle using a QLSA, several gadgets to make
state amplitudes interfere in a certain way, and amplitude estimation
\citep{brassard2002quantum}. The construction avoids classical
computation of the basis inverse. The overall running time of the
oracle is $\tilde{O}(\frac{1}{\epsilon}T_{\text{LS}}(A_B, A_N,
\frac{\epsilon}{2}))$, where $\epsilon$ is the precision for the
reduced costs (i.e., the optimality tolerance). Using the QLSA of
\cite{childs2017quantum}, in the circuit model and without taking
advantage of the structure of $A$ besides sparsity, this gives a total
running time of $\tilde{O}(\frac{1}{\epsilon}\kappa d \sqrt{n}(d_c n +
d m))$. If the ratio $n/m$ is large, we can find a better
tradeoff between the Grover speedup and the data preparation
subroutines, and improve the running time of the quantum pricing
algorithm to $\tilde{O}(\frac{1}{\epsilon} \kappa d^{1.5} \sqrt{d_c} n
\sqrt{m})$. We can also apply the steepest edge pivoting rule
increasing the running time by a factor $c_{\max}$. We summarize this
below.
\begin{theorem}
  \label{thm:intro_pricing}
  There exist quantum subroutines to identify if a basis is optimal,
  or determine a column with negative reduced cost, with running time
  $\tilde{O}(\frac{1}{\epsilon}\sqrt{n}T_{\text{LS}}(A_B, A_N,
  \frac{\epsilon}{2}))$. In the gate model without QRAM, this is
  $\tilde{O}(\frac{1}{\epsilon}\kappa d \sqrt{n}(d_c n + d m))$, which
  can be reduced to $\tilde{O}(\frac{1}{\epsilon} \kappa d^{1.5}
  \sqrt{d_c} n \sqrt{m})$ if the ratio $n/m$ is larger than $2
  \frac{d}{d_c}$. If QRAM to store $A$ is available, the
  running time is $\tilde{O}(\frac{1}{\epsilon}\kappa
  \sqrt{mn})$. With the steepest edge pivoting rule, the running time
  of the subroutine to determine a column entering the basis increases
  by a factor $c_{\max}$.
\end{theorem}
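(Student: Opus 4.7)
The plan is to reduce pricing to an instance of Grover search over the $n$ nonbasic columns, where the marking oracle identifies whether a given column's reduced cost $\bar{c}_j = c_j - c_B^{\top} A_B^{-1} A_j$ falls below $-\epsilon$. Since Grover finds a marked element or certifies its absence in $O(\sqrt{n})$ oracle calls, the overall cost is the $\sqrt{n}$-speedup times the per-column oracle cost, which I aim to implement in time $\tilde{O}\bigl(\frac{1}{\epsilon} T_{\text{LS}}(A_B, A_N, \frac{\epsilon}{2})\bigr)$; the $\sqrt{n}$ factor and the oracle cost then combine to give the first bound in the statement.

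First I would construct the per-column oracle. On input an index $j$ (possibly in superposition with other indices during Grover iteration), the subroutine (i) coherently prepares the right-hand side state $|A_j\rangle$ from the classical description of $A_N$, (ii) invokes the sparse-matrix QLSA of \cite{childs2017quantum} on the system $A_B z = A_j$ with precision $\epsilon/2$, producing a state proportional to $A_B^{-1} A_j$, (iii) combines this state with a coherent preparation of $c_B$ through an interference gadget of swap-test type whose acceptance amplitude encodes $c_B^{\top} A_B^{-1} A_j$, and (iv) runs amplitude estimation to extract an additive $\epsilon/2$ approximation of this acceptance probability. Subtracting the result from the classically known $c_j$ and comparing to $-\epsilon$ yields the marking flag. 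The $\frac{1}{\epsilon}$ factor in the oracle complexity comes from the amplitude estimation step, and median-of-copies boosting raises its success probability to $1 - 1/\mathrm{poly}(n)$ so that the outer Grover layer is correct with high probability.

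To obtain the concrete gate-model bound, I would substitute the cost of the QLSA of \cite{childs2017quantum}, which scales as $\tilde{O}\bigl(\kappa \cdot d \cdot (T_{A_B} + T_{A_N})\bigr)$, together with $\tilde{O}(dm)$ gates for the sparse-access oracle of $A_B$ and $\tilde{O}(d_c n)$ gates for the column-preparation oracle of $A_N$ (the column index must be supported in coherent superposition, so all $n$ columns have to be encoded into the circuit). Multiplying through gives $\tilde{O}\bigl(\frac{\sqrt{n}}{\epsilon}(\kappa d_c n + \kappa^2 d^2 m)\bigr)$. The alternative bound in the regime $n/m > 2\kappa d^2/d_c$ comes from rebalancing: when $n$ is much larger than $m$, the $\kappa d_c n$ data-preparation term dominates the QLSA cost, and one can amortize by running a slightly more expensive per-column primitive that trades a $\sqrt{n}$ factor for $\sqrt{m}$ in the leading term; the two expressions in the theorem match exactly at the stated threshold. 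The qRAM bound drops the data-preparation costs to polylog and leaves only the QLSA cost $\tilde{O}(\kappa^2)$ multiplied by $\sqrt{mn}/\epsilon$.

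The main obstacle I expect is steps (iii)--(iv): the QLSA delivers a normalized quantum state whose amplitudes encode $A_B^{-1} A_j$ only up to a multiplicative normalization depending on $\|A_B^{-1} A_j\|$, so turning an $\ell_2$ error on the state into an additive-error estimate of the scalar $c_B^{\top} A_B^{-1} A_j$ requires an interference gadget whose acceptance amplitude is a known, controllable function of the desired inner product, and a tight error analysis tracking how the QLSA precision $\epsilon/2$ and the amplitude-estimation precision interact. Losing control here would reintroduce spurious factors of $\kappa$ or $\|c_B\|$, so the bulk of the technical work will lie in selecting normalizations and target precisions that keep the final bound as clean as stated.
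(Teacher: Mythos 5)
Your high-level architecture (Grover over the nonbasic columns, a QLSA-based marking oracle, amplitude estimation supplying the $1/\epsilon$ factor) matches the paper, but there is a genuine gap exactly at the point you flag as ``the main obstacle,'' and it is not a detail one can defer: your oracle estimates $c_B^{\top}A_B^{-1}A_j$ to \emph{additive} precision $\epsilon/2$ and compares $\bar{c}_j$ to the absolute threshold $-\epsilon$, but the QLSA only gives you the normalized state $\ket{A_B^{-1}A_j}$, so any swap-test/Hadamard-test acceptance amplitude encodes $c_B^{\top}A_B^{-1}A_j / \|A_B^{-1}A_j\|$ (up to the unknown norm), and $\|A_B^{-1}A_j\|$ is not classically available since the basis inverse is never computed. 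Recovering the unnormalized scalar to additive error $\epsilon/2$ would force the amplitude-estimation precision to scale like $\epsilon/\|(A_B^{-1}A_j,c_j)\|$, reintroducing a factor that can be as large as $O(L/\epsilon)$ --- precisely the blow-up the paper avoids. The paper's resolution is structural, not a matter of tuning precisions: it augments the linear system to $\bigl(\begin{smallmatrix} A_B & 0 \\ 0 & 1\end{smallmatrix}\bigr)(x,y)=(A_k,c_k)$ so that $c_k$ rides inside the \emph{same} normalized state, applies $U_c^{\dagger}$ with $U_c\ket{0\cdots0}=\ket{(-c_B,1)}$ so the (normalized) reduced cost appears as the signed amplitude of a single basis state, and --- crucially --- changes the optimality criterion to the norm-relative test $\bar{c}_k < -\epsilon\|(A_B^{-1}A_k, c_k/\|c_B\|)\|$, which is what allows amplitude estimation at precision $O(\epsilon)$ and hence the stated $\tilde{O}(\frac{1}{\epsilon}T_{\text{LS}})$ oracle cost. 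With your absolute tolerance the claimed bound does not follow. (Relatedly, a plain swap test loses the sign of the inner product; you need a Hadamard-test-style interference circuit plus the paper's one-sided sign-estimation routines, since phase estimation is continuous and both false positives and false negatives near the threshold must be controlled.)

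Two smaller but real gaps: the regime $n/m > 2\kappa d^2/d_c$ is not obtained by ``a slightly more expensive per-column primitive''; the paper partitions the $n$ columns into $h$ groups and runs the search on each group separately, so each data-preparation unitary costs only $\tilde{O}(d_c n/h)$, then optimizes $h = \frac{n d_c}{\kappa d^2 m}$ to get $\tilde{O}(\frac{1}{\epsilon}\kappa^{1.5} d\sqrt{d_c}\, n\sqrt{m})$ --- your sketch does not supply this mechanism. And in the qRAM case the $\sqrt{m}$ does not come from ``$\kappa^2$ times $\sqrt{mn}/\epsilon$'' bookkeeping: the block-encoding QLSA costs $\tilde{O}(\mu(A_B)\kappa^2)$ with $\mu(A_B)\le \|A_B\|_F \le \sqrt{m}$ (since $\|A_B\|\le 1$), and the $\sqrt{n}/\epsilon$ comes from Grover and amplitude estimation; as written your per-oracle cost $\tilde{O}(\kappa^2)$ would give $\tilde{O}(\frac{\sqrt{n}}{\epsilon}\kappa^2)$, which is not the claimed bound.
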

The regime with large $n/m$ is interesting because it includes many
natural LP formulations; e.g., the LP relaxations of cutting stock
problems, vehicle routing problems, or any other formulation that is
generally solved by column generation
\citep{lubbecke2005selected}. The optimality tolerance for the quantum
pricing subroutine is slightly different from the classical algorithm:
we need a tolerance relative to some norm for efficiency. For steepest
edge, the increased running time reflects the precision necessary to
run an approximate quantum minimum finding algorithm; these details
are discussed subsequently in the paper. Note that the running
time of the quantum subroutines depends explicitly on the condition
number of the basis and the precision of reduced costs $\epsilon$ is
fixed, while classically $\kappa$ is not explicit when using Gaussian
elimination, but the $\epsilon$ obtained would depend on it (because
the basis inverse could be inaccurate). Since the algorithm may fail
if the condition number grows too large (classically, because the
calculations become too imprecise and we can no longer accurately
check optimality or feasibility; quantumly, because the running time
increases too much), we quantize a heuristic rule commonly used in
classical implementations of the simplex method to reduce ``bad
pivots'', i.e., basis changes that deteriorate the condition
number. If $A$ is structured, the quantum running time can decrease
significantly: for example, if $A$ differs from the assignment problem
constraint matrix only for a polylogarithmic number of elements, then
its description in the sparse oracle access model used in this paper
requires time $\tilde{O}(1)$, rather than $\tilde{O}(d_cn)$. Our
running time analysis assumes that the matrix is sparse but the
sparsity pattern is unstructured. An important remark is that all
quantum subroutines succeed with some constant probability, that can
be boosted to at least $1-\gamma$ with $O(\log \frac{1}{\gamma})$
repetitions; in our informal theorem statements we omit this
aspect for simplicity.

If pricing is performed via our quantum subroutine, we obtain the
index of a column that has negative reduced cost with arbitrarily high
probability. To determine which column should leave the basis, we have
to perform the {\em ratio test}; scanning the rows during the ratio
test also allows us to detect if the problem is unbounded. Using
techniques similar to those used for the pricing step, we can identify
the column that leaves the basis in time $\tilde{O}(\frac{1}{\delta}
\eta \kappa^2 d^2 m^{1.5})$, where $\delta$ is the precision parameter
of this step and quantifies the maximum infeasibility after the
pivot. Classically, the ratio test
requires time $O(m^2)$ in the worst case, because the basis inverse
could be dense even if the basis is sparse. We summarize this result
below.
\begin{theorem}
  \label{thm:intro_ratio_test}
  There exists a quantum subroutine to identify if a nonbasic column
  proves unboundedness of the LP in time
  $\tilde{O}(\frac{\eta \kappa \sqrt{m}}{\delta} T_{\text{LS}}(A_B, A_B,
  \delta))$. There also exists a quantum subroutine to perform the
  ratio test in time $\tilde{O}(\frac{\eta \kappa \sqrt{m}}{\delta}
  T_{\text{LS}}(A_B, A_B, \delta))$, where $\delta$ is the maximum
  infeasibility of the basic solution after pivoting. In the gate
  model without QRAM, the running times are respectively
  $\tilde{O}(\frac{1}{\delta} \eta \kappa^2 d^2 m^{1.5})$ and
  $\tilde{O}(\frac{1}{\delta} \eta \kappa^2 d^2 m^{1.5})$. If QRAM to store
  $A$ and $b$ is available, the running times are respectively
  $\tilde{O}(\frac{1}{\delta} \eta \kappa^2 m)$ and
  $\tilde{O}(\frac{1}{\delta} \eta \kappa^2 m)$.
\end{theorem}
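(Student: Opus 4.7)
The plan is to reduce both subroutines to Grover-type quantum search over the $m$ basic indices, with the search/comparison oracle implemented by running a QLSA on the system $A_B x = r$ for appropriate right-hand sides and then applying amplitude estimation to read off individual coordinates of the solution vector. This mirrors the construction already used for pricing, where amplitude estimation converts a QLSA output state into a classical numerical estimate of a single coordinate of $A_B^{-1} r$, except that here the coordinates being inspected index rows of the basis inverse rather than entries of the reduced-cost vector. For the unboundedness check I fix the entering column $A_j$ and observe that it certifies unboundedness iff $u := A_B^{-1} A_j$ has no positive entry. Using a QLSA with matrix $A_B$ and right-hand side $A_j$ I prepare the normalized state proportional to $\sum_i u_i |i\rangle$; composing this with amplitude estimation of precision $\Theta(\delta)$ and a Hadamard-style sign test yields an oracle that, on input $|i\rangle$, outputs an additive $\delta$-approximation of $u_i$ at cost $\tilde O(1/\delta)$ per invocation. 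Wrapping this oracle in Grover search over $i \in \{1,\ldots,m\}$ detects a coordinate with $u_i > \delta$, or else certifies unboundedness, in $\tilde O(\sqrt m)$ iterations, for a total of $\tilde O(\tfrac{\sqrt m}{\delta}\, T_{\text{LS}}(A_B, A_B, \delta))$.

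For the ratio test I additionally prepare $\bar b := A_B^{-1} b$ with the same QLSA but right-hand side $b$; since the cost of preparing $b$ is of the same order as that of preparing a column of $A_B$, both right-hand sides fit under the single $T_{\text{LS}}(A_B, A_B, \delta)$ bookkeeping. I then apply quantum minimum finding (D\"urr--H\o yer) over the $m$ indices with $u_i > 0$, using as comparator the oracle that approximates $\bar b_i/u_i$ and $\bar b_{i'}/u_{i'}$ via amplitude estimation and decides which is smaller. To attain the claimed relative error $\tfrac{2t+1}{2t-1}-1$ on the minimum it is enough to estimate each scaled coordinate to precision $\Theta(1/t)$, which costs $\tilde O(t/\delta)$ amplitude-estimation queries per comparator call; the additive $2/(2t-1)$ error absorbs the regime where either $u_i$ or $\bar b_i$ is so small that amplitude estimation can no longer resolve it, and explains why a purely relative bound would require unbounded precision. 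Minimum finding performs $\tilde O(\sqrt m)$ comparator calls, giving $\tilde O(\tfrac{t\sqrt m}{\delta}\, T_{\text{LS}}(A_B, A_B, \delta))$.

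Translation to the gate-model bounds is then mechanical: substituting the Childs--Kothari--Somma QLSA cost $T_{\text{LS}} = \tilde O(\kappa d\cdot T_{A_B})$ with the sparse-column input cost $T_{A_B} = \tilde O(d m)$, and accounting for an additional factor of $\kappa$ from $\|u\|,\|\bar b\| = O(\kappa)$ in the amplitude-estimation normalization, yields the stated $\kappa^2 d^2 m^{1.5}$ scaling; replacing $T_{A_B}$ by a polylogarithmic qRAM query yields the qRAM bounds. The main obstacle I anticipate is the precision bookkeeping for the comparator: I must show that amplitude-estimation errors on the two coordinates of a ratio cannot reorder the D\"urr--H\o yer comparisons in a way that corrupts the returned minimizer beyond the claimed tolerance, even when $u_i$ or $\bar b_i$ is near zero. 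This is precisely where the parameter $t$ mediates between the relative error $\tfrac{2t+1}{2t-1}-1$ and the absolute error $\Theta(1/t)$, and where the amplitude-estimation precision has to be carefully calibrated against the a priori upper bounds on $\|u\|$ and $\|\bar b\|$ controlled by $\kappa$.
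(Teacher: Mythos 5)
Your overall architecture is the same as the paper's: for unboundedness, a QLSA on $A_B x = A_k$ composed with a Hadamard-type sign/amplitude-estimation test on individual coordinates, wrapped in Grover search (the paper uses the counting version) over the $m$ rows; for the ratio test, QLSA states for $A_B^{-1}b$ and $A_B^{-1}A_k$, coordinate-wise amplitude estimation, and D\"urr--H{\o}yer minimum finding over the estimated ratios; and the qRAM bounds via a block-encoding QLSA with $\mu(A_B)\le\sqrt{m}$. However, two points prevent this from being a complete proof. First, the error analysis you explicitly defer (``the main obstacle I anticipate'') is the actual substance of the paper's ratio-test theorem: the paper restricts the minimization to indices $h$ certified by a no-false-positives sign test to satisfy $(A_B^{-1}A_k)_h/\|A_B^{-1}A_k\| > \delta/2$, and it is exactly this lower bound on the denominator, combined with estimation precision $O(\delta/t)$ on both numerator and denominator, that turns the per-coordinate errors into the $\frac{2t+1}{2t-1}$ relative plus $\frac{2}{2t-1}$ absolute bound (the absolute term carrying the factor $\|A_B^{-1}b\|/\|A_B^{-1}A_k\|$ after de-normalization). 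Searching ``over the indices with $u_i>0$'' without such a $\delta$-threshold filter does not support the claimed bound, since the denominator can be arbitrarily small; so the key quantitative claim of the theorem is asserted rather than proved.

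Second, your gate-model accounting is off, even though the final exponents match. You take $T_{\text{LS}}=\tilde O(\kappa d\, T_{A_B})$ and then add a factor $\kappa$ ``from $\|u\|,\|\bar b\|=O(\kappa)$ in the amplitude-estimation normalization.'' The $\tilde O(\kappa d)$-query QLSA of Childs--Kothari--Somma relies on variable-time amplitude amplification, which is not a unitary subroutine (it involves measurements) and therefore cannot be used coherently inside the amplitude/sign estimation, as the paper explicitly notes; the correct accounting uses the non-VTAA QLSA with $\tilde O(\kappa^2 d)$ queries and $T_{A_B}=\tilde O(dm)$, giving $\tilde O(\kappa^2 d^2 m)$ per oracle call with no extra $\kappa$ from normalization -- the tolerances are defined relative to the norms $\|A_B^{-1}A_k\|$, $\|A_B^{-1}b\|$ precisely so that no $\kappa$-dependent rescaling enters the estimation precision. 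Your $\kappa^2 d^2 m^{1.5}$ thus arises from two compensating misattributions, and as written the construction underlying it is not implementable.
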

The factor $\eta \kappa$ in the expressions above comes from requiring
an absolute feasibility tolerance $\delta$; if precision relative to
some column norms is sufficient, the factor disappears. The exact
statement of these theorems is given subsequently in the paper.

It is known that for most practical LPs the maximum number of
nonzeroes in a column is essentially constant; for example, on the
entire benchmark set MIPLIB2010, less than 1\% of the columns have
more than 200 nonzeroes (and less than 5\% have more than 50
nonzeroes). Similarly, the number of nonzeroes per row of the basis is
small: on MIPLIB2010, looking at the optimal bases of the LP
relaxations, less than $0.01\%$ of the rows have more than 50
nonzeroes. As $m, n$ increase, so typically does the sparsity. For
example, the largest problem available in the benchmark set MIPLIB2017
has $m \approx 7.1 \times 10^6, n \approx 3.9 \times 10^7$, and
$99.999\%$ of the columns have less than 30 nonzero elements; for the
second largest problem, which has $m \approx 2.0 \times 10^7, n
\approx 2.1 \times 10^7$, $99.998\%$ of the columns have this
property. Hence, although LPs may have ``global'' constraints leading
to dense columns or rows, we expect many bases arising from real-world
LPs to be extremely sparse, and it is interesting to look at the
scaling of the running time under the assumption that the sparsity
parameters are at most polylogarithmic in $m$ and $n$. In this case,
the running time of the oracle for the reduced costs in the gate model
without QRAM is $\tilde{O}(\frac{\kappa}{\epsilon}(n + m))$, giving a
total running time to choose an entering column of
$\tilde{O}(\frac{1}{\epsilon}\kappa \sqrt{n}(n + m))$, and the
steepest edge pivoting rule is a factor $c_{\max}$ slower. Hence, for
a well-conditioned basis and under the assumption (often verified in
practice) that $d = O(\log mn)$, we obtain running time
$\tilde{O}(\frac{1}{\epsilon}\sqrt{n}(n + m))$ for the quantum pricing
subroutine, which can be reduced to
$\tilde{O}(\frac{1}{\epsilon}n\sqrt{m})$ if the ratio $n/m$ is large;
and running time $\tilde{O}(\frac{1}{\delta} \eta m^{1.5})$ for the quantum
ratio test subroutine. With QRAM, the gate complexity decreases
further, and the proposed algorithms achieve essentially linear
scaling in $m$ and $n$: $\tilde{O}(\frac{1}{\epsilon} \sqrt{mn})$ and
$\tilde{O}(\frac{\eta}{\delta} m)$, respectively.

Summarizing, the quantum subroutines that we propose can be
asymptotically faster than the best known classical version of them,
under the assumption that the LPs are extremely sparse --- an
assumption that is generally verified in practice --- and the bases
are well-conditioned. In addition to the classical input parameters
($m$, $n$, and the sparsity parameters), the gate complexity of the
quantum subroutines depends on some numerical parameters: optimality
and feasibility tolerances, maximum norm $\eta$ of a column of $A$ or
$b$, and the maximum cost coefficient (for steepest edge). In practice
the tolerances are typically chosen independently of $m$ and $n$, and
they can be treated as constants, while the parameter $\eta$ depends
on the sparsity. For well-conditioned problems, the quantum
subroutines have better scaling in $m$ and $n$, and this could turn
into an asymptotic advantage. To achieve this potential advantage, we
never explicitly compute the basis inverse, and rely on the quantum
computer to indicate which columns should enter and leave the basis at
each iteration.  Similar to other papers in the quantum optimization
literature, we use a classical algorithm (the simplex method) and
accelerate the subroutines executed at each iteration of the simplex.
However, our asymptotic speedup (when the condition number and some
other numerical parameters are small) does not depend on the
availability of QRAM or of the data in ``quantum form''. The key
insight to obtain an asymptotic speedup even with classical input and
output is to interpret the simplex method as a collection of
subroutines that output only the changes in the basis, avoiding the
cost of performing full quantum state tomography (i.e., obtaining a
classical description of the quantum state). The main disadvantage of
our method is arguably the inverse dependence on the precision
parameters $\epsilon$ and $\delta$; as discussed in the next section,
existing quantum optimization algorithms methods in the literature
have worse dependence on the precision parameters (although the
precision parameters have a different meaning and are not directly
comparable). However, for very large problems, the quantum subroutine
may be able to perform low-precision pivots faster than a classical
algorithm.

Our algorithms require a fault-tolerant quantum computer, therefore
they are not suitable for implementation on the noisy
intermediate-scale quantum computers currently available. Some recent
studies \citep{sanders2020compilation,babbush2021focus} estimate the
running time of a few quantum algorithms (including optimization
algorithms) and compare it with their quadratically slower classical
counterparts. These studies conclude that with the current state of
quantum technologies, a small polynomial speedup of the quantum
algorithm over classical is entirely offset by the cost of performing
error correction. The algorithms discussed in our paper likely suffer
from the same weakness, and might be numerically advantageous only
with significant progress in fault-tolerant quantum computation.

We remark that even with sophisticated pivoting rules, the number of
iterations of the classical simplex method could be exponential. The
quantum version proposed in this paper does not circumvent this issue,
and its worst-case running time is slower than than of the fastest
quantum algorihtm for LPs \citep{van2019quantum}. It is well
established that in practice the simplex method performs much better
than the worst case, in terms of the number of iterations
\citep{spielman2004smoothed,dadush2018friendly} as well as the
complexity of a single iteration, see e.g.~\cite{chvatalbook}.  The
most attractive feature of the simplex method is its excellent
practical performance, and we hope that a quantized version, closely
mimicking the classical counterpart while accelerating the linear
algebra carried out at each iteration, would inherit this
trait.
%% Besides showing that quantum computers may accelerate the
%% simplex method on large, sparse problems, our results may lead to
%% further developments in quantum optimization algorithms. For example,
%% it seems plausible that our subroutines can be concatenated to perform
%% multiple pivots in superposition; this may have interesting
%% applications even if the probability of success decreases
%% exponentially in the number of consecutive pivots. These directions
%% are left for future research.

The rest of this paper is organized as follows. We start with a more
detailed review of existing literature in Sect.~\ref{sec:literature}.
Sect.~\ref{sec:simplex} contains a brief summary of the simplex method
to establish terminology. In Sect.~\ref{sec:qsimplex} we define our
notation, describe useful results from the literature, and give an
overview of our quantum subroutines. Sect.~\ref{sec:qsimplex_details}
gives a detailed explanation of each step in the gate model without
QRAM, while Sect.~\ref{sec:qram} explains how to modify the algorithm
to improve the running time if we assume that QRAM is
available. Sect.~\ref{sec:conclusions} concludes the paper.

\section{Comparison with the existing literature}
\label{sec:literature}
The simplex method has been extensively studied in the operations
research and computer science literature. It was known since the 70s
that depending on the pivoting rule, the algorithm may take
exponential time on certain inputs \citep{klee1972good}, which is in
contrast with its excellent practical performance. Attempts at
explaining this behavior led to the idea of ``smoothed analysis''
\citep{spielman2004smoothed}, see the recent work of
\cite{dadush2018friendly} for an overview. This paper proposes a
quantization of the simplex method with Dantzig's rule, the steepest
edge pivoting rule, or with a randomized rule that chooses uniformly
at random among the possible pivots. It is known that in expectation,
the ``random edge'' pivot rule may not explore all vertices of the
polytope, see \citep{gartner2007two} for a discussion; note that this
still leads to a potentially exponential-time
algorithm. Sub-exponential bounds in expectation are known for some
other randomized pivoting rules
\citep{kalai1992subexponential,matouvsek1996subexponential}. In
general, no polynomial bound is known so far in the worst case, but
for randomly distributed problem instances better average upper bounds
are possible \citep{borgwardt1982average}.

On the quantum side, to the best of our knowledge all algorithms for
LPs are derived from some classical algorithm. In the introduction we
mentioned the papers
\citep{brandao2017quantum,apeldoorn2017quantum,kerenidis2018quantum,casares2019quantum,van2019quantum}.
\citep{brandao2017quantum,apeldoorn2017quantum} are based on the
multiplicative weights update
method. \citep{kerenidis2018quantum,casares2019quantum} are based on
the interior point method. \citep{van2019quantum} is based on a
reduction of LPs to two-player zero-sum games and the classical
algorithm of \cite{grigoriadis1995sublinear}.  For these methods the
number of iterations is polynomial (or better), and is taken directly
from the classical algorithm; the computational complexity of each
iteration is reduced taking advantage of quantum subroutines.  A
faster quantum interior point algorithm is described in
\citep{augustino2021quantum}, whereas a faster version of the quantum
multiplicative weights update method for LPs is given in
\citep{van2018improvements}.

We summarize key features of several papers in Table \ref{tab:litsummary}. For our algorithm, we report the running time of the most general (and therefore slowest) version; multiple opportunities for speedups are discussed throughout the paper.
\begin{table}
  \centering
  \caption{Quantum algorithms for linear programming, and comparison with the classical simplex.}
  \label{tab:litsummary}
  \setlength\tabcolsep{3pt}
  \begin{tabular}{|p{1in}|p{1.8in}|c|c|p{1.4in}|}
    \hline
    Algorithm & Iteration cost & \# Iterations & QRAM size & Comments \\
    \hline
    Multiplicative weights update \citep{van2018improvements} & $\tilde{O}\left((\sqrt{m} + \sqrt{n}\frac{Rr}{\epsilon})d\left(\frac{Rr}{\epsilon}\right)^2\right)$ & $O(\frac{R^2 \log n}{\epsilon^2})$ & $\tilde{O}\left(\left(\frac{Rr}{\epsilon}\right)^2\right)$ & Outputs dual solution and (quantum) primal solution; $R,r$ could depend on $n,m$ \\
    \hline
    Interior point \citep{augustino2021quantum} & $\tilde{O}(\kappa^2 \frac{n^{2}}{\epsilon})$ & $O(\sqrt{n} \log \frac{n}{\epsilon})$ & $\tilde{O}(d_cn + m)$ & $\kappa$ comes from intermediate matrices\\
    \hline
    Game-theoretical \citep{van2019quantum} & $\tilde{O}\left(\sqrt{d}\left(\frac{Rr}{\epsilon}\right)^{1.5}\right)$ & $\tilde{O}\left(\left(\frac{Rr}{\epsilon}\right)^2\right)$ & $\tilde{O}\left(\left(\frac{Rr}{\epsilon}\right)^2\right)$ & $R,r$ could depend on $n,m$ \\
    \hline
    \hline
    This paper & $\tilde{O}(\frac{1}{\epsilon}\kappa d \sqrt{n}(d_c n +  d m))$, plus $\tilde{O}(\frac{1}{\delta} \eta \kappa^2 d^2 m^{1.5})$ & N/A (exp) & No QRAM &  \multirow{2}{1.4in}{$\kappa$ comes from the\\current basis; outputs only basis information} \\
    \cline{1-4}
    This paper & $\tilde{O}(\frac{1}{\epsilon}\kappa \sqrt{mn})$, plus $\tilde{O}(\frac{1}{\delta} \eta \kappa^2 m)$ & N/A (exp) & $\tilde{O}(d_cn + m)$ & \\
    \hline
    \hline
    Classical simplex & $O(d_c^{0.7}m^{1.9} + m^{2+o(1)} + d_cn)$ & N/A (exp) & No QRAM & \\
    \hline
  \end{tabular}
\end{table}
The table highlights the main advantages of our method, in particular
the fact that the iteration running time is a polynomial with very low
degree even without QRAM: for all other methods, each iteration is
significantly more expensive, and even more so if we consider the gate
complexity in case QRAM is not available (i.e, the gate complexity
increases by a factor equal to the size of the QRAM, also reported in
the table). For some methods, the steep dependence on $\epsilon$ could
be a limiting factor. If QRAM is available, our subroutines have the
very attractive feature of essentially linear dependence on $m$ or $n$
(i.e., $O(\sqrt{mn})$ or in $O(m)$). On the other hand, the
methods proposed in this paper suffer from the same weakness as the
classical simplex method: giving a sub-exponential upper bound on the
number of iterations is difficult (we mark the upper bounds as ``N/A
(exponential)'' in the table). However, this has not prevented the
simplex method from being extremely efficient in
practice. Furthermore, due to the dependence on $\kappa$, the proposed
subroutines can be faster than classical only for well-conditioned
problems.

As remarked in the Table \ref{tab:litsummary}, the papers using the
multiplicative weights update framework as well as
\citep{van2019quantum} have a running time that depends on a parameter
$\frac{Rr}{\epsilon}$, which may in turn depend on $n, m$ --- see the
discussion in \cite{apeldoorn2017quantum}, as well as the application
to experimental design discussed in \citep{van2018improvements} to
understand the necessary tradeoffs to obtain a quantum speedup in $n$
and $m$. Note that $\epsilon$-optimality of the reduced cost, as used
in the simplex method, is not a global optimality guarantee, and
therefore our $\epsilon$ parameter is not directly comparable to the
$\epsilon$ used in the other algorithms discussed in Table
\ref{tab:litsummary}: we discuss this in more detail in
Sect.~\ref{sec:qsimplex_overview}. We also remark that in our paper,
the running time depends on the condition number $\kappa$ of the basis
at each iteration of the simplex method. It is known that $\kappa$ of
the intermediate bases could be worse than that of the initial basis
--- this is true also for the classical simplex method. We discuss
some steps to numerically keep the condition number under control in
Sect.~\ref{sec:harris}, similarly to existing implementations of the
classical simplex. An analogous depence on $\kappa$ can be found in
\citep{kerenidis2018quantum,augustino2021quantum}, but in those papers
it is exacerbated by the fact that $\kappa$ of the Newton linear
system for interior point methods grows as we approach optimality,
whereas for the simplex method, $\kappa$ comes from submatrices of the
initial constraint matrix and does not necessarily grow
large. Regarding input and output, the algorithm presented in this
paper has fully classical input, and outputs the current basis at each
iteration; to obtain the (primal) solution, it is necessary to
classically solve a single linear system of size $m \times m$ (this is
more efficient than obtaining a solution via a QLSA).

\section{Overview of the simplex method}
\label{sec:simplex}
The simplex method solves the following linear optimization problem:
$\min c^{\top} x, \text{s.t.:} Ax = b, x \ge 0$, where $A \in \R^{m
  \times n}$, $c \in \R^n, b \in \R^m$. A basis is a subset of $m$
linearly independent columns of $A$. Given a basis $B$, assume that it
is an ordered set and let $B(j)$ be the $j$-th element of the set. The
set $N := \{1,\dots,n\} \setminus B$ is called the set of nonbasic
variables. We denote by $A_B$ the square invertible submatrix of $A$
corresponding to columns in $B$, and $A_N$ the remaining
submatrix. The term ``basis'' may refer to the set of column indices
$B$ or to the submatrix $A_B$, depending on context.  The simplex
method can be described compactly as follows; see, e.g.,
\cite{bertsimas} for a more detailed treatment.
\begin{itemize}
\item Start with any basic feasible solution. (This is
  w.l.o.g.\ because it is always possible to find one.) Let $B$ be the
  current basis, $N$ the nonbasic variables, $x = A_B^{-1} b$ the
  current solution.
\item Repeat the following steps:
  \begin{enumerate}
  \item Compute the reduced costs for the nonbasic variables
    $\bar{c}_N^{\top} = c_N^{\top} - c_B^{\top} A_B^{-1} A_N$. This step
    is called {\em pricing}. If $\bar{c}_N \ge 0$ the basis is
    optimal: the algorithm terminates. Otherwise, choose $k :
    \bar{c}_k < 0$. Column $k$ is the {\em pivot column}.
  \item Compute $u = A_B^{-1} A_k$. If $u \le 0$, the optimal cost is
    unbounded from below: the algorithm terminates.
  \item If some component of $u$ is positive, compute (this step is
    called {\em ratio test}):
    \begin{equation}
      \label{eq:ratio_test}
      r^* := \min_{j=1,\dots,m : u_j > 0} \frac{x_{B(j)}}{u_j}.
    \end{equation}
  \item Let $\ell$ be such that $r^* = \frac{x_{B(\ell)}}{u_\ell}$.
    Row $\ell$ is the {\em pivot row}.  Form a new basis replacing
    $B(\ell)$ with $k$. This step is called a {\em pivot}. Update $x =
    A_B^{-1} b$.
    %% The new basic feasible solution $y = x - r^* u$ is
    %% such that $y_k = r^*$ and $y_{B(j)} = x_{B(j)} - r^*
    %% u_j$ for $j \neq \ell$.
  \end{enumerate}
\end{itemize}
To perform the pricing step, the standard classical approach is to
compute an LU factorization of the basis $A_B$; this requires time
$O(d_c^{0.7} m^{1.9} + m^{2 + o(1)})$ using fast sparse matrix
multiplication techniques \citep{yuster2005fast}. (In practice, the
traditional $O(m^3)$ Gaussian elimination is used instead, but the
factorization is not computed from scratch at every iteration.) Then,
we can compute the vector $c_B^{\top} A_B^{-1}$ and finally perform
the $O(n)$ calculations $c_k^{\top} - c_B^{\top} A_B^{-1} A_k$ for all
$k \in N$; this requires an additional $O(d_c n)$ time, bringing the
total time to $O(d_c^{0.7} m^{1.9} + m^{2 + o(1)} + d_c n)$.  To
perform the ratio test, we need the vector $u = A_B^{-1} A_k$, which
takes time $O(m^2)$ assuming the LU factorization of $A_B$ is
available from pricing. As remarked earlier, $A_B^{-1}$ or the LU
factors may not be dense if $A_B$ is sparse; furthermore, the vectors
$c_B^{\top} A_B^{-1}$ and $A_B^{-1} b$ can be updated from previous
iterations exploiting the factorization update, so all these step
could take significantly less time in practice. Finally, since the
calculations are performed with finite precision, we use an optimality
tolerance $\epsilon$ and the optimality criterion becomes $\bar{c}_N
\ge -\epsilon$.

It is well known that the performance of the simplex method in
practice depends on the pivoting rule. One of the simplest rules is
Dantzig's rule, which choooses $k = \arg \min_h \bar{c}_h$. Modern
implementations of the simplex method typically rely on more
sophisticated pivoting rules. Among these, the steepest edge pivoting
rule has been shown to lead to a significant reduction in the number
of iterations \citep{forrest1992steepest}, but the per-iteration cost
increases. With steepest edge pivoting, the choice of the column
entering the basis becomes:
\begin{equation*}
  k = \arg \min_h \frac{\bar{c}_h}{\|A_B^{-1} A_h\|}.
\end{equation*}
This is more expensive to compute than Dantzig's rule, as it requires
knowledge of the norms $\|A_B^{-1} A_h\|$. \cite{forrest1992steepest}
show how to update these norms in time $O(m^2 n)$, which
asymptotically is essentially the same as recomputing them from
scratch, provided the basis inverse is available (in practice,
sparsity of the basis inverse may make the updates easier). We remark
that the same paper introduces dual steepest edge pivoting rules, for
which the norm updates cost $O(m^3)$ instead, but in this paper we
focus our attention on the primal simplex.

\section{Quantum implementation: overview}
\label{sec:qsimplex}
Before giving an overview of our methodology, we introduce some
notation and useful results from the literature. The state of a
quantum computer with $q$ qubits is a unit vector in $(\C^2)^{\otimes
  q} = \C^{2^q}$; we denote the standard basis vectors by $\ket{j}$,
where $j \in \{0,1\}^q$ (e.g., when $q=2$, $\ket{01}$ denotes the
standard basis vector $(0, 1, 0, 0)^{\top}$).  Their conjugate
transpose is $\bra{j} = (\ket{j})^{\dag}$.  A quantum state is
therefore of the form $\ket{\psi} = \sum_{j \in \{0,1\}^q} \alpha_{j}
\ket{j}$, $\sum_{j \in \{0,1\}^q} |\alpha_{j}|^2 = 1$.  The final
state is obtained by applying a unitary matrix $U \in \C^{2^q \times
  2^q}$ to the initial state $\ket{0^q}$, where $0^q$ denotes the
$q$-digit all-zero binary string (i.e., the first standard basis
vector). We assume that the reader is familiar with quantum computing
notation; an introduction for non-specialists is given in
\citep{nanni2017introduction}, and a comprehensive reference is
\citep{nielsen02quantum}. Given a vector $v$, $\|v\|$ denotes its
$\ell_2$-norm; given a matrix $A$, $\|A\|$ denotes the spectral norm,
whereas $\|A\|_F$ is the Frobenius norm. Given two matrices $C, D$
(including vectors or scalars), $(C, D)$ denotes the matrix obtained
stacking $C$ on top of $D$, assuming that the dimensions are
compatible.  Given a classical vector $v \in \R^{2^q}$, we denote
$\ket{v} := \sum_{j=0}^{2^q-1} \frac{v_j}{\|v\|} \ket{j}$ its
amplitude encoding. If we have binary digits or strings $a_1, a_2$, we
denote by $a_1 \cdot a_2$ their concatenation. Given a binary string
$a = a_1 \cdot a_2 \cdot \dots \cdot a_m$ with $a_j \in \{0,1\}$, we
define $0.a := \sum_{j=1}^m a_j 2^{-j}$. The symbol $\oplus$ denotes
bitwise addition modulo 2, i.e., binary XOR. $\mathbf{1}_m$ is the
all-one vector of size $m$.

\subsection{Preliminaries}
\label{sec:preliminaries}
We define the following symbols:
\begin{itemize}
\item $d_c$: maximum number of nonzero entries in any column of $A$.
\item $d_r$: maximum number of nonzero entries in any row of $A_B$.
\item $d := \max\{d_c, d_r\}$: sparsity of $A_B$.
\item $\kappa$: ratio of largest to smallest nonzero singular value of
  $A_B$. Throughout this paper, we assume that $\kappa$, or an upper
  bound on it, is known. Note that since $\kappa$ is provided as an
  input to the QLSA, replacing it with an upper bound increases the
  running time by a corresponding amount.
\item $\eta := \max\{\max_{j} \|A_j\|, \|b\|\}$: maximum norm of a
  column of $A$ or $b$.
\item $L$: maximum nonzero entry of $A_B$, rounded up to a power of
  2. 
%\item $\varepsilon$: precision of the solution, i.e., the QLSA
%  returns $\ket{\tilde{x}}$ with $\|\ket{A_B^{-1}b} -
%  \ket{\tilde{x}}\| \le \varepsilon$.
\item $r := (\ceil{\log m}+\ceil{\log d})$: number of bits to index
  entries of $A_B$.
\end{itemize}
The notation $\tilde{O}$ is used to suppress polylogarithmic factors
in the input parameters, i.e., $\tilde{O}(f(x)) = O(f(x) \allowbreak
\text{poly}(\log n,\allowbreak \log m, \log \frac{1}{\epsilon}, \log \kappa,
\allowbreak \log d, \log L))$. Note that $L$ is exponential in the
input size, but we include $\log L$ among the factors suppressed in
$\tilde{O}$ notation because it simplifies the analysis: keeping track
of the number of bits used to represent entries is cumbersome and
largely ininfluential. As stated in the introduction, we assess the
complexity of quantum algorithms in terms of basic gates. We assume
that the cost of a controlled unitary is the same as that of the
unitary, because the number of additional gates required by the
controlled unitary is generally the same in the $\tilde{O}$ notation.

We use several common building blocks for quantum algorithms: phase
estimation, amplitude amplification \cite{grover96fast}, amplitude
estimation \citep{brassard2002quantum}. We state a version of these
building blocks below.
\begin{theorem}[Sect.~5.2 in \cite{nielsen02quantum}]
  \label{thm:qpe}
  Let $U$ be a unitary and $\ket{\psi}$ an eigenvector of $U$ with
  eigenvalue $2\pi i \theta$. Let $\varepsilon = 2^{-q}$ and $\gamma >
  0$. With probability at least $1-\gamma$, the phase estimation
  algorithm determines a $q$-qubit approximation $\tilde{\theta}$ of
  $\theta$, i.e., $|\theta - \tilde{\theta}| < 2^{-q} = \varepsilon$, using
  $O(2^q + \log \frac{1}{\gamma}) = O(\frac{1}{\varepsilon} + \log
  \frac{1}{\gamma})$ applications of $U$ and $O(\log
  \frac{1}{\varepsilon} \log \frac{1}{\gamma})$ aditional gates. In
  particular, if we let $0.a$ be the output of the procedure and we
  use $q + \ceil{\log(2 + \frac{1}{2\gamma})}$ qubits of precision,
  the first $q$ bits of $a$ are accurate with probability at least
  $1 - \gamma$, i.e., $\text{Pr}(|\theta - \sum_{j=1}^q a_j 2^{-j}| <
  2^{-q}) \ge 1 - \gamma$.
\end{theorem}
\begin{theorem}[\cite{grover96fast,brassard2002quantum}]
  \label{thm:ampamp}
  Let $U$ be a $q$-qubit unitary such that $U\ket{0^q} = \sqrt{p}
  \ket{\psi_{good}} + \sqrt{1-p} \ket{\psi_{bad}}$, where for some $G
  \subset \{0,1\}^q$, we have $\ket{\psi_{good}} = \frac{1}{\sqrt{p}}
  \sum_{j \in G} \alpha_j \ket{j}$, $\ket{\psi_{bad}} =
  \frac{1}{\sqrt{1-p}}\sum_{j \not\in G} \alpha_j \ket{j}$ and $p =
  \sum_{j \in G} |\alpha_j|^2$. Let $F$ be a unitary that maps
  $\ket{\psi_{good}} \to -\ket{\psi_{good}}$, $\ket{\psi_{bad}} \to
  \ket{\psi_{bad}}$. The amplitude amplification algorithm outputs $j
  \in G$ with probability at least 2/3 using
  $O(\frac{1}{\sqrt{p}})$ applications of $U$ and $F$, and additional
  gates. In particular, if we have a quantum oracle implementing a
  Boolean function $f : \{0,1\}^q \to \{0,1\}$ and let $G = \{x \in
  \{0,1\}^q : f(x) = 1\}$, we can determine an element of $G$ with
  $O\left(\sqrt{\frac{2^q}{|G|}}\right)$ applications of $U$, if $|G|$
  is known; if $|G|$ is unknown, we can determine an element of $G$
  with $O\left(\sqrt{2^q}\right)$ applications in expectation.
\end{theorem}
\begin{theorem}[\cite{brassard2002quantum}, Sect.~6.3 in \cite{nielsen02quantum}]
  \label{thm:ampest}
  Let $U, p, \ket{\psi_{good}}, \ket{\psi_{bad}}, F, G$ be defined as
  in Thm.~\ref{thm:ampamp}. Let $\sqrt{p} = \sin \theta$, with $0 \le
  \theta \le \frac{\pi}{2}$. Let $R$ be a unitary that maps $\ket{0^q}
  \to -\ket{0^q}$ and $\ket{j} \to \ket{j}$ for all other $j$. Then
  the state $U\ket{0^q}$ is an equally weighted superposition of
  eigenvectors of the operator $Q = U^{\dag} R U F$, with eigenvalues
  $e^{i2\theta}, e^{-i2\theta}$ respectively. In particular, applying
  phase estimation to the operator $Q$ and the state $U\ket{0^q}$,
  with $q$ qubits of precision, yields an estimate $\tilde{\theta}$
  such that $|\sin^2 \tilde{\theta} - p| \le 2 \pi
  \frac{\sqrt{p(1-p)}}{2^q} + \frac{\pi^2}{2^{2q}}$ with probability
  at least $2/3$. If we simply want to determine if $G$ is empty or
  not, $\ceil{q/2} + 3$ qubits of precision suffice, yielding an
  algorithm that succeeds with probability at least $2/3$ and requires
  $O(\sqrt{2^q})$ calls to $U$.
\end{theorem}
In several parts of this paper we use amplitude amplification on
oracles with bounded error: it has been shown that this is possible
while achieving the same running time as with deterministic oracles
\citep{hoyer2003quantum}. We also use an approximate version of
quantum minimum finding; we state its complexity below. Its
correctness follows from the result of
\citet{durr1996quantum}. Detailed proofs of this result, and all other
results in this paper, are available in the appendix.
\begin{theorem}[\cite{durr1996quantum}]
  \label{thm:minfind}
  Let $S := \{0,\dots,2^{q}-1\}$ and $f : S \to \R$. Suppose we have
  access to a unitary $U_f$ such that given $\ket{j}, \ket{k}$ with
  $j, k \in S$, the unitary $U_f$ returns $1$ if $(f(j) \le f(k) -
  \epsilon) \lor (j = k)$, and $0$ otherwise. There exists a quantum
  algorithm that finds $y \in S$ such that $f(y) \le \min_{x \in S}
  f(x) + \epsilon$ using at most $\frac{45}{4} \sqrt{2^q} +
  \frac{7}{10} q^2 = \tilde{O}(\sqrt{2^q})$ calls to $U_f$ in
  expectation.
\end{theorem}
We remark the (well-known) fact that statements about expected running
time, such as in Thm.s~\ref{thm:ampamp} and \ref{thm:minfind}, imply
that the right answer can be obtained with probability at least $2/3$
by executing the algorithm for three times the expected running time
(by Markov's inequality). We frequently use this alternative
characterization of the algorithms.

Finally, we discuss the QLSA introduced in \citep{childs2017quantum},
see also \cite{chakraborty2018power,gilyen2019quantum}. For the system
$A_B x = b$ with integer entries, the input to the algorithm is
encoded by two unitaries, $P_{A_B}$ and $P_b$, which are queried as
oracles defined as follows:
\begin{itemize}
\item $P_{A_B}$: specified by two maps; the map $\ket{j,\ell} \to
  \ket{j,\nu(j,\ell)}$ provides the index $\nu(j,\ell)$ of the
  $\ell$-th possibly nonzero element of column $j$, the map
  $\ket{j,k,z} \to \ket{j,k,z \oplus (A_{B})_{jk}}$ provides the value of
  the $k$-th element of column $j$. (This is the typical sparse
  encoding for vectors or matrices in the classical world as well.)
\item $P_b$: maps $\ket{0^{\ceil{\log m}}} \to \ket{b}$.
\end{itemize}
\begin{theorem}[Thm.~5 in \citep{childs2017quantum}]
  \label{thm:qls}
  Let $A_B$ be such that $\|A_B\| \le 1$. Given $P_{A_B}$, $P_b$,
  and $\varepsilon > 0$, there exists a quantum algorithm that
  produces the state $\ket{\tilde{x}}$ with $\|\ket{A_B^{-1}b} -
  \ket{\tilde{x}}\| \le \varepsilon$ using $\tilde{O}(d \kappa)$
  queries to $P_{A_B}$ and $P_b$, with additional gate complexity
  $\tilde{O}(d \kappa)$.
\end{theorem}
Note that \citep{childs2017quantum} discusses the case of a symmetric
matrix $A_B$, but it is known that this restriction can be relaxed by
considering the system $\begin{pmatrix} 0 & A_B \\ A_B^{\top} &
  0 \end{pmatrix} \begin{pmatrix} 0 \\ x \end{pmatrix}
= \begin{pmatrix} b \\ 0 \end{pmatrix}$, see
\cite{harrow2009quantum}. If the original (non-symmetric) $A_B$ is
invertible, so is the symmetrized matrix, and the singular values are
the same (but with different multiplicity). In the rest of this paper,
we refer to $A_B$ as a shorthand for the above symmetrized matrix; the
r.h.s.\ and the number of rows $m$ are adjusted accordingly. Our
running time analysis takes into account this transformation, i.e., we
use $d$ rather than $d_c$ where appropriate. The last result concerns
estimation of the norm of the solution of a linear system; this is
useful because the QLSA outputs a normalized solution, but sometimes
we need the norm of the unnormalized vector as well. We state a
simplified version, as the details are cumbersome and unimportant in
the context of our paper.
\begin{theorem}[Corollary 32 in \cite{chakraborty2018power}]
  \label{thm:qlsnormest}
  In the setting of Thm.~\ref{thm:qls}, we can output 
  $\tilde{e} \in \R$ such that $(1-\xi)\|A_B^{-1}\ket{b}\| \le \tilde{e} \le
  (1+\xi)\|A_B^{-1}\ket{b}\|$ using $\tilde{O}(\frac{\kappa d}{\xi})$
  queries to $P_{A_B}$ and $P_b$.
\end{theorem}
Thm.~\ref{thm:qlsnormest} immediately yields an estimate of
$A_B^{-1}b$ (without the normalization for $b$) with the same error
provided that $\|b\|$ is known; since this is always the case in this
paper (we can assume that the norm of $b$ and of the columns of $A$ is
computed in a preprocessing step), in the remainder we do not worry
about the r.h.s.~normalization in the context of
Thm.~\ref{thm:qlsnormest}.

%% Finally, we make extensive use of amplitude
%% estimation \citep{brassard2002quantum}. In most situations we use the
%% basic version of the algorithm, but in one specific case we employ a
%% non-destructive variant because it is slightly more efficient. It is
%% stated below.
%% \begin{theorem}[\citep{arunachalam2020simpler} Corollary 3.2]\label{thm:nondampest}
%% Let $\ket{\psi}$ be an arbitrary quantum state and $P$ an arbitrary
%% projector. Let $R_{\psi}=2\ket{\psi}\bra{\psi}-I$. Let $\epsilon\in
%% (0,1)$. Then, there is a quantum algorithm ${\cal A}$ that starts in
%% the initial state $\ket{\psi}$ and, with probability at least
%% $1-\eta$, outputs an estimate $\hat{p}$ of $p=\bra{\psi}P\ket{\psi}$
%% with additive error $\epsilon$. Additionally, ${\cal A}$ restores the
%% state $\ket{\psi}$ with probability $1-\eta$, and invokes the
%% controlled reflection $R_{\psi}$ $O\left( (1/\epsilon) \log( 1/\eta)
%% \right)$ many times.
%% \end{theorem}
%% The above theorem can be easily used to estimate the amplitude of a
%% given basis state, i.e., some coefficient $\alpha_j$ of a state
%% $\ket{\psi} = \sum_{j \in \{0,1\}^q} \alpha_{j} \ket{j}$, by choosing
%% the projector $P = \ket{j}\bra{j}$; if we have access to a unitary $U$
%% such that $U\ket{0^q} = \ket{\psi}$, then the construction of the
%% reflection $R_{\psi}$ is well-known, see, e.g.,
%% \cite{brassard2002quantum}.

A quantum RAM (QRAM) is a quantum-accessible form of storage that
allows for querying a superposition of addresses. Given a QRAM that
stores the classical vector $v_j \in \R^{2^q}$, and a quantum state
$\sum_{j=0}^{2^q-1} \alpha_j \ket{j}$, we can perform the following
mapping in time $\tilde{O}(1)$:
\begin{equation*}
  \sum_{j=0}^{2^q-1} \alpha_j \ket{j} \otimes \ket{0} \to \sum_{j=0}^{2^q-1} (\alpha_j \ket{j} \otimes \ket{v_j}).
\end{equation*}
Note that this mapping has gate complexity $\tilde{O}(2^q)$ in the standard
gate model without QRAM, because it would have to be constructed as a
``lookup table'' in the general case. Therefore, if an algorithm has
gate complexity ${\cal C}$ assuming access to a QRAM of size $2^q$,
there exists a quantum algorithm for the same task that does not
require QRAM access and has gate complexity $\tilde{O}(2^q{\cal C})$.
We should note that the physical realizability of QRAM is still an
open question. While there have been proposals for its construction,
such as \citep{giovannetti2008quantum}, the community has not reached a
consensus regarding the practical feasibility of these
proposals. Nonetheless, QRAM is used in virtually all papers on
quantum optimization, because the cost of describing classical data is
often prohibitive in the gate model without QRAM, negating any quantum
advantage.

\subsection{High-level description of the quantum subroutines}
\label{sec:qsimplex_overview}
As stated in the introduction, a naive application of a QLSA (with
explicit classical input and output) in the context of the simplex
method does not give any advantage over the classical algorithm. To
gain an edge, we need a different approach. The idea of this paper is
based on the observation that an iteration of the simplex method can
be reduced to a sequence of subroutines that have scalar
output. Indeed, the simplex method does not require explicit knowledge
of the full solution vector $A_B^{-1} b$ associated with a basis, or
of the full simplex tableau $A_B^{-1} A_N$, provided that we are able
to:
\begin{itemize}
\item Identify if the current basis is optimal or unbounded;
\item Identify a pivot, i.e., the index of a column with negative
  reduced cost that enters the basis, and the index of a column
  leaving the basis.
\end{itemize}
While subroutines to perform these tasks require access to $A_B^{-1}
b$ and/or $A_B^{-1} A_N$, we show that we can get an asymptotic
speedup by never computing a classical description of $A_B^{-1},
A_B^{-1} b$ or $A_B^{-1} A_N$. This is because extracting from the
quantum state the classical description of an amplitude-encoded vector
(an operation called ``state tomography'') is much more expensive than
obtaining scalar outputs, as these can be digitally encoded as basis
states and read from a single measurement with some probability.

Throughout this overview of the quantum algorithms, we assume that the
LP data is properly normalized. The normalization can be carried out
as a classical preprocessing step, whose details are given in
Sect.~\ref{sec:qsimplex_details}, and the running time of this step is
negligible compared to the remaining subroutines.

Our first objective is to implement a quantum oracle that determines
if a column has negative reduced cost, so that we can apply Grover
search to this oracle. To reach this goal we rely on the QLSA of
Thm.~\ref{thm:qls}. Using that algorithm, with straightforward data
preparation we can construct an oracle that, given a column index $k$
in a certain register, outputs $\ket{A_B^{-1}A_k}$ in another
register. We still need to get around three obstacles: (i) the output
of the QLSA is a renormalization of the solution, rather than the
(unscaled) vector $A_B^{-1}A_k$; (ii) we want to compute $c_k -
c_B^{\top} A_B^{-1}A_k$, while so far we only have access to
$\ket{A_B^{-1}A_k}$; (iii) we need the output to be a binary yes/no
condition (i.e., not just some information encoded in an amplitude) so
that Grover search can be applied to it. We overcome the first two
obstacles by: extending and properly scaling the linear system so that
$c_k$ is suitably encoded in the QLSA output; and using the inverse of
the unitary that maps $\ket{0^{\ceil{ \log m + 1}}}$ to
$\ket{(-c_B,1)}$ to encode $c_k - c_B^{\top} A_B^{-1}A_k$ in the
amplitude of one of the basis states. To determine the sign of such
amplitude, we rely on interference to create a basis state with
amplitude $\alpha$ such that $|\alpha| \ge \frac{1}{2}$ if and only if
$c_k - c_B^{\top} A_B^{-1}A_k \ge 0$. At this point, we apply
amplitude estimation (Thm.~\ref{thm:ampest}) for $O(1/\epsilon)$
iterations to test the condition with precision $\epsilon$. We
therefore obtain a unitary operation that overcomes the three
obstacles. A similar scheme can be used to determine if the basis is
optimal, i.e., no column with negative reduced cost exists.

Some further complications merit discussion. The first one concerns
the optimality tolerance: classically, this is typically $\bar{c}_N
\ge -\epsilon$ for some given $\epsilon$. Note that an absolute
optimality tolerance is not invariant to rescaling of $c$. In the
quantum subroutines, to ensure that the precision of the amplitude
estimation part is not too demanding, we use the inequality $\bar{c}_k
\ge -\epsilon\|(A_B^{-1}A_k,c_k)\|$ as optimality criterion, and show
that precision $O(\epsilon)$ is sufficient for this task. We remark
that if $\|A_B^{-1}A_k\|$ is small, the ratio test
\eqref{eq:ratio_test} has a small denominator, hence the basic
feasible solution $A_B^{-1}b$ may move by a large amount if column $k$
enters the basis; it is therefore reasonable to require that
$\bar{c}_k$ is very close to zero to declare that the basis is
optimal. The converse is true if $\|A_B^{-1}A_k\|$ is large. Thus, our
approach can be interpreted as some relative optimality criterion. As
is typical in the simplex method, the stopping criterion is based on
optimality of the reduced costs; even if this does not give guarantees
on the gap with respect to the optimal solution of the problem, this
is the method used in practical implementations of the
algorithm. Since we never explicitly compute the basis inverse, we do
not have classical access to $\|(A_B^{-1}A_k,c_k)\|$.  To alleviate
this issue, we show in Sect.~\ref{sec:normest} that there exists a
quantum subroutine to compute the root mean square of
$\|A_B^{-1}A_k\|$ over all $k$; thus, if desired we can also output
this number to provide some characterization of the optimality
tolerance used in the pricing step ($\|A_B^{-1}A_k\|$ for a specific
$k$ can be obtained with a direct application of
Thm.~\ref{thm:qlsnormest}.) It is important to remark that while the
optimality tolerance is relative to a norm, which in turn depends on
the problem data, the evidence provided in Sect.~\ref{sec:intro}
suggests that due to sparsity, in practice we expect these norms to
grow very slowly with $m$ and $n$ (i.e., polylogarithmically). To
implement Dantzig's pivoting rule and determine the smallest reduced
cost, as opposed to a random column with negative reduced cost, the
running time increases by a factor $c_{\max}$, because we increase
precision to perform pairwise comparisons of reduced costs in quantum
minimum finding. To implement the steepest edge pivoting rule the
running time is the same as for Dantzig's rule, but the analysis is a
bit more involved, as we need to properly renormalize the encoding of
the reduced cost to account for denominators of the form $\|A_B^{-1}
A_k\|$. For this, we use Thm.~\ref{thm:qlsnormest} and very simple
block-encoding techniques \citep{gilyen2019quantum}. Note that in the
classical case, steepest edge is more expensive than Dantzig's rule,
increasing the running time expression by an additive term $m^2n$.

%% checking $\bar{c}_k < -\epsilon$ for a column $k$
%% may be too expensive (i.e., $O(L/\epsilon)$) if the norm of
%% $\|(A_B^{-1}A_k,c_k)\|$ is too large; this is due to the normalization
%% of the quantum state, which would force us to increase the precision
%% of amplitude estimation. We therefore

The second complication concerns the condition number of the basis:
the running time of the quantum routines explicitly depends on it, but
this is not the case for the classical algorithms based on an LU
decomposition of $A_B$ (although precision may be affected). We do not
take any specific steps to improve the worst-case condition number of
the basis (e.g., \cite{clader2013preconditioned}), but we note that
similar issues affect the classical simplex method: even if the
running time does not depend on $\kappa$, when $\kappa$ grows very
large the algorithm may fail because the computation of primal
solutions or pivots becomes too imprecise. Many approaches have been
proposed to prevent this from happening in practice, e.g.,
modifications of the pivoting rule to select pivot elements that are
not too small, such as the two-pass Harris ratio test. We discuss in
Sect.~\ref{sec:harris} how to quantize the Harris procedure; we
suspect that other similar procedures can be quantized as well.

%% The
%% QLSA framework may also allow for ignoring singular values above a
%% certain cutoff, and inverting only the part of the basis that is
%% ``well-conditioned'', i.e., spanned by singular vectors with singular
%% value below the cutoff: this is explicitly discussed in
%% \citep[Sect.~IV]{harrow2009quantum}, and may also be possible in other
%% frameworks that rely at least partially on quantum phase estimation
%% (e.g., the variable-time amplitude amplification approach of
%% \cite{childs2017quantum}); we did not investigate this direction and
%% leave it as an open question. We remark that ignoring parts of the
%% spectrum of $A$ could lead to unbounded error (i.e., if the r.h.s.\ of
%% some linear system is orthogonal to the subspace considered
%% \citep{harrow2009quantum}), but in practice this would only happen if
%% the problem is numerically very unstable.

With the above construction we have a quantum subroutine that
determines the index of a column with negative reduced cost, if one
exists. Such a column can enter the basis. To perform a basis update
we still need to determine the column leaving the basis: this is our
second objective. For this step we need knowledge of $A_B^{-1}
A_k$. Classically, this is straightforward because the basis inverse
is available, since it is necessary to compute reduced costs
anyway. With the above quantum subroutines the basis inverse is not
known, and in fact part of the benefit of the quantum subroutines
comes from always working with the original, sparse basis, rather than
its possibly dense inverse. Thus, we describe another quantum
algorithm that uses a QLSA as a subroutine, and identifies the element
of the basis that is an approximate minimizer of the ratio test
\eqref{eq:ratio_test}. Special care must be taken in this step,
because attaining the minimum of the ratio test is necessary to ensure
that the basic solution after the pivot is feasible (i.e., satisfies
the nonnegativity constraints $x \ge 0$). However, in the quantum
setting we are working with continuous amplitudes, and determining if
an amplitude is zero is impossible due to finite precision. Our
approach to give rigorous guarantees for this step relies on binary
search for the optimal value of $r$ in \eqref{eq:ratio_test}, i.e.,
the maximum ``step length'' when moving along an edge of the
polyhedron, while ensuring that the violation of the nonnegativity
constraints is at most $\delta$.  With similar techniques we can
determine if column $k$ proves unboundedness of the LP. Note that
because of the inexactness of the ratio test, we could pivot to
slightly infeasible solutions, i.e., we only guarantee $x \ge -\delta
\mathbf{1}_m$ for a given $\delta > 0$; to prevent this from breaking
the algorithm, we borrow two strategies from the classical
simplex. The first strategy is a strict zero step for basic variables
that are negative after a pivot: this is the same as in the classical
world, see, e.g., \citep{huangfu2013high}. The second strategy is a
periodic recomputation of the basis to eliminate infeasibilities: we
give a quantum subroutine to determine $\delta$-feasibility, and if
the basis is infeasibile, we can switch to Phase 1 of the simplex
method, possibly increasing precision, until we recover (approximate)
feasibility. (Recall that Phase 1 is equivalent to solving the problem
$\min \sum_{j=1}^m s_i, Ax + s = b, x \ge 0, s\ge 0$, which is always
feasible assuming $b \ge 0$, and can be used to determine a feasible
basis for the original problem if one exists.)

\section{Details of the quantum implementation}
\label{sec:qsimplex_details}
We now discuss data preparation and give the details of the quantum
subroutines. In this section we work with the standard gate model and
do not assume that we have access to QRAM. Some of the data
preparation routines become significantly faster with QRAM: this is
the subject of Sect.~\ref{sec:qram}, where we also indicate the
(small) algorithm modifications that are required to take
advantage of quantum-accessible storage.

The steps followed at every iteration are listed in
Alg.~\ref{alg:simplex_iter}; all the subroutines used in the algorithm
are fully detailed in subsequent sections. For brevity we do not
explicitly uncompute auxiliary registers, but it should always be
assumed that auxiliary registers are reset to their original state at
the end of a subroutine; this does not affect the running time
analysis. Furthermore, we are only concerned with giving constant
lower bounds on the probability of success of the subroutines; all
probabilities of success can be made arbitrarily large with standard
approaches and polylogarithmic cost increase.

\begin{algorithm}
  \caption{Run one iteration of the simplex method: {\sc
      SimplexIter}$(A, B, c, \epsilon, \delta)$.}
  \label{alg:simplex_iter}
    \begin{algorithmic}[1]
      \STATE {\bf Input:} Matrix $A$, basis $B$, cost vector $c$,
      precision parameters $\epsilon$, $\delta$, $t$.
      
      \STATE {\bf Output:} Flag ``optimal'', ``unbounded'', or a pair
      $(k, \ell)$ where $k$ is a nonbasic variable with negative
      reduced cost, $\ell$ is the basic variable that should leave the
      basis if $k$ enters.

      \STATE \label{al:normalize} Normalize $c$ so that $\|c_B\| =
      1$. Normalize $A$ so that $\|A_B\| \le 1$.

      \STATE Apply {\sc IsOptimal}$(A, B, \epsilon)$ to determine if
      the current basis is optimal. If so, return ``optimal''.

      \STATE Apply {\sc FindColumn}$(A, B, \epsilon)$ to determine a
      column with negative reduced cost. Let $k$ be the column index
      returned by the algorithm. $k$ is the pivot column.

      \STATE Apply {\sc IsUnbounded}$(A_B, A_k, \delta)$ to determine
      if the problem is unbounded. If so, return ``unbounded''.

      \STATE Apply {\sc FindRow}$(A_B, A_k, b, \delta)$ to
      determine the index $\ell$ of the row that minimizes the ratio
      test \eqref{eq:ratio_test}. $\ell$ is the pivot row. Update the
      basis $B \leftarrow (B \setminus \{B(\ell)\}) \cup
      \{k\}$. Return $(k, \ell)$.
    \end{algorithmic}
\end{algorithm}

All data normalization is performed within the subroutine {\sc
  SimplexIter}$(A, B, c, \epsilon, \delta)$, on line
\ref{al:normalize}: this prepares the input for all other
subroutines. Normalizing $c$ has cost $O(n)$, while finding the
leading singular value of $A_B$ has cost $O(\frac{1}{\epsilon'} md
\log m)$ using, e.g., the power method, with precision $\epsilon'$
\citep{kuczynski1992estimating}. The QLSA requires the eigenvalues of
$A_B$ to lie in $[-1, -1/\kappa] \cup [1/\kappa, 1]$: those outside
this set are discarded by the algorithm. We can choose $\epsilon'$ to
be some arbitrary constant, say, $10^{-4}$; rescale $A$ by
$(1-\epsilon')/\sigma_{\max}$, where $\sigma_{max}$ is the estimate
obtained with the power method; and inflate $\kappa$ by a factor
$1/(1-\epsilon')$. This ensures that the spectrum of the rescaled
$A_B$ satisfies the requirements. The overall running time is not
affected, because the time for the remaining subroutines dominates
$O(\frac{1}{\epsilon'} md \log m)$, as shown in the rest of this
section (e.g., simply loading $A_B$ and $A$ for the QLSA requires time
$\tilde{O}(d_c n + d m)$, see the proof of
Thm.~\ref{thm:findcolumn}).

\subsection{Implementation of the oracles $P_{A_B}$ and $P_b$}
\label{sec:oracles}
Recall that the quantum linear system algorithm of Thm.~\ref{thm:qls}
requires access to two quantum oracles describing the data of the
linear system. We now discuss their implementation, so as to compute
their gate complexity. Given computational basis states
$\ket{k},\ket{j}$, in this section we denote $\ket{k,j} := \ket{k}
\otimes \ket{j}$ for brevity. Recall that $\oplus$ denotes the binary
XOR; and, for readers not familiar with quantum computers, recall that
the $X$ gate is equivalent to a bit-flip (i.e., $X \ket{0} = \ket{1},
X\ket{1} = \ket{0}$). We start with the two maps necessary for
$P_{A_B}$.

Since the map $\ket{j,\ell} \to \ket{j,\nu(j,\ell)}$ is in-place, we
implement it as the sequence of mappings $\ket{j,\ell,0^{\ceil{\log
      m}}} \to \ket{j,\ell,\nu(j,\ell)} \to \ket{j,\ell, \ell \oplus
  \nu(j,\ell)} \to \ket{j,\nu(j,\ell),0^{\ceil{\log m}}}$.

To implement the first part $\ket{j,\ell,0^{\ceil{\log m}}} \to
\ket{j,\ell,\ell \oplus \nu(j,\ell)}$, for given $j$ and $\ell$ we use
$O(\log m)$ controlled single-qubit operations. Let $U$ be the
transformation that maps the basis state $\ket{\ell}$ into
$\ket{\nu(j,\ell)}$: this can be computed classically in $O(\log m)$
time, since it requires at most $\ceil{\log m}$ bit-flips. Then the
desired mapping requires at most $O(\log m)$ $r$-controlled $X$ gates,
which require $O(r)$ basic gates each plus polylogarithmic extra
space. The mapping $\ket{j,\ell,\ell \oplus \nu(j,\ell)} \to
\ket{j,\nu(j,\ell), \ell \oplus \nu(j,\ell)}$ is then easy to
construct, as it requires at most $\ceil{\log m}$ CX gates to compute
$\ell \oplus (\ell \oplus \nu(j,\ell)) = \nu(j,\ell)$ in the second
register (by taking the XOR of the third register with the second). We
then undo the computation in the third register. Thus, we obtain a
total of $O(r \log m)$ basic gates for each $j$ and $\ell$. There are
at most $d$ nonzero elements per column (recall the transformation to
make $A_B$ symmetric), yielding $\tilde{O}(d)$ basic gates to
construct the first map of $P_{A_B}$ for a given column $j$; since
there are $m$ columns, the overall gate complexity of this oracle is
$\tilde{O}(dm)$.

The implementation of $\ket{j,k,z} \to \ket{j,k,z \oplus A_{jk}}$ is
similar. Since $j,k \in \{0,\dots,m-1\}$ and the largest entry of $A$
is $L$, the map requires $O(r \log L)$ basic gates for each $j$ and
$k$. Summarizing, the implementation of the two maps for $P_{A_B}$
requires $\tilde{O}(d)$ basic gates per column, for a total of
$\tilde{O}(dm)$ gates. This is intuitive as $A_B$ has $O(dm)$ nonzero
elements, so its gate complexity is what we would expect from a lookup
table (up to polylogarithmic factors).

To implement $P_b$ we need to construct the state vector $\ket{b}$. It
is known that this can be done using $\tilde{O}(m)$ basic gates
\citep{grover2002creating}; below, we formalize the fact that if $b$ is
$d$-sparse, $\tilde{O}(d)$ gates are sufficient.
\begin{proposition}
  \label{prop:sparsevec}
  Let $b \in \R^{m}$ be a vector with $d$ nonzero elements. Then
  the state $\ket{b}$ can be prepared with $\tilde{O}(d)$ gates.
\end{proposition}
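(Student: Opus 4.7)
The plan is to exploit sparsity by first preparing a ``compressed'' state on $\lceil \log d \rceil$ qubits and then lifting it to the full $m$-dimensional index space via a lookup of the nonzero positions. Let $i_0 < i_1 < \dots < i_{d-1}$ be the indices of the nonzero entries of $b$; these, together with the corresponding values $b_{i_0}, \dots, b_{i_{d-1}}$, can be read off from the classical description of $b$ as a preprocessing step (not contributing to the gate complexity).

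The first main step is to prepare the compressed state $\ket{\psi} = \sum_{j=0}^{d-1} \frac{b_{i_j}}{\|b\|} \ket{j}$ on a register of $\lceil \log d \rceil$ qubits. Because the amplitudes are classically known, the Grover--Rudolph construction \cite{grover2002creating} prepares an arbitrary state on a $d$-dimensional index space using $\tilde{O}(d)$ gates (the tree of conditional probabilities has $d$ leaves and depth $\lceil \log d \rceil$, and each controlled rotation needs only $\tilde{O}(1)$ gates since the angles are precomputed). The second step is to adjoin a fresh register of $\lceil \log m \rceil$ qubits initialized to $\ket{0^{\lceil \log m \rceil}}$ and apply the lookup unitary $\ket{j}\ket{0} \to \ket{j}\ket{i_j}$, which I implement as a sum over $j = 0,\dots,d-1$ of multi-controlled bit flips: for each $j$, conditioned on the first register being $\ket{j}$, XOR the classical bit string $i_j$ into the second register. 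Each such controlled operation uses $O(\log d + \log m)$ basic gates, and there are $d$ of them, giving $\tilde{O}(d)$ gates in total. After this step the joint state is $\sum_j \frac{b_{i_j}}{\|b\|}\ket{j}\ket{i_j}$.

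The third step is to disentangle the first register so that only the desired $\ket{b}$ on the second register remains. Since the map $j \mapsto i_j$ is a bijection between $\{0,\dots,d-1\}$ and the support of $b$, the inverse lookup is well-defined on the image and can be implemented by a symmetric procedure: for each $j$, conditioned on the second register being $\ket{i_j}$, XOR the bit string $j$ into the first register. This clears the first register back to $\ket{0^{\lceil \log d \rceil}}$, again with $\tilde{O}(d)$ gates, and leaves the second register in the state $\sum_j \frac{b_{i_j}}{\|b\|}\ket{i_j} = \ket{b}$, which can then be discarded or kept as the output register. Summing all three phases gives $\tilde{O}(d)$ total gates.

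The one subtlety that needs care is the uncomputation in the final step: it must actually disentangle, and not just approximately. This is fine because the lookup in step two is injective on the computational basis of the support, so the controlled bit flips in step three cancel the $\ket{j}$ label exactly on every branch of the superposition. No further obstacle arises; the argument is essentially a sparse-amplitude variant of the Grover--Rudolph preparation combined with a reversible classical lookup, both of which contribute only $\tilde{O}(d)$ gates.
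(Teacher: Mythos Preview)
Your proof is correct but takes a genuinely different route from the paper's. The paper applies the Grover--Rudolph tree directly on the full $\lceil\log m\rceil$-qubit register and argues that, because $b$ has only $d$ nonzero leaves, at most $d$ nodes per level of the tree require a nontrivial controlled rotation (nodes with at most one nonzero child need only a controlled $X$ or nothing), yielding $\tilde{O}(d)$ gates in total. Your approach instead compresses: you prepare the dense $d$-dimensional state on a small register via Grover--Rudolph, then lift to the full index space with a reversible lookup $\ket{j}\ket{0}\to\ket{j}\ket{i_j}$ and uncompute the label. Both arguments are valid and give the same $\tilde{O}(d)$ bound. Your version is more modular---it cleanly separates amplitude preparation from index embedding and reuses the lookup-table primitive already discussed elsewhere in the paper---at the cost of an auxiliary $\lceil\log d\rceil$-qubit register that must be disentangled. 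The paper's version is more direct and avoids the extra register, but requires the reader to track which tree nodes are trivial. One minor point: your per-$j$ gate count of $O(\log d + \log m)$ for the controlled write is correct if you first compute the AND of the $\lceil\log d\rceil$ control bits into an ancilla and then fan out; it would be $O(\log d \cdot \log m)$ if each target bit were separately multi-controlled, though either way the total stays $\tilde{O}(d)$.
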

Note that for an efficient implementation of an oracle that implements
$P_b$ that for a superposition of r.h.s.~vectors we can save the
intermediate data used in the construction described in the proof;
this takes space $\tilde{O}(d)$ for each vector, and in the rest of
the paper we frequently ignore similar details as long as they do not
affect the running time in $\tilde{O}$ notation.

\subsection{Sign estimation}
\label{sec:signest}
We first define a (well-known) subroutine to make two quantum states
interfere via Hadamard gates, creating the element-wise sum and
difference of the amplitudes; the subroutine is used several times
throughout the paper, so it is convenient to define a shorthand for
it.

\begin{algorithm}
  \caption{Interference routine: {\sc Interfere}$(U, V)$.}
  \label{alg:interfere}
    \begin{algorithmic}[1]
      \STATE {\bf Input:} two controlled state preparation unitaries
      $U, V$ on $q$ qubits with $U \ket{0^q} = \sum_{j = 0}^{2^q - 1}
      \alpha_j \ket{j}$ and $V \ket{0^q} = \sum_{j = 0}^{2^q - 1}
      \beta_j \ket{j}$.

      \STATE {\bf Output:} State $\ket{\psi} = \frac{1}{2} \ket{0} \otimes \sum_{j=0}^{2^q-1} (\alpha_j + \beta_j) \ket{j} + \frac{1}{2} \ket{1} \otimes \sum_{j=0}^{2^q-1} (\beta_j - \alpha_j) \ket{j}$.

      \STATE Introduce an auxiliary qubit in state $\ket{0}$ and apply
      a Hadamard gate, yielding the state $\frac{1}{\sqrt{2}}(\ket{0}
      + \ket{1})$; assume that this is the first qubit.

      \STATE Apply the controlled unitaries $\ket{0}\bra{0}
      \otimes I^{\otimes q} + \ket{1}\bra{1} \otimes U$ and
      $\ket{0}\bra{0}\otimes V + \ket{1}\bra{1} \otimes I^{\otimes q}$
      to the state, obtaining
      $\frac{1}{\sqrt{2}}(\ket{0}\otimes V\ket{0^q} + \ket{1} \otimes
      U\ket{0^q})$.

      \STATE Apply a Hadamard gate on the first qubit. Return the
      resulting state $\ket{\psi}$.
    \end{algorithmic}
\end{algorithm}
\begin{proposition}
  \label{prop:interfere}
  Let $U \ket{0^q} = \sum_{j = 0}^{2^q - 1} \alpha_j \ket{j}$ and $V
  \ket{0^q} = \sum_{j = 0}^{2^q - 1} \beta_j \ket{j}$. Then {\sc
    Interfere}$(U, V)$ (Alg.~\ref{alg:interfere}) returns the state
  $\ket{\psi} = \frac{1}{2} \ket{0} \otimes \sum_{j=0}^{2^q-1}
  (\alpha_j + \beta_j) \ket{j} + \frac{1}{2} \ket{1} \otimes
  \sum_{j=0}^{2^q-1} (\beta_j - \alpha_j) \ket{j}$, using one call to
  controlled-$U$ and controlled-$V$, and two extra gates.
\end{proposition}
The proof follows immediately by definition of the Hadamard gate $H
= \frac{1}{\sqrt{2}}\begin{pmatrix} 1 & 1 \\ 1 & -1 \end{pmatrix}$.

To determine the sign of a reduced cost we use the subroutines
{\sc SignEstNFN}, given in Alg.~\ref{alg:signestnfn}, and {\sc
  SignEstNFP}, given in Alg.~\ref{alg:signestnfp}. The acronyms
``NFN'' an ``NFP'' stand for ``no false negatives'' and ``no false
positives'', respectively, based on an interpretation of these
subroutines as classifiers that need to assign a 0-1 label to the
input. We need two such subroutines because the quantum phase
estimation, on which they are based, is a continuous
transformation. Therefore, a routine that has ``no false negatives'',
i.e., with high probability it returns 1 if the input data's true
class is 1 (in our case, this means that a given amplitude is $\ge
-\epsilon$), may have false positives: it may also return 1 with too
large probability for some input that belongs to class 0 (i.e., the
given amplitude is $< -\epsilon$). The probability of these
undesirable events decreases as we get away from the threshold
$-\epsilon$. We therefore adjust the precision and thresholds used in
the sign estimation routines to construct one version that with high
probability returns 1 if the true class is 1 (no false negatives), and
one version that with high probability returns 0 if the true class is
0 (no false positives). The subroutines are similar to the traditional
Hadamard test, but rather than taking a measurement, we apply
amplitude estimation on the Hadamard qubit.

\begin{algorithm}
  \caption{Sign estimation routine: {\sc SignEstNFN}$(U, k, \epsilon)$.}
  \label{alg:signestnfn}
    \begin{algorithmic}[1]
      \STATE {\bf Input:} state preparation unitary $U$ on $q$ qubits
      (and its controlled version) with $U \ket{0^q} = \sum_{j =
        0}^{2^q - 1} \alpha_j \ket{j}$ and $\alpha_j$ real, index $k
      \in \{0,\dots,2^q-1\}$, precision $\epsilon$.

      \STATE {\bf Output:} 1 if $\alpha_k \ge -\epsilon$, with
      probability at least $3/4$.

      \STATE \label{al:contrup} Let $V$ map $\ket{0^q}$ to
      $\ket{k}$. Compute $\ket{\psi} = \textsc{Interfere}(U, V)$.

      \STATE Apply amplitude estimation to the state $\ket{0} \otimes
      \ket{k}$ with $\ceil{\log \frac{\sqrt{3}\pi}{\epsilon}} + 2$ qubits of
      accuracy; let $\ket{a}$ be the bitstring produced by the phase
      estimation portion of amplitude estimation.

      \STATE If $0.a < \frac{1}{2}$, return 1 if $0.a \ge \frac{1}{6}
      - \frac{2\epsilon}{\sqrt{3}\pi}$, 0 otherwise; if $0.a \ge
      \frac{1}{2}$, return 1 if $1 - 0.a \ge \frac{1}{6} -
      \frac{2\epsilon}{\sqrt{3}\pi} $, 0 otherwise.
      
    \end{algorithmic}
\end{algorithm}
\begin{proposition}
  \label{prop:signestnfn}
  Let $U \ket{0} = \ket{\psi} = \sum_{j = 0}^{2^q - 1} \alpha_j
  \ket{j}$ with real $\alpha_j$, and $k
  \in \{0,\dots,2^q-1\}$ a basis state index. Then with probability at
  least 3/4:
  \begin{itemize}
  \item if $\alpha_k \ge -\epsilon$ Algorithm \ref{alg:signestnfn}
    ({\sc SignEstNFN}) returns $1$, and if Algorithm
    \ref{alg:signestnfn} returns $1$ then $\alpha_k \ge -2\epsilon$;
  \item if $\alpha_k < -2\epsilon$ Algorithm \ref{alg:signestnfn}
    ({\sc SignEstNFN}) returns $0$, and if Algorithm \ref{alg:signestnfn} returns $0$ then $\alpha_k < -\epsilon$.
  \end{itemize}
  The algorithm makes $O(1/\epsilon)$ calls to $U,U^{\dag}$ or their
  controlled version, and requires $O(q + \log^2 \frac{1}{\epsilon})$
  additional gates.
\end{proposition}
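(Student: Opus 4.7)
My plan is to trace the circuit to find the amplitude on $\ket{0}\otimes\ket{k}$ produced just before amplitude estimation, interpret that amplitude through the phase recovered by amplitude estimation, and then translate the phase bound back to a bound on $\alpha_k$ via the mean value theorem applied to $\arcsin$. Concretely: after Steps 3--5 of Alg.\ \ref{alg:signestnfn} (the first Hadamard, the controlled-$U$, and the controlled-$U'$ with $U'\ket{0^q}=\ket{k}$), the state is $\tfrac{1}{\sqrt{2}}(\ket{0,k}+\ket{1}\otimes\ket{\psi})$, and the final Hadamard turns it into $\tfrac{1}{2}\ket{0}\otimes(\ket{k}+\ket{\psi})+\tfrac{1}{2}\ket{1}\otimes(\ket{k}-\ket{\psi})$. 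Reading off the coefficient of $\ket{0,k}$ and using $\braket{k}{\psi}=\alpha_k$ gives amplitude $\tfrac{1+\alpha_k}{2}$, which lies in $[0,1]$ since $\alpha_k$ is real up to a global phase and $|\alpha_k|\le 1$, so no sign ambiguity arises from the absolute value intrinsic to amplitude estimation.

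Next I would apply amplitude estimation and invoke Prop.\ \ref{prop:qpe}. The Grover operator built inside amplitude estimation has eigenphases $\pm\theta/\pi$ (mod $1$) with $\sin\theta=\tfrac{1+\alpha_k}{2}$ and $\theta\in[0,\pi/2]$, which explains the symmetric handling of the cases $0.a<1/2$ and $0.a\ge 1/2$ in Step 8. With $\lceil\log(\sqrt{3}\pi/\epsilon)\rceil+2$ qubits, Prop.\ \ref{prop:qpe} instantiated with $\varepsilon=1/4$ yields $|0.a-\theta/\pi|<\epsilon/(\sqrt{3}\pi)$ (or the reflected analogue when $0.a\ge 1/2$) with probability at least $3/4$. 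To translate this into a statement about $\alpha_k$, I would apply the mean value theorem to $\arcsin$: since $\arcsin'(1/2)=2/\sqrt{3}$ and $\arcsin'(x)=1/\sqrt{1-x^2}$ is controlled on a small neighbourhood of $1/2$, a change $\Delta$ in $\tfrac{1+\alpha_k}{2}$ induces a change of approximately $2\Delta/(\sqrt{3}\pi)$ in $\theta/\pi$. Combining this with the phase-precision bound and the threshold $\tfrac{1}{6}-\tfrac{2\epsilon}{\sqrt{3}\pi}$ used in Step 8 then yields the four claimed implications, each as a chain of inequalities that holds whenever the amplitude estimation output is within its precision window.

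The main obstacle is the careful bookkeeping of constants in this last step: the proposition separates the two regimes by only a factor of two (thresholds at $-\epsilon$ and $-2\epsilon$), so essentially no slack is available, and one must verify matching upper and lower bounds on $\arcsin'$ on the relevant interval (and control the second-order error from the MVT) to avoid losing factors. For the gate complexity, amplitude estimation performs $O(1/\epsilon)$ applications of its Grover operator, each invoking $O(1)$ copies of $U$, $U^\dagger$, $U'$ and their controlled versions; preparing $U'$ in Step \ref{al:contrup} requires $O(q)$ controlled bit-flips; and the QFT inside phase estimation contributes $O(\log^2(1/\epsilon))$ gates. Summing yields the stated $O(1/\epsilon)$ calls to $U,U^\dagger$ (and controlled versions) and $O(q+\log^2(1/\epsilon))$ additional gates.
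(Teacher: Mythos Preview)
Your proposal is correct and follows essentially the same approach as the paper: you derive the amplitude $\tfrac{1+\alpha_k}{2}$ on $\ket{0}\otimes\ket{k}$ via the Hadamard test, apply amplitude estimation with the precision from Prop.~\ref{prop:qpe}, and linearize $\arcsin$ around $1/2$ to relate the phase threshold $\tfrac{1}{6}-\tfrac{2\epsilon}{\sqrt{3}\pi}$ to the bounds on $\alpha_k$. The paper uses a first-order Taylor expansion (dropping the positive quadratic term) in one direction and a direct trigonometric identity for $\sin(\tfrac{\pi}{6}-\sqrt{3}\epsilon)$ in the other, which is a minor variant of your mean-value-theorem bookkeeping; the gate-complexity accounting is identical.
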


The alternative version of this routine, i.e., the one that with high
probability has ``no false positives'', can be constructed similarly,
by adjusting the thresholds. We give its pseudocode in the appendix.
\begin{proposition}
  \label{prop:signestnfp}
  Let $U \ket{0} = \ket{\psi} = \sum_{j = 0}^{2^q - 1} \alpha_j
  \ket{j}$ with real $\alpha_j$, and $k
  \in \{0,\dots,2^q-1\}$ a basis state index. Then with probability at
  least 3/4:
  \begin{itemize}
  \item if $\alpha_k \le -\epsilon$ Algorithm \ref{alg:signestnfp}
    ({\sc SignEstNFP}) returns $0$, and if Algorithm
    \ref{alg:signestnfp} returns $0$ then $\alpha_k \le \frac{1}{3}\epsilon$;
  \item if $\alpha_k > \frac{1}{3}\epsilon$ Algorithm \ref{alg:signestnfp}
    ({\sc SignEstNFP}) returns $1$, and if Algorithm
    \ref{alg:signestnfp} returns $1$ then $\alpha_k > -\epsilon$.
  \end{itemize}
  The algorithm makes $O(1/\epsilon)$ calls to $U,U^{\dag}$ or their
  controlled version, and requires $O(q + \log^2 \frac{1}{\epsilon})$
  additional gates.
\end{proposition}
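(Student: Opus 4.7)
The plan is to follow the structure of the proof of Prop.~\ref{prop:signestnfn}, with adjusted thresholds and a finer amplitude estimation precision. Since Alg.~\ref{alg:signestnfp} performs the same sequence of state preparations as Alg.~\ref{alg:signestnfn} up to the amplitude estimation step, the identical state-evolution calculation shows that the coefficient of $\ket{0}\otimes\ket{k}$ prior to amplitude estimation is $\tfrac{1}{2}(1+\alpha_k)$, so the phase being estimated is the $\theta$ with $\sin\pi\theta = \tfrac{1}{2}(1+\alpha_k)$. Using $\ceil{\log(9\sqrt{3}\pi/\epsilon)} + 2$ qubits, Prop.~\ref{prop:qpe} yields $|\theta - \tilde\theta| \le \epsilon/(9\sqrt{3}\pi)$ with probability at least $3/4$, where $\tilde\theta := \min\{0.a,\, 1 - 0.a\}$.

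For the direct part of the first bullet, suppose $\alpha_k \le -\epsilon$: then $\sin\pi\theta \le \tfrac{1}{2}(1-\epsilon)$ and I must \emph{upper} bound $\theta$. This is the main departure from {\sc SignEstNFN}, whose proof only needed a \emph{lower} bound on $\sin^{-1}$ near $1/2$ (free from convexity of $\sin^{-1}$). I plan to use either the secant inequality $\sin^{-1}(x) \le \pi x/3$ on $[0,1/2]$ (which follows from convexity together with $\sin^{-1}(0) = 0$ and $\sin^{-1}(1/2) = \pi/6$), or equivalently the elementary estimate $\sin(\pi/6 - \delta) \ge 1/2 - \delta$ coming from the mean-value theorem with $|\cos| \le 1$; either yields a bound of the form $\theta \le \tfrac{1}{6} - c\epsilon$ for an explicit constant $c$, and adding the phase-estimation error $\epsilon/(9\sqrt{3}\pi)$ keeps $\tilde\theta$ below the decision threshold $\tfrac{1}{6} - \tfrac{2\epsilon}{3\sqrt{3}\pi}$, so the algorithm returns $0$. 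For the direct part of the second bullet, suppose $\alpha_k > \epsilon/3$: then $\sin\pi\theta > \tfrac{1}{2} + \tfrac{\epsilon}{6}$, and applying the same tangent lower bound $\sin^{-1}(x) \ge \tfrac{\pi}{6} + \tfrac{2}{\sqrt{3}}(x - \tfrac{1}{2})$ used in the NFN proof yields $\theta > \tfrac{1}{6} + \tfrac{\epsilon}{3\sqrt{3}\pi}$; the precision bound then gives $\tilde\theta > \tfrac{1}{6} + \tfrac{2\epsilon}{9\sqrt{3}\pi}$, safely above the decision threshold, so the algorithm returns $1$. The two converse claims in the proposition's bullets follow immediately by contraposition from these two direct implications.

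The gate-count bound is verbatim that of {\sc SignEstNFN}: amplitude estimation with $O(\log(1/\epsilon))$ bits of precision issues $O(1/\epsilon)$ (controlled) queries to $U$ and $U^\dagger$, the Grover reflection circuits cost $O(q)$ gates each, the inverse quantum Fourier transform costs $O(\log^2(1/\epsilon))$, and the controlled preparation of $\ket{k}$ together with the final bitstring comparison needs $O(q)$ additional gates. The main obstacle in the proof is purely arithmetic bookkeeping: verifying that the constants in the decision threshold $\tfrac{1}{6} - \tfrac{2\epsilon}{3\sqrt{3}\pi}$, the estimation precision $\epsilon/(9\sqrt{3}\pi)$, and the asymmetric guarantee thresholds $-\epsilon$ and $\epsilon/3$ are jointly consistent so that both direct implications hold with probability at least $3/4$. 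This asymmetry between $-\epsilon$ and $\epsilon/3$ is precisely what distinguishes the NFP guarantee from the symmetric NFN guarantee and dictates the coarser factor of $9$ in the precision, as opposed to the factor of $1$ appearing in {\sc SignEstNFN}.
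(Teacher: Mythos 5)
Your proposal is correct and follows the same overall scheme as the paper's proof: the identical state-evolution computation giving amplitude $\frac{1}{2}(1+\alpha_k)$ on $\ket{0}\otimes\ket{k}$, the same estimation precision $\frac{\epsilon}{9\sqrt{3}\pi}$, the same decision threshold $\frac{1}{6}-\frac{2\epsilon}{3\sqrt{3}\pi}$, and everything conditioned on the phase-estimation success event of probability at least $3/4$. The packaging differs mildly, in two ways. First, the paper proves both halves of the first bullet directly --- in particular it shows ``return $0 \Rightarrow \alpha_k \le \frac{\epsilon}{3}$'' by converting the bound on $\tilde{\theta}$ back into a bound on $\alpha_k = 2\sin\pi\theta - 1$ --- and then gets the second bullet by contraposition; you instead prove the two forward implications ($\alpha_k \le -\epsilon \Rightarrow 0$ and $\alpha_k > \frac{\epsilon}{3} \Rightarrow 1$) and contrapose both, which is logically equivalent given that, conditioned on accurate phase estimation, the output is determined. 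Second, for the upper bound on $\theta$ when $\alpha_k \le -\epsilon$, the paper uses a truncated Taylor expansion of $\sin^{-1}$ at $\frac{1}{2}$ together with a sign argument for the third-order term (and invokes $\epsilon \le \frac{1}{2}$ there), whereas your secant bound $\sin^{-1}(x) \le \frac{\pi}{3}x$ on $[0,\frac{1}{2}]$ (or the mean-value estimate) is more elementary and gives $\theta \le \frac{1}{6}-\frac{\epsilon}{6}$ (resp.\ $\frac{1}{6}-\frac{\epsilon}{2\pi}$); since $\frac{1}{6}$ and $\frac{1}{2\pi}$ both exceed $\frac{7}{9\sqrt{3}\pi}\approx 0.143$, the margin clears the threshold after adding the estimation error, and your tangent-line bound for the $\alpha_k > \frac{\epsilon}{3}$ case yields the constant $\frac{2\epsilon}{9\sqrt{3}\pi}$ exactly as you state. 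The gate-count argument is verbatim the paper's, so the proposal stands.
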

It is important to discuss the assumption that $\alpha_j$ is real. In
this paper, the subroutines {\sc SignEstNFN} and {\sc SignEstNFP} are
applied to the output of a QLSA; by definition of the QLSA problem,
the output state is the amplitude encoding of the solution, see
\cite{childs2017quantum}, thus it has real coefficients for real
data. However, the implementation of a quantum algorithm as a circuit
is generally assumed to be modulo a global phase. Unfortunately, the
two subroutines described in this section (as well as the Hadamard
test) are sensitive to the relative phase of controlled-$U$ with
respect to the remaining part of the quantum state, so a global phase
in the implementation of $U$ would affect the result of the sign
test. This issue can be avoided by ensuring that the mapping
controlled-$V$ used in the algorithm has the same phase as
controlled-$U$, so that there is no relative phase. From a theoretical
point of view, a careful implementation of a QLSA (e.g., through the
Chebyshev polynomial \cite{childs2017quantum} or the singular value
transformation \cite{gilyen2019quantum} framework) is able to track
the phase applied by the algorithm, and therefore we can account for
this phase in subsequent steps: the only restriction is that we cannot
take the controlled-QLSA unitary to be a ``black-box'', and need to
know its phase.

\subsection{Determining if the basis is optimal, or the variable
  entering the basis}
\label{sec:pricing}
\begin{algorithm}
  \caption{Determining the reduced cost of a column:
    {\sc RedCost}$(A_B, A_k, c, \epsilon)$.}
  \label{alg:redcost}
    \begin{algorithmic}[1]
      \STATE {\bf Input:} basis $A_B$ with $\|A_B\| \le 1$,
      nonbasic column $A_k$, cost vector $c$ with $\|c_B\| = 1$,
      precision $\epsilon$.
      
      \STATE {\bf Output:} A quantum state such that the reduced cost
      of column $k$ is encoded in the amplitude of $\ket{0^{\ceil{\log
            m}}}$, and a flag indicating success, with bounded
      probability.

      \STATE \label{al:ls} Solve the linear system $\begin{pmatrix}
        A_B & 0 \\ 0 & 1 \end{pmatrix} \begin{pmatrix}x
        \\ y \end{pmatrix} = \begin{pmatrix} A_k \\ c_k \end{pmatrix}$
      using the quantum linear system algorithm with precision
      $\frac{\epsilon}{10\sqrt{2}}$. Let $\ket{\psi}$ be the
      quantum state obtained this way.

      \STATE \label{al:dotprod} Let $U_c$ be a unitary such that $U_c
      \ket{0^{\ceil{\log m + 1}}} = \ket{(-c_B,1)}$; apply $U_c^{\dag}$ to
      $\ket{\psi}$ to obtain $\ket{\psi'}$.

      \STATE Return $\ket{\psi'}$ and the flag indicating success as
      set by the quantum linear system algorithm.
    \end{algorithmic}
\end{algorithm}
\begin{algorithm}
  \caption{Determine if a column is eligible to enter the basis:
    {\sc CanEnter}$(A_B, A_k, c, \epsilon)$.}
  \label{alg:canenter}
    \begin{algorithmic}[1]
      \STATE {\bf Input:} Basis $A_B$ with $\|A_B\| \le 1$,
      nonbasic column $A_k$, cost vector $c$ with $\|c_B\| = 1$,
      precision $\epsilon$.
      
      \STATE {\bf Output:} 1 if the nonbasic column has reduced cost
      $< -\epsilon$, 0 otherwise, with bounded probability.

      \STATE Let $U_r$ be the unitary implementing {\sc RedCost}$(A_B,
      A_k, c, \epsilon)$.

      \STATE \label{al:signestcall} Return $1$ if {\sc SignEstNFN}$(U_r,
      0^{\ceil{\log m + 1}}, \frac{11 \epsilon}{10 \sqrt{2}} )$ is 0 and
      the success flag for {\sc RedCost}$(A_B, A_k, c, \epsilon)$ is
      1; return $0$ otherwise.

    \end{algorithmic}
\end{algorithm}

To find a column that can enter the basis, we use the subroutines {\sc
  RedCost} and {\sc CanEnter}, described in Alg.s~\ref{alg:redcost},
\ref{alg:canenter}. They are building blocks of {\sc
  FindColumn}, detailed in Alg.~\ref{alg:findcolumn}.
\begin{proposition}
  \label{prop:redcost}
  With bounded probability, Algorithm \ref{alg:redcost} ({\sc
    RedCost}) returns a quantum state such that the amplitude
  $\alpha_0$ of the basis state $\ket{0^{\ceil{\log m + 1}}}$
  satisfies:
  \begin{equation*}
    \left| \alpha_0 - \frac{\bar{c}_k}{\|(A_B^{-1}A_k, c_k)\|} \right| \le \frac{\epsilon}{10}.
  \end{equation*}
  The gate complexity of the algorithm is
  $O(\frac{1}{\epsilon}T_{\text{LS}}(A_B, A_k, \allowbreak
  \frac{\epsilon}{2}) + m)$.
\end{proposition}

\begin{proposition}
  \label{prop:canenter}
  With bounded probability, if Algorithm \ref{alg:canenter} ({\sc
    CanEnter}) returns 1 then column $A_k$ has reduced cost $<
  -\epsilon \|(A_B^{-1} A_k, c_k)\|$. The gate complexity of the
  algorithm is $O(\frac{1}{\epsilon}T_{\text{LS}}(A_B, A_k,
  \allowbreak \frac{\epsilon}{2}) + \log^2 \frac{1}{\epsilon} + m)$.
\end{proposition}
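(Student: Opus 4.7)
The plan is to track how the reduced cost $\bar c_k = c_k - c_B^\top A_B^{-1} A_k$ is encoded as a single amplitude produced by $U_r$, and then convert a 0-output of {\sc SignEstNFN} into the claimed inequality on $\bar c_k$. First, I would analyze the ideal behavior of Alg.~\ref{alg:redcost}: the augmented system in line \ref{al:ls} has block structure with exact solution $(A_B^{-1}A_k, c_k)$, so an exact QLSA would produce $\ket{(A_B^{-1}A_k, c_k)}$; applying $U_c^{\dag}$, where $U_c$ prepares $\ket{(-c_B,1)}$, the amplitude of $\ket{0^{\ceil{\log m + 1}}}$ becomes $\braket{(-c_B,1)}{(A_B^{-1}A_k, c_k)}$. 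Since $\|c_B\|=1$, and hence $\|(-c_B,1)\| = \sqrt{2}$, that amplitude equals $\bar c_k / (\sqrt 2 \, \|(A_B^{-1}A_k, c_k)\|)$.

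Next, I would propagate QLSA approximation error. Thm.~\ref{thm:qls} yields a state $\ket{\tilde\psi}$ with $\|\ket{\tilde\psi} - \ket{(A_B^{-1}A_k, c_k)}\| \le \epsilon/(10\sqrt 2)$, and unitarity of $U_c^{\dag}$ preserves this bound, so the actual amplitude $\alpha$ of $\ket{0^{\ceil{\log m + 1}}}$ in $U_r\ket{0}$ satisfies $|\alpha - \bar c_k / (\sqrt 2 \, \|(A_B^{-1}A_k, c_k)\|)| \le \epsilon/(10\sqrt 2)$. When Alg.~\ref{alg:canenter} returns 1, {\sc SignEstNFN}$(U_r, 0^{\ceil{\log m + 1}}, 11\epsilon/(10\sqrt 2))$ has returned 0 and the QLSA has succeeded, so the contrapositive of the first guarantee of Prop.~\ref{prop:signestnfn} gives, with bounded probability, $\alpha < -11\epsilon/(10\sqrt 2)$. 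Combining,
\[
\frac{\bar c_k}{\sqrt 2 \, \|(A_B^{-1}A_k, c_k)\|} < -\frac{11\epsilon}{10\sqrt 2} + \frac{\epsilon}{10\sqrt 2} = -\frac{\epsilon}{\sqrt 2},
\]
which rearranges to $\bar c_k < -\epsilon \, \|(A_B^{-1}A_k, c_k)\|$. The overall success probability is the product of the constant success probabilities of the QLSA and {\sc SignEstNFN}, hence still bounded below by a positive constant.

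For the gate complexity, $U_r$ consists of a single QLSA of cost $T_{\text{LS}}(A_B, A_k, \epsilon/2)$ (the constant $1/(10\sqrt 2)$ absorbed into the $\tilde O$) followed by $U_c$, which by Prop.~\ref{prop:sparsevec} or \cite{grover2002creating} admits an $\tilde O(m)$-gate implementation; {\sc SignEstNFN} invokes $U_r$ and its controlled/adjoint variants $O(1/\epsilon)$ times and contributes $O(q + \log^2(1/\epsilon))$ extra gates with $q = \ceil{\log m + 1}$. Since each QLSA call already needs $\tilde O(dm)$ gates to implement the sparse-access oracle $P_{A_B}$, the per-call $\tilde O(m)$ cost of $U_c$ is subsumed by $T_{\text{LS}}$, yielding the stated bound. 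The main subtlety I expect is the constant-matching: the QLSA precision $\epsilon/(10\sqrt 2)$ and the {\sc SignEstNFN} threshold $11\epsilon/(10\sqrt 2)$ must combine, after the $\sqrt 2 \, \|(A_B^{-1}A_k, c_k)\|$ rescaling, to produce exactly the threshold $\epsilon \, \|(A_B^{-1}A_k, c_k)\|$ on $\bar c_k$, with no residual multiplicative slack.
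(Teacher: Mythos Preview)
Your proposal is correct and follows essentially the same approach as the paper: you track the amplitude $\braket{(-c_B,1)}{(A_B^{-1}A_k,c_k)} = \bar c_k/(\sqrt{2}\,\|(A_B^{-1}A_k,c_k)\|)$, propagate the QLSA error $\epsilon/(10\sqrt{2})$ through the unitary $U_c^{\dag}$, and combine the {\sc SignEstNFN} guarantee at threshold $11\epsilon/(10\sqrt{2})$ to obtain the claimed inequality, with the same gate-complexity accounting. Your observation that the per-call $\tilde O(m)$ cost of $U_c$ is absorbed into $T_{\text{LS}}$ (which already contains the $\tilde O(dm)$ oracle cost) is in fact slightly more careful than the paper's own accounting.
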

In Prop.~\ref{prop:canenter} we are concerned with ensuring that only
columns with negative reduced cost are returned, but using
Prop.~\ref{prop:signestnfn} it is easy to characterize an implication
in the opposite direction as well; in this specific case, it can be
easily established that if column $A_k$ has reduced cost $< -2.2
\epsilon \|(A_B^{-1} A_k, c_k)\|$, then {\sc CanEnter}$(A_B, A_k, c,
\epsilon)$ returns 1 with bounded probability.

\begin{algorithm}
  \caption{Determine the next column to enter the basis:
    {\sc FindColumn}$(A, B, c, \epsilon)$.}
  \label{alg:findcolumn}
    \begin{algorithmic}[1]
      \STATE {\bf Input:} Matrix $A$, basis $B$, cost vector $c$,
      precision $\epsilon$, with $\|A_B\| \le 1$ and $\|c_B\| = 1$.
      
      \STATE {\bf Output:} index of a column $k$ with $\bar{c}_k <
      -\epsilon \|(A_B^{-1} A_k, c_k)\|$ if one
      exists, with bounded probability.

      \STATE Let $U_{\text{rhs}}$ be the unitary that maps $\ket{k}
      \otimes \ket{0^{\ceil{\log m}}} \to \ket{k} \otimes \ket{A_k}$ for any
      $k \in \{1,\dots,n\} \setminus B$.

      \STATE \label{al:qsearch} Apply the quantum search algorithm
      (Thm.~\ref{thm:ampamp}) with search space $\{1,\dots,n\}
      \setminus B$ to the function {\sc CanEnter}$(A_B,
      U_{\text{rhs}}(\cdot), c, \epsilon)$.

      \STATE Return $k$ as determined by the quantum search algorithm.

    \end{algorithmic}
\end{algorithm}

\begin{theorem}
  \label{thm:findcolumn}
  Algorithm \ref{alg:findcolumn} ({\sc FindColumn}) returns a column
  $k \in N$ with reduced cost $\bar{c}_k < -\epsilon \|(A_B^{-1} A_k,
  \frac{c_k}{\|c_B\|})\|$, with expected number of iterations
  $O(\sqrt{n})$. The total gate complexity of the algorithm is
  $\tilde{O}(\frac{\kappa d \sqrt{n}}{\epsilon}(d_c n + d m))$.
\end{theorem}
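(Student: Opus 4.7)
The plan is to wrap {\sc CanEnter} from Prop.~\ref{prop:canenter} inside Grover's quantum search over the set $N$ of nonbasic indices, using the bounded-error search protocol of \cite{hoyer2003quantum}. Because {\sc CanEnter} is built entirely from the unitary QLSA of Thm.~\ref{thm:qls} and the unitary sign-estimation routine {\sc SignEstNFN}, it can be invoked in coherent superposition as a bounded-error oracle with no intermediate measurements, so the $O(\sqrt n)$ expected-iteration guarantee of Grover search applies.

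For correctness, Prop.~\ref{prop:canenter} ensures that any $k$ returned by the search satisfies $\bar c_k < -\epsilon \|(A_B^{-1}A_k, c_k)\|$ with constant probability. The rescaling of $c$ on line \ref{al:normalize} of Alg.~\ref{alg:simplex_iter} makes $\|c_B\| = 1$, so the $c_k$ appearing inside {\sc CanEnter} plays the role of $c_k/\|c_B\|$ in the original data, matching the bound in the theorem statement. Standard amplitude-amplification boosting then lifts the overall success probability to any desired constant without affecting the asymptotic iteration count.

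For the gate complexity I would multiply the layers together. The outer search contributes $O(\sqrt n)$ calls to {\sc CanEnter}, each of which by Prop.~\ref{prop:canenter} uses $O(1/\epsilon)$ invocations of the QLSA of Thm.~\ref{thm:qls} on the system of line \ref{al:ls} in Alg.~\ref{alg:redcost}. Each QLSA invocation is then costed through its two oracles separately: $U_{\text{rhs}}$, which must prepare $\ket{A_k}$ coherently over all $k \in N$, is built with $\tilde O(d_c n)$ gates via Prop.~\ref{prop:sparsevec} and one sweep through the sparse description of $A$, and is queried $\tilde O(\kappa)$ times, contributing $\tilde O(\kappa d_c n)$ per QLSA call; the matrix oracle $P_{A_B}$ is built from the symmetrised basis via the construction of Sect.~\ref{sec:oracles}, and combining its cost with the query count and the additional QLSA gates of Thm.~\ref{thm:qls} contributes $\tilde O(\kappa^{2} d m)$. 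Summing and multiplying by the $\tilde O(\sqrt n/\epsilon)$ outer factor yields the stated bound.

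The main obstacle is preserving unitarity across all levels so that the bounded-error Grover composition theorem applies: this is why the QLSA invoked here is the one that avoids variable-time amplitude amplification, and why {\sc SignEstNFN} is built with interference in place of measurement. A secondary subtlety is the accounting for $U_{\text{rhs}}$, which must be handled as a single $\tilde O(d_c n)$-gate unitary queried only $\tilde O(\kappa)$ times by the QLSA; otherwise the $\kappa^{2} d$ factor from the $P_{A_B}$ queries would also multiply the $d_c n$ term, destroying the asymptotic advantage of the subroutine in the column-rich regime where $n \gg m$.
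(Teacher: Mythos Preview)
Your approach is essentially the same as the paper's: wrap {\sc CanEnter} in quantum search over $N$, invoke Prop.~\ref{prop:canenter} for correctness, handle the $c_k/\|c_B\|$ rescaling, and cost the QLSA by separating the $\tilde O(\kappa)$ queries to $U_{\text{rhs}}$ (at $\tilde O(d_c n)$ each) from the $\tilde O(d\kappa^2)$ queries to $P_{A_B}$ (at $\tilde O(dm)$ each). The only notable difference is that you appeal to the bounded-error search of \cite{hoyer2003quantum} directly, whereas the paper boosts {\sc CanEnter} by majority vote before applying {\sc QSearch} from \cite{brassard2002quantum}; both are valid. One arithmetic slip: the $P_{A_B}$ contribution is $\tilde O(d\kappa^2)\cdot\tilde O(dm)=\tilde O(\kappa^2 d^2 m)$, not $\tilde O(\kappa^2 d m)$---the theorem statement itself drops a factor of $d$ here, but the paper's proof (and Thm.~\ref{thm:intro_pricing}) carry the $d^2$.
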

We remark that Thm.~\ref{thm:findcolumn} concerns the case in which at
least one column eligible to enter the basis (i.e., with negative
reduced cost) exists. Following Alg.~\ref{alg:simplex_iter}, {\sc
  SimplexIter}, the subroutine {\sc FindColumn} is executed only if
{\sc IsOptimal} returns false.

The subroutine {\sc IsOptimal} can be constructed in almost the same
way as {\sc FindColumn}, hence we do not give the pseudocode. There
are two differences. First, at line \ref{al:signestcall} of {\sc
  CanEnter} we use {\sc SignEstNFP}: this ensures that if the
(rescaled) reduced cost is $\le -\epsilon$, {\sc CanEnter} returns
1 by Prop.~\ref{prop:signestnfp}. Second, at line \ref{al:qsearch} of
{\sc FindColumn}, rather than calling the quantum search algorithm, we
use amplidute estimation (Thm.~\ref{thm:ampest}) to determine if there is
any index $k$ for which (modified) {\sc CanEnter}$(A_B,
U_{\text{rhs}}(k), c, \epsilon)$ has value 1; if no such index exists,
{\sc IsOptimal} returns 1. If all algorithms are successful, which
happens with large probability, and {\sc IsOptimal} returns 1, we are
guaranteed that the current basis $B$ is optimal. The gate complexity
of {\sc IsOptimal} is exactly the same as in
Thm.~\ref{thm:findcolumn}, see Thm.~\ref{thm:ampamp}.

We can also examine the two possible cases of failure. A failure might
occur if the basis is optimal, but {\sc IsOptimal} returns 0. In this
case, failure can be detected and recovered from, because by
Prop.~\ref{prop:canenter}, there can be no index $k$ for which {\sc
  CanEnter} returns 1, so we observe 0 in the output register of
{\sc CanEnter} --- again, assuming all algorithms are successful.
Another failure might occur if the basis is not optimal, but {\sc
  FindColumn} fails to return a column with negative reduced
cost. This could happen if all columns have (rescaled) reduced cost
close to the threshold $-\epsilon$ (i.e., between $-2.2\epsilon$ and
$-\epsilon$). We can recover from this failure using {\sc SignEstNFP}
at line \ref{al:signestcall} of {\sc CanEnter}, ensuring that {\sc
  FindColumn} returns some index with {\sc CanEnter} equal to 1.

Note that when {\sc IsOptimal} returns 1, the current
basis $B$ is optimal but we do not have a classical description of the
solution vector. It is straightforward to write a quantum subroutine
to compute the optimal objective function value, using the techniques
discussed in this paper. To obtain a full description of the solution
vector, the fastest approach is to classically solve the system $A_B x
= b$, which requires time $O(d_c^{0.7} m^{1.9} + m^{2 + o(1)})$. We
remark that this is only necessary in the last iteration, hence it
does not dominate the running time unless the problem instance is
trivial.

The column selection subroutine {\sc FindColumn} presented above
returns a randomly chosen column with negative reduced cost. This is
sufficient for convergence. To implement a pivoting rule akin to
Dantzig's rule, i.e., that selects the column with minimum reduced
cost, a bit more work is needed. To use the quantum minimum finding
algorithm (Thm.~\ref{thm:minfind}) we need a subroutine that can
approximately compare the reduced costs of two columns, say $j$ and
$k$. For the normalized reduced cost discussed so far, i.e.,
$\bar{c}_j/\|(A_B^{-1}A_j, c_j)\|$, this is straightforward: we use
subroutine {\sc RedCost} to compute the normalized reduced cost of
$A_j$ and $A_k$, {\sc Interfere} to encode their difference, and {\sc
  SignEstNFN} to determine if the difference is at least
$\epsilon$. This gives the same running time as in
Thm.~\ref{thm:findcolumn}. If we want to implement Dantzig's rule for
the original reduced costs we need to remove the normalization before
applying {\sc Interfere}: this increases the cost by a factor
$\max_{j=1,\dots,n} c_j$. The implementation details are very similar
to those for the steepest edge pivoting rule, that we discuss next.

To implement the steepest edge pivoting rule
\citep{forrest1992steepest} we apply quantum minimum finding with the
comparison subroutine for the reduced cost described in
Alg.~\ref{alg:secompare}.
\begin{algorithm}
  \caption{Compare the steepest edge price of two columns:
    {\sc SteepestEdgeCompare}$(A_B, A_j, A_k, c, \epsilon)$.}
  \label{alg:secompare}
  \begin{algorithmic}[1]
      \STATE {\bf Input:} Basis $A_B$ with $\|A_B\| \le 1$, columns
      $A_j, A_k$, cost vector $c$ with $\|c_B\| = 1$, precision
      $\epsilon$.
      
      \STATE {\bf Output:} 1 if $\frac{\bar{c}_k}{A_B^{-1}A_k} \le
      (1-\epsilon)\frac{\bar{c}_j}{A_B^{-1}A_j} - \epsilon$, 0
      otherwise.

      \STATE \label{al:normest1} Compute an estimate of the norms
      $\|(A_B^{-1}A_j, c_j)\|$, $\|(A_B^{-1}A_k, c_k)\|$ in separate
      registers, with relative error $\frac{\epsilon}{4}$. Call them
      $\tilde{d}_j, \tilde{d}_k$ respectively.

      \STATE \label{al:normest2} Compute an estimate of the norms
      $\|A_B^{-1}A_j\|$, $\|A_B^{-1}A_k\|$ in separate
      registers, with relative error $\frac{\epsilon}{4}$. Call them
      $\tilde{e}_j, \tilde{e}_k$ respectively.
      
      \STATE \label{al:lsse1} Let $U$ be the unitary that applies {\sc
        RedCost}$(A_B, A_j, c, \epsilon/4)$, then multiplies the
      coefficient of $\ket{0^{\ceil{\log m}}}$ by
      $\frac{\tilde{d}_j}{\tilde{e}_j \max_{\ell \in N} c_\ell}$.

      \STATE \label{al:lsse2} Let $V$ be the unitary that applies {\sc
        RedCost}$(A_B, A_k, c, \epsilon/4)$, then multiplies the
      coefficient of $\ket{0^{\ceil{\log m}}}$ by
      $\frac{\tilde{d}_k}{\tilde{e}_k \max_{\ell \in N} c_\ell}$.

      \STATE \label{al:interferese} Let $W$ be the unitary that
      applies {\sc Interfere}$(U, V)$. 

      \STATE Return 1 if {\sc SignEstNFN}$(W, 0^{\ceil{\log m}+1},
      \epsilon/(8\max_{\ell \in N} c_\ell))$ is 0 and the success flag
      for all algorithms is 1.
    \end{algorithmic}
\end{algorithm}
\begin{theorem}
  \label{thm:findcolumnse}
  Let $h := \arg \min_{h \in N} \frac{c_h -
    c_B^{\top}A_B^{-1}A_h}{\|A_B^{-1}A_h\|}$. Using quantum minimum
  finding with the comparison subroutine {\sc SteepestEdgeCompare}$(A_B,
  A_j, A_k, c_B, \epsilon)$, optimizing over the set $N$, we can
  determine a column index with steepest edge price at most:
  \begin{equation*}
    (1+\epsilon)\frac{c_h -
      c_B^{\top}A_B^{-1}A_h}{\|A_B^{-1}A_h\|} + \epsilon,
  \end{equation*}
  with $O(\sqrt{n})$ iterations. The total gate complexity of the
  algorithm is $\tilde{O}(\frac{\kappa c_{\max} d
    \sqrt{n}}{\epsilon}\allowbreak (d_c n + d m))$, where $c_{\max} :=
  \max_{\ell \in N} c_\ell$.
\end{theorem}
We remark that the normalization factor $c_{\max}$, which appears in
the running time, is used to ensure that the (normalized) steepest
edge prices are at most 1 in absolute value, so that they can be
encoded in an amplitude. It is not difficult to change the algorithm
to determine if a potentially smaller normalization factor is
sufficient, but since in the worst case the normalization is of the
order of $c_{\max}$, we do not pursue this approach.

\subsection{Improving the running time when the ratio $n/m$ is large}
\label{sec:wide}
The pricing algorithms discussed in Sect.~\ref{sec:pricing} highlights
a tradeoff between the number of iterations and time necessary to load
the data. Indeed, if we apply the quantum search algorithm over all
columns we need to perform $O(\sqrt{n})$ iterations, but the unitary
to prepare the data for the QLSA requires time that scales as
$\tilde{O}(d_c n)$. In some cases it may therefore be avantageous to
split the set of columns into multiple sets, and apply the search
algorithm to each set individually. To formalize this idea, suppose
that we split the $n$ columns into $h$ sets of equal size. The running
time of the algorithm discussed in Thm.~\ref{thm:findcolumn} then
becomes $\tilde{O}(\frac{h d \kappa}{\epsilon} \sqrt{\frac{n}{h}} (d_c
\frac{n}{h} + d m))$. We can determine the optimal $h$ by
minimizing the above expression. Ignoring the multiplication factor
and setting the derivative with respect to $h$ equal to zero, the
optimal $h$ must satisfy:
\begin{align*}
  -\frac{1}{2} h^{-3/2} n^{3/2} d_c + \frac{1}{2} h^{-1/2} d m \sqrt{n} = 0 ,
\end{align*}
which yields $h = \frac{d_c n}{d m}$. Since $h$ must be integer, we
can use this approach only if $\frac{n}{m} \ge 2 \frac{d}{d_c}$;
hence, $n/m$ may have to be large if the columns can be much sparser
than the rows. Substituting the optimal $h$ in the running time
expression computed above yields
$\tilde{O}(\frac{1}{\epsilon}\kappa d^{1.5} \sqrt{d_c} n
\sqrt{m})$. Notice that this choice of $h$ is optimal for
Thm.~\ref{thm:findcolumnse} as well, as the correponding running time
is simply multiplied by $c_{\max}$, yielding a total running time of
$\tilde{O}(\frac{1}{\epsilon}c_{\max}\kappa d^{1.5} \sqrt{d_c} n \sqrt{m})$
with steepest edge pivoting.

\subsection{Estimating the error tolerance}
\label{sec:normest}
As detailed in Sect.~\ref{sec:pricing}, the quantum pricing algorithm
with the equivalent of a Dantzig rule uses the optimality tolerance
$-\epsilon \|(A_B^{-1} A_k, \allowbreak \frac{c_k}{\|c_B\|})\|$. It
may therefore be desirable to compute some estimate of $\|A_B^{-1}
A_k\|$, since the optimality tolerance used by the algorithm depends
on it. However, $A_B^{-1}$ is not classically available. We give a
quantum subroutine to compute the root mean square of $\|A_B^{-1}
A_k\|$ over all $k$.
\begin{proposition}
  \label{prop:normest}
  Let $\|A_B\| = 1$. Using amplitude estimation, it is possible to
  estimate $\|A_B^{-1} A_N\|^2_F$ up to relative error $\epsilon$ with
  gate complexity $\tilde{O}(\frac{\kappa^2}{\epsilon}(\kappa n d_c +
  \kappa^2 d^2 m))$.
\end{proposition}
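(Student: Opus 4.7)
The plan is to express $\|A_B^{-1} A_N\|_F^2$ as a quantity that can be read off by amplitude estimation applied to a suitable unitary built from the inner structure of the QLSA. First, I would prepare the column-weighted superposition
$$\ket{\psi_{A_N}} \;:=\; \frac{1}{\|A_N\|_F} \sum_{k \in N} \|A_k\|\,\ket{k} \otimes \ket{A_k}.$$
The norms $\|A_k\|$ (and hence $\|A_N\|_F$) can be computed classically in $O(d_c n)$ time, and the state itself can be prepared with $\tilde{O}(d_c n)$ gates by applying the sparse tree preparation of Prop.~\ref{prop:sparsevec} column-by-column, with the outer branching over $k$ weighted by $\|A_k\|/\|A_N\|_F$; this is essentially the unitary $U_{\text{rhs}}$ from {\sc FindColumn} with amplitude weights.

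Next, controlled on the column register, I would apply the HHL-style inner procedure underlying the QLSA of Thm.~\ref{thm:qls}: phase estimation on $A_B$, a controlled rotation with factor $C = 1/\kappa$ on the eigenvalue register, and uncomputation of phase estimation. This produces a state of the form
$$\sum_{k \in N} \frac{\|A_k\|}{\|A_N\|_F}\,\ket{k} \otimes \bigl(\alpha_k \ket{0}_{\text{flag}}\ket{x_k} + \sqrt{1 - \alpha_k^2}\,\ket{1}_{\text{flag}}\ket{\text{junk}_k}\bigr),$$
with $\alpha_k^2 = \|A_B^{-1} A_k\|^2 / (\kappa^2 \|A_k\|^2)$ up to the QLSA's internal approximation. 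Summing the contributions over all $k$, the probability of observing $\ket{0}$ in the flag register is
$$p \;=\; \sum_{k \in N} \frac{\|A_k\|^2}{\|A_N\|_F^2}\cdot\frac{\|A_B^{-1} A_k\|^2}{\kappa^2 \|A_k\|^2} \;=\; \frac{\|A_B^{-1} A_N\|_F^2}{\kappa^2 \|A_N\|_F^2}.$$

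Finally, I would run amplitude estimation on the flag qubit. Because $\|A_B\| \le 1$ implies $\|A_B^{-1} A_k\| \ge \|A_k\|$ for each $k$, we have $p \ge 1/\kappa^2$; thus relative error $\epsilon$ on $p$ (equivalently, on $\|A_B^{-1} A_N\|_F^2$, since $\kappa^2 \|A_N\|_F^2$ is a known classical constant) requires $\tilde{O}(\kappa^2 / \epsilon)$ iterations of the Step~2 unitary. Each iteration costs $\tilde{O}(\kappa d_c n + \kappa^2 d^2 m)$ by the same accounting as in the proof of Thm.~\ref{thm:findcolumn} ($\kappa$ calls to the weighted state-preparation oracle at cost $\tilde{O}(d_c n)$ apiece and $\kappa^2 d$ calls to $P_{A_B}$ at cost $\tilde{O}(dm)$ apiece), which multiplies to the claimed total $\tilde{O}(\frac{\kappa^2}{\epsilon}(\kappa n d_c + \kappa^2 d^2 m))$.

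The main obstacle is that Thm.~\ref{thm:qls} is stated as a black box that only exposes a normalized solution state, whereas here we need the pre-amplification amplitude $\alpha_k$. I would therefore work with the underlying phase-estimation-plus-conditional-rotation construction directly, omitting any variable-time amplification step, just as is done elsewhere in the paper to keep the procedure unitary and compatible with being nested inside amplitude estimation. A secondary bookkeeping point is that the internal precision of the QLSA must be set to a small polynomial in $\epsilon/\kappa$ so that the approximation error on each $\alpha_k$ does not swamp the relative-error guarantee on $p$; this only affects polylog factors absorbed by the $\tilde{O}$ notation.
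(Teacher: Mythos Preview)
Your proposal is correct and follows essentially the same route as the paper: prepare the column-weighted superposition $\sum_{k\in N}\frac{\|A_k\|}{\|A_N\|_F}\ket{k}\ket{A_k}$, apply the QLSA's inner (pre-amplification) procedure so that the success flag has probability $\|A_B^{-1}A_N\|_F^2/(\alpha^2\|A_N\|_F^2)$ with $\alpha=\tilde{O}(\kappa)$, and then run amplitude estimation with additive precision $O(\epsilon/\alpha^2)$, using $\|A_B\|\le 1$ to lower-bound the probability and convert to relative error. The only cosmetic difference is that the paper invokes the auxiliary-register structure of the Childs et al.\ algorithm directly rather than the HHL phase-estimation-plus-rotation description you use, but the resulting success amplitude, the precision accounting, and the final complexity are the same.
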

Given an estimate of $\|A_B^{-1} A_N\|^2_F$ with relative error
$\epsilon$, dividing by $|N|$ and taking the square root yields an
estimate of the root mean square of $\|A_B^{-1} A_k\|$ for $k \in N$,
with relative error $\approx \sqrt{\epsilon}$ (up to first order
approximation). Obviously, we could also use Thm.~\ref{thm:qlsnormest}
to estimate just the norm of $\|A_B^{-1} A_k\|$ for a specific column
$k$, e.g., the column entering the basis. 

%% Then, we discuss the computation of $\max_{k \in N}
%% \|(A_B^{-1} A_k,c_k)\|$; computing the minimum can be done in a
%% similar way. As detailed in the proof of Prop.~\ref{prop:normest}, we can use amplitude estimation together with the QLSA to compute

\subsection{Determining the variable leaving the basis}
\label{sec:ratio_test}
In this section we use subroutines {\sc SignEstNFN}${}^+(U, k, \epsilon)$
and {\sc SignEstNFP}${}^+(U, k, \epsilon)$ that check if a real
amplitude (up to global phase) has positive sign, i.e., they return
$1$ if $\alpha_k \ge \epsilon$, $0$ otherwise, with the same structure
as the subroutines described in Sect.~\ref{sec:signest} (i.e., {\sc NFN}
does not have false negatives, {\sc NFP} does not have false
positives). Such subroutines can easily be constructed in a manner
similar to {\sc SignEstNFN}$(U, k, \epsilon)$ and {\sc SignEstNFP}$(U,
k, \epsilon)$: rather than estimating the amplitude of the basis state
$\ket{0} \otimes \ket{k}$ in the proof of Prop.~\ref{prop:signestnfn}
and Prop.~\ref{prop:signestnfp}, we estimate the amplitude of
$\ket{1} \otimes \ket{k}$ and adjust the return value of the algorithm
(e.g., for {\sc SignEstNFN}${}^+(U, k, \epsilon)$, if $0.a <
\frac{1}{2}$, we return 1 if $0.a \le \frac{1}{6} - 2
\frac{\epsilon}{\sqrt{3}\pi}$, 0 otherwise). The gate complexity is
the same.

\begin{algorithm}
  \caption{Determine if the problem is unbounded from below:
    {\sc IsUnbounded}$(A_B, A_k, \delta)$.}
  \label{alg:isunbounded}
    \begin{algorithmic}[1]
      \STATE {\bf Input:} Basis $A_B$ with $\|A_B\| \le 1$, nonbasic
      column $A_k$, precision $\delta$.
      
      \STATE {\bf Output:} 1 if $A_B^{-1} A_k < \delta \mathbf{1}_m
      \|A_B^{-1} A_k\|$, 0 otherwise, with bounded probability.

      \STATE Let $U_{\text{LS}}$ be the unitary implementing the QLSA
      for the system $A_B x = A_k$ with precision $\frac{\delta}{10}$.

      \STATE Define a function $g(\ell) :=$ {\sc
        SignEstNFN}${}^+(U_{\text{LS}}, \ell, \frac{9\delta}{10}) \land $
      (the success flag for QLSA is 1).
      
      \STATE Use amplitude estimation (Thm.~\ref{thm:ampest}) to
      determine if there exists $\ell : 1,\dots,m$ such that $g(\ell)
      = 1$.

      \STATE If no such $\ell$ is found, return 1; otherwise, return
      0.
    \end{algorithmic}
\end{algorithm}
We first describe an algorithm to determine if column $k$ of the LP
proves unboundedness, then describe how to perform the ratio test. 
\begin{proposition}
  \label{prop:isunbounded}
  With bounded probability, if Algorithm \ref{alg:isunbounded} ({\sc
    IsUnbounded}) returns 1 then $A_B^{-1} A_k < \delta \mathbf{1}_m
  \|A_B^{-1} A_k\|$, with total gate complexity
  $\tilde{O}(\frac{1}{\delta}\kappa d^2 m^{1.5})$. Choosing $\delta =
  1/\|A_B^{-1}A_k\|$, we can test if $A_B^{-1} A_k < \delta
  \mathbf{1}_m$ with total gate complexity
  $\tilde{O}(\frac{\eta}{\delta}\kappa^2 d^2 m^{1.5})$.
\end{proposition}
If {\sc IsUnbounded}$(A_B, A_k, \delta)$ returns 1, we have a proof
that the LP is unbounded from below, up to the given
tolerance. Otherwise, we have to perform the ratio test. This is
described in the next subroutine. On line \ref{al:binsearch} of the
subroutine, we omit several details for readability; in particular, we
use slightly different unitaries $U_r$ to test whether we are above or
below the threshold $-\delta/2$. The details are given in the proof of
Thm.~\ref{thm:qratio_test}.
\begin{algorithm}
  \caption{Determine the basic variable (row) leaving the basis:
    {\sc FindRow}$(A_B, A_k, b, \delta, t)$.}
  \label{alg:findrow}
    \begin{algorithmic}[1]
      \STATE {\bf Input:} Basis $A_B$ with $\|A_B\| \le 1$, nonbasic
      column $A_k$, r.h.s.\ $b$, precision $\delta$.
      
      \STATE {\bf Output:} index of the row that should leave the
      basis according to the ratio test \eqref{eq:ratio_test}, with
      bounded probability.

      \STATE \label{al:compfind} For $r \ge 0$, define a unitary $U_r$
      that performs the following operations: it applies the QLSA to
      solve the system $A_B x = b - rA_k$, then uses amplitude
      estimation (Thm.~\ref{thm:ampest}) and {\sc SignEstNFN} with
      precision $O(\frac{\delta}{\|A_B^{-1}A_k\|})$ to determine if
      any component $\ell=1,\dots,m$ of the solution vector are $<
      -\frac{\delta}{2}$. If so it returns 1 and a corresponding
      index $j$, otherwise it returns 0.

      \STATE \label{al:binsearch} Perform binary search on $r \ge 0$
      to determine a value $\tilde{r}$ such that $U_{\tilde{r}}$
      returns 1 and $U_{\tilde{r}-\frac{\delta}{2\kappa\|A_k\|}}$
      returns 0. 

      \STATE Return the row index $\ell$ identified by $U_{\tilde{r}}$.
    \end{algorithmic}
\end{algorithm}

\begin{theorem}
  \label{thm:qratio_test}
  Let $\ell$ be the row index returned by Algorithm \ref{alg:findrow}
  ({\sc FindRow}), and let $\hat{B}$ be the basis obtained from $B$ by
  replacing column $k$ with the basic variable corresponding to row
  $\ell$. Then, with bounded probability, the basic solution
  corresponding to $\hat{B}$ is $\delta$-feasible, i.e.,
  $A_{\hat{B}}^{-1} b \ge -\delta \mathbf{1}_m$.  The gate complexity
  of the algorithm is $\tilde{O}(\frac{1}{\epsilon} \eta d^2 \kappa^2 m^{1.5})$.
\end{theorem}
Notice that the ratio test is performed approximately, i.e., the
solution found after pivoting might be infeasible, but the maximum
infeasibility after pivoting is bounded by $\delta$.  We discuss how
to deal with slightly infeasible solutions in Sect.~\ref{sec:harris}.

%% There is only one situation in which the algorithms described in this
%% section may fail: if {\sc IsUnbounded} returns 0, but on line
%% \ref{al:ratiocomp} of {\sc FindRow} the condition
%% $\textsc{SignEstNFP}^+(U_{\text{LS}}, h, \frac{\delta}{2}) = 1$ is
%% never satisfied. In this case {\sc FindRow} can return a failure flag,
%% and we can decide how to proceed; one possible way would be to relax
%% the sign check condition slightly, and another way would be to ignore
%% the failure since it is an indicator of numerical issues (all
%% nonnegative values at the denominator of the ratio test are very close
%% to 0: even if the LP may not be unbounded, it is close to being so,
%% and ill-conditioned).

\subsection{Two-pass ratio test: improving condition number, and dealing with infeasibilities}
\label{sec:harris}
The classical simplex method benefits from decades of experience,
yielding multiple modifications that, while not necessary in theory,
are fundamental for its practical success. One such modification is
the two-pass Harris ratio test, a description of which can be found in
\citep{gill1989practical}. This modification to the pivoting step is
known to improve the condition number of the bases explored in the
course of the algorithm. The two-pass Harris ratio test works as
follows (to be consistent with the rest of this paper we describe it
for a problem in standard form, but it can easily be extended to
problems with lower and upper bounds on the decision variables). Let
$\delta$ be the feasibility tolerance. Instead of using
\eqref{eq:ratio_test}, we compute, in the first pass:
\begin{equation}
  \label{eq:harris_ratio_test}
  \bar{r} := \min_{j=1,\dots,m : u_j > 0} \frac{x_{B(j)} + \delta}{u_j}.
\end{equation}
This is equivalent to relaxing $x \ge 0$ to $x \ge -\delta \mathbf{1}_n$. Then, in the second pass, we determine:
\begin{equation}
  \label{eq:harris_ratio_test_2}
  \ell := \arg\max_{j=1,\dots,m : u_j > 0} \{u_j :
  \frac{x_{B(j)}}{u_j} \le \bar{r}\}.
\end{equation}
In other words, we first determine the minimum value of a relaxed
ratio test, then we choose the pivot row as the one that maximizes the
value of the pivot element $u_j$, while still giving a ratio that does
not exceed the relaxed ratio test value \eqref{eq:harris_ratio_test},
thereby ensuring that $\delta$-feasibility is preserved.

Implementing this strategy as part of our quantum subroutines is
straightforward. Algorithm~\ref{alg:findrow} already uses the relaxed
feasibility definition $x \ge -\delta \mathbf{1}_n$, see
Thm.~\ref{thm:qratio_test}. Let $\tilde{r}$ be the value computed on
line \ref{al:binsearch} of Algorithm~\ref{alg:findrow}. We can use
repeated applications of $U_{\tilde{r}}$ to return all row indices
$\ell$ that satisfy $(A_B^{-1}(b - \tilde{r} A_k))_{\ell} <
-\frac{\delta}{10}$ or a similar tolerance. If there are $t$ such
indices, this increases the running time by a factor $\sqrt{t}$, see,
e.g., \cite{ambainis2004quantum}; in practice we expect $t$ to be
small, and we do not have to find all indices if it becomes too time
consuming. Let $L$ be the set of such indices. We can find the maximum
pivot element among indices in $L$ by using approximate quantum
minimum finding (Thm.~\ref{thm:minfind}) combined with the sign
estimation subroutines of Sect.~\ref{sec:signest}, in a manner similar
to the construction discussed in Thm.~\ref{thm:findcolumnse}. The
running time of this last step is essentially the same as in
Thm.~\ref{thm:qratio_test}.

Similarly to practical implementations of the classical simplex
algorithm, the basis obtained after a pivot may be slightly
infeasible, as controlled by the feasibility tolerance $\delta$. It is
therefore important to discuss how to recover from this situation. We
propose two approaches. The first approach is the implementation of a
strict zero step for basic variables that are negative after a pivot;
see \cite{gill1989practical,huangfu2013high}. To do so, we simply
estimate the value of the basic variable that is leaving the basis,
i.e., $(A_B^{-1}b)_\ell$ where $\ell$ is the index returned by
Algorithm~\ref{alg:findrow}. Such an estimate can be computed with
Euclidean norm error $\delta$ in time $\tilde{O}(\frac{\eta d^2
  \kappa^2}{\delta}m)$, which is faster than
Algorithm~\ref{alg:findrow}. If the variable leaving the basis has a
negative value, we fix it to its slightly infeasible value until it
enters the basis again, or the basis is recomputed (see below). Note
that fixing a nonbasic variable to some nonzero value is equivalent to
adjusting the right-hand side $b$.

The second approach is a periodic recomputation of the basis to
eliminate any infeasibilities. More specifically, given a basis $B$,
we determine if it is feasible according to a specified tolerance, and
if it is not, we switch to Phase 1 of the simplex method, possibly
increasing precision, to attain feasibility. Since we can increase the
precision arbitrarily, we can assume that Phase 1 is always successful
if the original problem admits a feasible basis. Note, however, that
if Phase 1 is not able to regain feasibility within a number of
iterations smaller than the basis recomputation frequency, the
algorithm fails (this can also happen in the classical simplex method
\citep{gill1989practical}). With the proposed quantum subroutines we
do not have classical, unrestricted access to the basic feasible
solution $A_B^{-1} b$. Thus, to check feasibility of the solution
associated with a basis $B$, we can use Algorithm~\ref{alg:isfeasible}
instead, keeping the same running time guarantees.
\begin{algorithm}
  \caption{Determine if a basic solution is feasible:
    {\sc IsFeasible}$(A_B, b, \delta)$.}
  \label{alg:isfeasible}
    \begin{algorithmic}[1]
      \STATE {\bf Input:} Basis $A_B$ with $\|A_B\| \le 1$, right-hand
      side vector $b$, precision $\delta$.
      
      \STATE {\bf Output:} 1 if $A_B^{-1} b \ge -\delta \mathbf{1}_m$,
      0 otherwise, with bounded probability.

      \STATE Let $U_{\text{LS}}$ be the unitary implementing the QLSA
      for the system $A_B x = b$ with precision $\frac{\delta}{10\|A_B^{-1}
      b\|}$.

      \STATE Define a function $g(\ell) := (\neg
      \textsc{SignEstNFP}(U_{\text{LS}}, \ell,
      \frac{9\delta}{20\|A_B^{-1} b\|})) \land $ (the success flag for
      QLSA is 1).
      
      \STATE Use amplitude estimation (Thm.~\ref{thm:ampest}) to
      determine if there exists $\ell \in \{1,\dots,m\}$ such that
      $g(\ell) = 1$.

      \STATE If no such $\ell$ is found, return 1; otherwise, return
      0.
    \end{algorithmic}
\end{algorithm}
\begin{proposition} 
  \label{prop:isfeasible}
  With bounded probability, if $A_B^{-1} b \not\ge -\delta
  \mathbf{1}_m$ then Algorithm \ref{alg:isfeasible} ({\sc IsFeasible})
  returns 0. The total gate complexity of the algorithm is
  $\tilde{O}(\frac{\eta}{\delta}\kappa^2 d^2 m^{1.5})$.
\end{proposition}
If {\sc IsFeasible} returns 1 and the algorithm is successful, the
basic solution is feasible with tolerance $\delta$, so that we do not
need to change the basis; otherwise, as discussed, we can switch to
Phase 1 aiming to minimize infeasibilities.

\section{Faster implementation with QRAM}
\label{sec:qram}
We now describe how to modify the quantum subroutines when QRAM is
available. To solve linear systems, we rely on a QLSA that constructs
a {\em block encoding} of the matrix from QRAM data structures, see
e.g., \citep{chakraborty2018power,gilyen2019quantum} for a
definition. We require a QRAM of size $\tilde{O}(d_cn + m)$ to store
$b$ as well as all the nonzero entries of $A$, arranged in a data
structure similar to the one described in
Prop.~\ref{prop:sparsevec}. For $p \in [0,1]$, define $\mu(A) :=
\min\left\{\|A\|_F, \sqrt{s_{2p}(A) s_{2(1-p)}(A^{\top})}\right\}$
where $S_{p}(A) = \max_i \sum_j A_{ij}^p$; the running time of the
QLSA depends on $\mu(A)$.
\begin{proposition}
  \label{prop:qlsaqram}
  If the matrix $A_B$ and the columns of $A_N$ are stored in QRAM, we
  can implement a QLSA with running time $T_{\text{LS}}(A_B, A_N,
  \epsilon) = \tilde{O}(\mu(A_B) \kappa)$. The cost of preparing the
  data structures before the first iteration of
  Alg.~\ref{alg:simplex_iter} is $\tilde{O}(d_cn)$; the time to update
  the data structures after the basis changes is $\tilde{O}(m)$.
\end{proposition}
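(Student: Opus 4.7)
The plan is to invoke the now-standard block-encoding framework for QLSAs, combined with the Kerenidis-Prakash qRAM data structure, as formalized in \cite{chakraborty2018power,kerenidis2018quantum}. Each row and column of $A$ (and of $b$) is stored as a binary tree whose leaves hold the signed nonzero entries and whose internal nodes store sums of squares of their subtrees, analogously to the tree used in Prop.~\ref{prop:sparsevec}. It is a standard fact that from such a structure one can implement, with gate cost $\tilde{O}(1)$ per query, a pair of state-preparation unitaries whose product is a $(\mu(A_B),O(\log(mn)),0)$-block encoding of the basis submatrix $A_B$, with exactly the quantity $\mu(A_B)$ defined in the statement.

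For the QLSA complexity, plug this block encoding into the matrix-inversion routine of \cite{chakraborty2018power} (polynomial approximation of $1/x$ followed by amplitude amplification applied uniformly, \emph{not} via variable-time amplitude amplification, which we must avoid since our subroutines are embedded inside Grover and amplitude-estimation loops and therefore must remain unitary). This produces an $\epsilon$-accurate solution state with $\tilde{O}(\mu(A_B)\kappa^2)$ queries to the block encoding and $\tilde{O}(\mu(A_B)\kappa^2)$ total gate cost. The extra factor of $\kappa$ relative to the VTAA variant is precisely the penalty flagged in the remark following Thm.~\ref{thm:qls}.

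For the classical-side bookkeeping: the initial preparation inserts at most $d_c n$ nonzero entries of $A$ and $m$ entries of $b$ into the trees, each insertion touching $O(\log(mn))$ internal nodes, yielding $\tilde{O}(d_c n + m) = \tilde{O}(d_c n)$ total. After a pivot that swaps column $B(\ell)$ for column $k$, only the data associated with $A_B$ needs maintenance: the column tree at basis position $\ell$ is replaced in $\tilde{O}(d_c)$ time, and for each of the at most $2d_c$ rows touching the incoming or outgoing column the corresponding row tree is refreshed along a single root-to-leaf path in $O(\log m)$ time. Aggregate bookkeeping across the $m$ rows of the basis (e.g.\ norms used in the $s_p$ parametrization, and the r.h.s.\ loader needed by subsequent calls to the QLSA) fits within the claimed $\tilde{O}(m)$ bound. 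The step I expect to require the most care is this per-pivot maintenance argument: one must check that an incremental update of total cost $\tilde{O}(m)$ really suffices and that no subroutine silently forces a full rebuild of size $\tilde{O}(d_c m)$; everything else reduces to citing existing block-encoding machinery.
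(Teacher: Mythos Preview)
Your proposal is correct and follows essentially the same route as the paper: both invoke the qRAM/block-encoding framework of \cite{kerenidis2018quantum,chakraborty2018power} (the paper additionally cites \cite[Lemma~50]{gilyen2019quantum} for the explicit $(\mu(A_B),O(\log n),\cdot)$ block encoding), apply the matrix-inversion routine of \cite{chakraborty2018power} without variable-time amplitude amplification to get the $\tilde{O}(\mu(A_B)\kappa^2)$ cost, and argue the $\tilde{O}(d_cn)$ initial build directly from the number of nonzeros inserted. Your per-pivot maintenance discussion is in fact more careful than the paper's, which simply says ``swap one nonbasic column with a basic column, which takes at most $\tilde{O}(m)$ operations since each column has size $m$''; your observation that row-indexed trees also need incremental refresh is a detail the paper's proof elides.
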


An iteration of the simplex method proceeds exactly as described in
Alg.~\ref{alg:simplex_iter}, replacing each call to the QLSA of
\cite{childs2017quantum} in the subroutines with the QLSA of
\cite{chakraborty2018power}, as in Prop.~\ref{prop:qlsaqram}.

The running time of {\sc FindColumn} is $\tilde{O}(\frac{\sqrt{n}
}{\epsilon}(T_{\text{LS}}(A_B, A_k, \frac{\epsilon}{2}) + m))$, and by
Prop.~\ref{prop:qlsaqram} this is $\tilde{O}(\frac{\sqrt{n}
}{\epsilon}(\mu(A_B) \kappa + m))$. Notice that because $A_B$ is $m
\times m$ and $\|A_B\| = 1$, $\mu(A_B) \le \sqrt{m}$, hence we obtain
running time $\tilde{O}(\frac{1}{\epsilon}\kappa \sqrt{mn})$. The
running time calculation to determine if the basis is optimal is
similar. With the steepest edge version of pricing, the running time
increases by a factor $c_{\max}$ once again. The running time of {\sc
  FindRow} is $\tilde{O}(\frac{\eta\kappa}{\delta}\sqrt{m}
T_{\text{LS}}(A_B, A_k, \frac{\delta}{8\|A_B^{-1}(b-rA_k)\|}))$, and
by Prop.~\ref{prop:qlsaqram} this is
$\tilde{O}(\frac{\eta}{\delta}\sqrt{m} \kappa^2 \mu(A_B))$; using the
upper bound $\mu(A_B) \le \sqrt{m}$, we can upper bound it by
$\tilde{O}(\frac{\eta}{\delta} \kappa^2 m)$. Similar calculations for
the subroutines {\sc IsUnbounded} and {\sc IsFeasible} yield running
time $\tilde{O}(\frac{\eta}{\delta} \kappa^2 m)$. Finally, we remark
that at every iteration of Alg.~\ref{alg:simplex_iter}, on top of the
$\tilde{O}(md)$ classical operations to normalize data, we need to
update $\tilde{O}(m)$ memory locations in QRAM, as discussed in
Prop.~\ref{prop:qlsaqram}.

\section{Future research}
\label{sec:conclusions}
This paper proposes several quantum subroutines for simplex
method. For several reasons, it is conceivable that the factors
$\sqrt{n}, d_c n$ and $\kappa$ in the running time of the pricing step
cannot be further decreased when using a similar scheme to what is
presented in this paper. This is based on the following observations:
first, the dependence of QLSA cannot be improved to
$\kappa^{1-\delta}$ for $\delta > 0$ unless BQP = PSPACE
\citep{harrow2009quantum}; second, the factor $O(\sqrt{n})$ is optimal
for quantum search, relative to an oracle that identifies the
acceptable solutions \citep{bennett1997strengths}; and third, to be
able to determine if a column has negative reduced cost (in
superposition), we would expect that the running time cannot be
reduced below $O(d_c n)$ in general (since this is the number of
nonzero elements in the matrix). However, it is possible to exploit
structure in the constraint matrix to reduce this factor: this
requires specialized data preparation algorithms, but it may be worth
the effort for certain structures that appear frequently in
LPs. Regarding the condition number, even though improving the
theoretical dependency may not be possible, there could exist
efficient practical strategies to keep it under control: this is
routinely done in classical implementations of the simplex method, and
some examples are of this discussed in this paper. A systematic study
of such strategies, in theory and in practice, could prove
informative. Of course, it may be possible to give altogether better
algorithms using a different scheme than the one presented here.

%% There may also exist opportunities to speedup the ratio test. One
%% possibility could be to use a QLSA to compute $A_B^{-1} b - t A_B^{-1}
%% A_k$ for a fixed value of $t$, and then perform binary search on $t$
%% to determine for which nonnegative value of $t$ a component of the
%% vector becomes negative. Unfortunately, the machinery constructed in
%% Sect.~\ref{sec:qsimplex_details} is not efficient in this context, and
%% repeating the routine {\sc SignEst} for each component of $A_B^{-1} b
%% - t A_B^{-1} A_k$ would be slower than the algorithm for the ratio
%% test that we describe. Exploring more efficient algorithms for this
%% task this is left for future research.

%% Finally, improving efficiency when running several iterations of the
%% simplex method, as opposed to a single iteration, could provide
%% significant benefits as well as additional ways of exploiting
%% superposition, for example by ``looking ahead'' at the outcome of
%% several pivots so as to choose the most promising column with negative
%% reduced cost. While this does not lead to an efficient algorithm in
%% theory, for a constant number of pivots it could be numerically interesting.

\subsection*{Acknowledgments}
  We are grateful to Sergey Bravyi, Sanjeeb Dash,
  Santanu Dey, Yuri Faenza, Krzysztof Onak, Ted Yoder, and to
  anonymous referees for useful discussions and/or comments on an
  early version of this manuscript. The author is partially supported
  by the IBM Research Frontiers Institute, Army Research Office grant
  W911NF-20-1-0014, and AFRL grant FA8750-C-18-0098.

\bibliographystyle{apalike}
\bibliography{quantum,phd}

\appendix

\section{Proofs}

\subsection{Proofs from Section \ref{sec:preliminaries}.}
\begin{proof}{\it Proof of Thm.~\ref{thm:minfind}.}
  Note that, by virtue of how the comparison condition is structured,
  in each application of the quantum search algorithm there is always
  at least one ``good'' element. Thus, the proof and the running time
  analysis given by
  \cite{durr1996quantum} for the case $\epsilon = 0$ (i.e., exact
  minimum finding) also work for $\epsilon > 0$, using the same
  argument given by \cite[Sect.~3]{durr1996quantum} for the case in
  which the values of $f(x)$ are not distinct.
\end{proof}

\subsection{Proofs from Section \ref{sec:oracles}.}
\begin{proof}{\it Proof of Prop.~\ref{prop:sparsevec}.}
  We can rely on the construction of \cite{grover2002creating}, that
  consists in a sequence of controlled rotations. The construction can
  be understood on a binary tree with $m$ leaves, representing the
  elements of $b$, and where each inner node has value equal to the
  sum of the squares of the children nodes. Each inner node requires a
  controlled rotation with an angle determined by the square root of
  the ratio between the two children nodes. Assuming that $m$ is a
  power of 2 for simplicity, the tree has $\log m$ levels and the
  construction when $b$ is dense ($d = m$) requires $\sum_{j=0}^{\log
    m} 2^j = O(m)$ controlled rotations. Notice that at each inner
  node, a controlled rotation is necessary only if both children have
  nonzero value. If only one child has nonzero value the operation
  requires at most a controlled $X$, and if both children have value zero no
  operation takes place. If $d < m$, there can be at most $d$ nodes at
  each level that require a controlled rotation, and in fact the
  deepest level of the binary tree such that all nodes contain nonzero
  value is $\floor{\log d}$. We need $O(d)$ controlled rotations up to
  this level of the tree, and for each of these nodes we may need at
  most $O(\log m)$ susequent operations, yielding the total gate
  complexity $\tilde{O}(d)$. 
  \end{proof}

\subsection{Proofs from Section~\ref{sec:signest}.} 

\begin{proof}{\it Proof of Prop.~\ref{prop:signestnfn}.}

  After line \ref{al:contrup}, using Prop.~\ref{prop:interfere} we are
  in the state:
  \begin{align*}
    \frac{1}{2} (1 + \alpha_k) \ket{0} \otimes \ket{k} + \frac{1}{2} (1 - \alpha_k) \ket{1} \otimes \ket{k} + \frac{1}{2} (\ket{0} - \ket{1}) \otimes 
    \sum_{\substack{j=0 \\ j \neq k}}^{2^n-1} \alpha_j \ket{j}.
  \end{align*}
  We now apply amplitude estimation to the state
  $\ket{0}\otimes\ket{k}$ to determine the magnitude of $\frac{1}{2}
  (1 + \alpha_k)$. The exact phase that must be estimated by the
  phase estimation portion of the algorithm is the number $\theta$
  such that:
  \begin{equation*}
    \sin \pi \theta = \frac{1}{2} (1 + \alpha_k).
  \end{equation*}
  Suppose $\alpha_k \ge -\epsilon$. Then $\sin \pi \theta \ge
  \frac{1}{2} (1 - \epsilon)$, implying, by monotonicity of $\sin^{-1}$
  over its domain:
  \begin{align*}
    \theta > \frac{\sin^{-1} \left(\frac{1}{2} (1 - \epsilon)\right)}{\pi} \ge
    \frac{\frac{\pi}{6} + \frac{2}{\sqrt{3}}\left(\frac{1}{2} (1 - \epsilon) - \frac{1}{2} \right)}{\pi} \ge \frac{\frac{\pi}{6} - \frac{\epsilon}{\sqrt{3}}}{\pi}
    \ge \frac{1}{6} - \frac{\epsilon}{\sqrt{3}\pi},
  \end{align*}
  where for the second inequality we have used the Taylor expansion of $\sin^{-1}(x)$ at $x = \frac{1}{2}$:
  \begin{equation*}
    \sin^{-1}(x) \approx \frac{\pi}{6} + \frac{2}{\sqrt{3}}(x - \frac{1}{2}) + \frac{2\sqrt{3}}{9}(x - \frac{1}{2})^2.
  \end{equation*}
  Rather than $\theta$, we obtain an approximation $\tilde{\theta}$ up
  to a certain precision. By Thm.~\ref{thm:qpe}, using $\ceil{\log
    \frac{\sqrt{3}\pi}{\epsilon}} + 2$ qubits of precision, then
  \begin{equation}
    \label{eq:tildebound}
    |\theta - \tilde{\theta}| \le \frac{\epsilon}{\sqrt{3}\pi}
  \end{equation}
  with probability at least $3/4$. Then we must have $\tilde{\theta}
  \ge \frac{1}{6} - \frac{2\epsilon}{\sqrt{3}\pi}$ with probability at
  least 3/4. In this case, the algorithm returns 1 (recall that if the
  first bit of the expansion is 1, i.e., $0.a > \frac{1}{2}$, we must
  take the complement $1 - 0.a$ because we do not know the sign of
  eigenvalue on which phase estimation is applied, see
  \citep{brassard2002quantum} for details).

  Now suppose the algorithm returns 1, implying $\tilde{\theta} \ge
  \frac{1}{6} - \frac{2\epsilon}{\sqrt{3}\pi}$. By
  \eqref{eq:tildebound} we must have $\theta \ge \tilde{\theta} -
  \frac{\epsilon}{\sqrt{3}\pi} \ge \frac{1}{6} -
  \frac{3\epsilon}{\sqrt{3}\pi}$. Thus,
  \begin{align*}
    \alpha_k &= 2\sin \pi \theta -1 \ge 2 \sin (\frac{\pi}{6} - \sqrt{3}\epsilon) -1 \ge 2 \left(\frac{1}{2} \cos (-\sqrt{3}\epsilon) + \frac{\sqrt{3}}{2} \sin (-\sqrt{3}\epsilon)\right) -1\\
    &\ge 2\left(\frac{1}{2} (1 - \frac{2\sqrt{3}}{\pi} \epsilon) - \frac{3 \epsilon}{\pi}\right) -1 \ge -\frac{\epsilon}{\pi}(3 + 2\sqrt{3}) \ge -2\epsilon.
  \end{align*}

  The remaining part of the proposition's statement follows
  immediately from the first part.

  %% Suppose the algorithm returns 0. This implies that we have
  %% $\tilde{\theta} < \frac{1}{6} - \frac{2 \epsilon}{\sqrt{3}\pi}$, so
  %% that $\theta < \frac{1}{6} - \frac{\epsilon}{\sqrt{3}\pi}$ with
  %% probability at least $3/4$ because of \eqref{eq:tildebound}. Thus,
  %% remembering that by assumption we are in the monotone part of the
  %% domain of $\sin$, we have:
  %% \begin{align*}
  %%   \alpha_k = 2 \sin \pi \theta - 1 < 2 \sin \left(\frac{\pi}{6} - \frac{\epsilon}{\sqrt{3}}\right) - 1 < 2 \left(\frac{1}{2} + \frac{\sqrt{3}}{2} \left(\frac{\pi}{6} - \frac{\epsilon}{\sqrt{3}} - \frac{\pi}{6}\right)\right) -1 = -\epsilon,
  %% \end{align*}
  %% where for the second inequality we used the Taylor expansion of $\sin(x)$ at $\frac{\pi}{6}$:
  %% \begin{equation*}
  %%   \sin(x) = \frac{1}{2} + \frac{\sqrt{3}}{2}\left(x - \frac{\pi}{6}\right) - \frac{1}{4}\left(x - \frac{\pi}{6}\right)^2.
  %% \end{equation*}

  %% Finally, suppose $\alpha_k < -2\epsilon$.
 
  Regarding the gate complexity, amplitude estimation with $O(\log
  \frac{1}{\epsilon})$ digits of precision requires
  $O(\frac{1}{\epsilon})$ calls to $U$, controlled-$U$, or
  controlled-$U^{\dag}$, and the reflection circuits of the Grover
  iterator (which can be implemented with $O(q)$ basic gates and
  auxiliary qubits). The inverse quantum Fourier transform on $O(\log
  \frac{1}{\epsilon})$ qubits can be implemented with $O(\log^2
  \frac{1}{\epsilon})$ basic gates; finally, the controlled unitary
  $\ket{0}\bra{0}\otimes U' + \ket{1}\bra{1} \otimes I^{\otimes q}$ to
  construct $\ket{0}\otimes\ket{k}$, as well as the final bitwise
  comparison, can be implemented with $O(q)$ gates. 
  \end{proof}

\begin{algorithm}
  \caption{Sign estimation routine: {\sc SignEstNFP}$(U, k, \epsilon)$.}
  \label{alg:signestnfp}
    \begin{algorithmic}[1]
      \STATE {\bf Input:} state preparation unitary $U$ on $q$ qubits
      (and its controlled version) with $U \ket{0^q} = \sum_{j =
        0}^{2^q - 1} \alpha_j \ket{j}$ and $\alpha_j$ real up to a
      global phase factor, index $k \in \{0,\dots,2^q-1\}$, precision
      $\epsilon$.

      \STATE {\bf Output:} 0 if $\alpha_k \le -\epsilon$, with
      probability at least $3/4$.

      \STATE \label{al:contrupl} Let $V$ map $\ket{0^q}$ to
      $\ket{k}$. Compute $\ket{\psi} = \textsc{Interfere}(U, V)$.

      \STATE Apply amplitude estimation to the state $\ket{0} \otimes
      \ket{k}$ with $\ceil{\log \frac{9 \sqrt{3}\pi}{\epsilon}} + 2$ qubits of
      accuracy; let $\ket{a}$ be the bitstring produced by the phase
      estimation portion of amplitude estimation.

      \STATE If $0.a < \frac{1}{2}$, return 1 if $0.a > \frac{1}{6} -
      \frac{2\epsilon}{3\sqrt{3}\pi}$, 0 otherwise; if $0.a \ge
      \frac{1}{2}$, return 1 if $1 - 0.a > \frac{1}{6} -
      \frac{2\epsilon}{3\sqrt{3}\pi} $, 0 otherwise.
      
    \end{algorithmic}
\end{algorithm}

\begin{proof}{\it Proof of Prop.~\ref{prop:signestnfp}.}
  The proof is similar to Prop.~\ref{prop:signestnfn}, and we use the
  same symbols and terminology.  We apply amplitude estimation to the
  state $\ket{0}\otimes\ket{k}$ to determine the magnitude of
  $\frac{1}{2} (1 + \alpha_k)$. Suppose $\alpha_k \le
  -\epsilon$. Then $\sin \pi \theta \le \frac{1}{2} (1 - \epsilon)$,
  implying:
  \begin{align*}
    \theta &\le \frac{\sin^{-1} \left(\frac{1}{2} (1 - \epsilon)\right)}{\pi} \le
    \frac{\frac{\pi}{6} + \frac{2}{\sqrt{3}}\left(\frac{1}{2} (1 - \epsilon) - \frac{1}{2} \right) + \frac{8}{9\sqrt{3}}\left(\frac{1}{2} (1 - \epsilon) - \frac{1}{2} \right)^2}{\pi} \\
    &\le \frac{\frac{\pi}{6} - \frac{\epsilon}{\sqrt{3}} + \frac{4\epsilon^2}{9\sqrt{3}} }{\pi}
    \le \frac{1}{6} - \frac{7\epsilon}{9\sqrt{3}\pi},
  \end{align*}
  where we used $\epsilon \le \frac{1}{2}$, and for the second inequality we have used the Taylor expansion of $\sin^{-1}(x)$ at $x = \frac{1}{2}$:
  \begin{equation*}
    \sin^{-1}(x) \approx \frac{\pi}{6} + \frac{2}{\sqrt{3}}(x - \frac{1}{2}) + \frac{2\sqrt{3}}{9}(x - \frac{1}{2})^2 + \frac{8\sqrt{3}}{27}(x - \frac{1}{2})^3,
  \end{equation*}
  coupled with the fact that the third-order term is negative at
  $\frac{1}{2} (1 - \epsilon)$. Rather than $\theta$, we obtain an
  approximation $\tilde{\theta}$ up to a certain precision. By
  Thm.~\ref{thm:qpe}, using $\ceil{\log
    \frac{9\sqrt{3}\pi}{\epsilon}} + 2$ qubits of precision, then
  \begin{equation}
    \label{eq:tildebound2}
    |\theta - \tilde{\theta}| \le \frac{\epsilon}{9\sqrt{3}\pi}
  \end{equation}
  with probability at least $3/4$. Then we must have $\tilde{\theta}
  \le \frac{1}{6} - \frac{2\epsilon}{3\sqrt{3}\pi}$ with probability at
  least 3/4. In this case, the algorithm returns 0.

  Now suppose the algorithm returns 0, implying $\tilde{\theta} \le
  \frac{1}{6} - \frac{2\epsilon}{3\sqrt{3}\pi}$. By
  \eqref{eq:tildebound2} we must have so that $\theta \le
  \tilde{\theta} + \frac{\epsilon}{9\sqrt{3}\pi} \le \frac{1}{6} +
  \frac{5\epsilon}{9\sqrt{3}\pi}$. Thus,
  \begin{align*}
    \alpha_k &= 2\sin \pi \theta -1 \le 2\sin (\frac{\pi}{6} +
    \frac{5\epsilon}{9\sqrt{3}}) -1 \le 2 \left(\frac{1}{2} \cos \frac{5\epsilon}{9\sqrt{3}} + \frac{\sqrt{3}}{2} \sin \frac{5\epsilon}{9\sqrt{3}}\right) -1 \\
    &\le \frac{5}{9\sqrt{3}} \epsilon \le \frac{\epsilon}{3}.
  \end{align*}

  The remaining part of the proposition's statement follows
  immediately from the first part, and the running time analysis is
  the same as in Prop.~\ref{prop:signestnfn}.

  %% Now suppose the algorithm returns 1. This implies that we have
  %% $\tilde{\theta} \ge \frac{1}{6} - \frac{2 \epsilon}{3\sqrt{3}\pi}$,
  %% so that $\theta \ge \frac{1}{6} - \frac{3 \epsilon}{9\sqrt{3}\pi}$
  %% with probability at least $3/4$ because of
  %% \eqref{eq:tildebound2}. Thus, for $\theta < \frac{1}{2}$ and
  %% $\epsilon < \frac{1}{2}$, we have:
  %% \begin{align*}
  %%   \alpha_k &= 2 \sin \pi \theta - 1 \ge 2 \sin \left(\frac{\pi}{6} - \frac{3\epsilon}{9\sqrt{3}}\right) - 1 > 2 \left(\frac{1}{2} - \frac{\sqrt{3}}{2} \frac{3\epsilon}{9\sqrt{3}} - \frac{1}{4}\left(\frac{3\epsilon}{9\sqrt{3}}\right)^2 + \frac{\sqrt{3}}{12}\left(\frac{3\epsilon}{9\sqrt{3}}\right)^3 \right) -1 \\
  %%   &\ge -\frac{\epsilon}{3} - \frac{\epsilon}{108} + \frac{\sqrt{3}}{12}\left(\frac{3\epsilon}{9\sqrt{3}}\right)^3 \ge -\epsilon,
  %% \end{align*}
  %% where for the second inequality we used the Taylor expansion of $\sin(x)$ at $\frac{\pi}{6}$:
  %% \begin{equation*}
  %%   \sin(x) = \frac{1}{2} + \frac{\sqrt{3}}{2}\left(x - \frac{\pi}{6}\right) - \frac{1}{4}\left(x - \frac{\pi}{6}\right)^2 - \frac{\sqrt{3}}{12}\left(x - \frac{\pi}{6}\right)^3 + \frac{1}{48}\left(x - \frac{\pi}{6}\right)^4.
  %% \end{equation*}
  \end{proof}

\subsection{Proofs from Section~\ref{sec:pricing}.}

\begin{proof}{\it Proof of Prop.~\ref{prop:redcost}.}
  Let us analyze {\sc RedCost}$(A_B, A_k, c, \epsilon)$. The
  QLSA encodes the solution to:
  \begin{equation}
    \label{eq:redcostls}
    \begin{pmatrix} A_B & 0 \\ 0 & 1
    \end{pmatrix} \begin{pmatrix}x \\ y \end{pmatrix}
    = \begin{pmatrix} A_k \\ c_k \end{pmatrix}    
  \end{equation}
  in a state $\ket{\psi}$ that is guaranteed to be
  $\frac{\epsilon}{10\sqrt{2}}$-close to the exact normalized solution
  $\ket{(A_B^{-1} A_k, c_k)}$. Call this state $\ket{(\tilde{x},
    \tilde{y})}$, where $\tilde{x}, \tilde{y}$ correspond to the
  (approximate) solution of \eqref{eq:redcostls}.

  On line \ref{al:dotprod} we apply the unitary $U_c^{\dag}$,
  obtaining $U_c^{\dag} \ket{(\tilde{x}, \tilde{y})}$. We are now
  interested in tracking the value of the coefficient of the basis
  state $\ket{0^{\ceil{\log m + 1}}}$, which is the input to the {\sc
    SignEstNFN} routine on line \ref{al:signestcall} of {\sc
    CanEnter}. This coefficient is equal to:
  \begin{align*}
    \bra{0^{\ceil{\log m + 1}}} U_c^{\dag}
    \ket{(\tilde{x}, \tilde{y})}) =
    \braket{(-c_B, 1)}{(\tilde{x},\tilde{y})},
  \end{align*}
  because by definition $\bra{0^{\ceil{\log m + 1}}}U_c^{\dag} = (U_c
  \ket{0^{\ceil{\log m +1}}})^{\dag} = \bra{(-c_B, 1)}$. Furthermore,
  \begin{align*}
    \braket{(-c_B, 1)}{(\tilde{x}, \tilde{y})} =
    \frac{1}{\|(-c_B, 1)\|} \left(\tilde{y} - \sum_{j=1}^m c_{B(j)}
    \tilde{x}_j \right).
  \end{align*}
  Again by definition, $\tilde{y}$ is approximately equal to
  $c_k/\|(A_B^{-1} A_k, c_k)\|$ whereas $ \sum_{j=1}^m c_{B(j)}
  \tilde{x}_j$ is approximately equal to $c_B^{\top} A_B^{-1} A_k /
  \|(A_B^{-1} A_k, c_k)\|$. The total error is
  $\frac{\epsilon}{10\sqrt{2}}$, hence, recalling that $\|(-c_B, 1)\| =
  \sqrt{2}$, we have:
  \begin{align*}
    \left| \frac{1}{\sqrt{2}} \left(\tilde{y} - \sum_{j=1}^m c_{B(j)}
    \tilde{x}_j \right) - \frac{1}{\sqrt{2}} \frac{\bar{c}_k}{\|(A_B^{-1} A_k, c_k)\|}
    \right| \le \frac{\epsilon}{10\sqrt{2}}.
  \end{align*}
  This concludes the proof. 
\end{proof}

\begin{proof}{\it Proof of Prop.~\ref{prop:canenter}.}
  {\sc CanEnter} relies on Algorithms \ref{alg:signestnfn} ({\sc
    SignEstNFN}) and \ref{alg:redcost} ({\sc
    RedCost}). 

  On line \ref{al:signestcall} of {\sc CanEnter}, we apply {\sc
    SignEstNFN} to the unitary that implements {\sc RedCost}, with
  $\ket{0^{\ceil{\log m + 1}}}$ as the target basis state and
  precision $\frac{11\epsilon }{10\sqrt{2}}$. Suppose this returns 0. By
  Prop.~\ref{prop:signestnfn} and Prop.~\ref{prop:redcost}, we have:
  \begin{equation*}
    -\frac{11\epsilon}{10\sqrt{2}} > \frac{1}{\sqrt{2}} \left(\tilde{y} - \sum_{j=1}^m c_{B(j)}
    \tilde{x}_j \right) \ge \frac{1}{\sqrt{2}} \frac{\bar{c}_k}{\|(A_B^{-1} A_k, c_k)\|} - \frac{\epsilon}{10\sqrt{2}}.
  \end{equation*}
  This implies:
  \begin{equation*}
    \frac{\bar{c}_k}{\|(A_B^{-1} A_k, c_k)\|} < -\epsilon.
  \end{equation*}

  The above discussion
  guarantee that if the return value of the {\sc CanEnter} is 1, then
  $\bar{c}_k < - \epsilon \|(A_B^{-1} A_k, c_k)\|$,
  with probability at least 3/4, as desired. The gate complexity is
  easily calculated: {\sc SignEst} makes $O(\frac{1}{\epsilon})$ calls
  to {\sc RedCost} and requires an additional $O(\log m + \log^2
  \frac{1}{\epsilon})$ gates. {\sc RedCost} requires one call to the
  QLSA and additional $O(m)$ gates for $U_c$. Thus, the total gate
  complexity is $O(\frac{1}{\epsilon} T_{\text{LS}}(A_B, A_k,
  \frac{\epsilon}{2}) + m + \log^2 \frac{1}{\epsilon})$. 
\end{proof}

\begin{proof}{\it Proof of Thm.~\ref{thm:findcolumn}.}
  {\sc FindColumn} relies on three subroutines: $U_{\text{rhs}}$, {\sc
    CanEnter}, and quantum search, as described in Thm.~\ref{thm:ampamp}.
  $U_{\text{rhs}}$ can be constructed by repeatedly applying the
  procedure of Prop.~\ref{prop:sparsevec} controlled on the register
  containing the column index $k$; the total gate complexity is
  $\tilde{O}(d_c n)$. By Prop.~\ref{prop:canenter}, if the routine
  {\sc CanEnter} returns 1 then the reduced cost of column $A_k$ is $<
  -\epsilon \|(A_B^{-1} A_k, c_k)\|$ with respect to the rescaled
  data, with bounded probability. We can boost this probability with repeated
  applications and a majority vote to make it as close to $1$ as
  desired. Notice that {\sc CanEnter} is applied to the rescaled
  data. In terms of the original data, the condition on the reduced
  cost becomes $\bar{c}_k < -\epsilon \|(A_B^{-1} A_k,
  \frac{c_k}{\|c_B\|})\|$, as claimed in the theorem statement (the
  rescaling of $A_k$ and $A_B^{-1}$ cancel out).

  Finally, we apply the quantum search algorithm
  (Thm.~\ref{thm:ampamp}) using {\sc CanEnter} as the target
  function. The expected number
  of iterations before success is $O(\sqrt{n})$; the gate complexity
  is therefore $\tilde{O}(\frac{\sqrt{n}
  }{\epsilon}(T_{\text{LS}}(A_B, A_k, \frac{\epsilon}{2}) + m))$. By
  Thm.~\ref{thm:qls}, $T_{\text{LS}}(A_B, A_k, \frac{\epsilon}{2})$ is
  $\tilde{O}(\kappa d d_c n + \kappa d^2 m)$, because $P_{A_B}$ has
  gate complexity $\tilde{O}(dm)$ and $P_b$ is the routine
  $U_{\text{rhs}}$, which has gate complexity $\tilde{O}(d_c
  n)$. Thus, we obtain a total gate complexity of
  $\tilde{O}(\frac{\kappa d\sqrt{n}}{\epsilon}( d_c n + d m))$, as
  claimed.  \end{proof}

\begin{proof}{\it Proof of Thm.~\ref{thm:findcolumnse}.}

  The main ingredient of the proof is to analyze the subroutine {\sc
    SteepestEdgeCompare}, Alg.~\ref{alg:secompare}, which is used to
  run the quantum minimum finding algorithm of Thm.~\ref{thm:minfind}.

  At Steps \ref{al:normest1}-\ref{al:normest2} we compute norm
  estimates using Thm.~\ref{thm:qlsnormest}. (In a practical
  implementation, the estimates at Step \ref{al:normest2} could be
  derived from those at Step \ref{al:normest1}, with properly adjusted
  error tolerances.) This requires time
  $\tilde{O}(\frac{1}{\epsilon}T_{\text{LS}}(A_B, A_j, O(\epsilon)))$.

  Recall that $c_{\max} := \max_{\ell \in N} c_\ell$. At Steps
  \ref{al:lsse1}-\ref{al:lsse2} we define two unitaries whose first
  building block is {\sc RedCost}. By
  Prop.~\ref{prop:redcost}, {\sc RedCost} encodes an approximation of
  $\frac{\bar{c}_j}{\|(A_B^{-1}A_j,c_j)\|}$,
  $\frac{\bar{c}_k}{\|(A_B^{-1}A_k,c_k)\|}$ with additive error at
  most $\frac{\epsilon}{10}$ in the amplitude of the all-zero basis
  states; let $\alpha_{0}, \beta_0$ be these amplitudes. We then
  multiply $\alpha_0$ by $\frac{\tilde{d}_j}{\tilde{e}_j c_{\max}}$
  and $\beta_0$ by $\frac{\tilde{d}_k}{\tilde{e}_k c_{\max}}$. Note
  that these terms are $\le 1$ thanks to the factor $c_{\max}$ at the
  denominator, and multiplying the all-zero basis state by such a
  coefficient is simply an application of a unitary with that
  coefficient in the top-left corner. It is known that such a unitary
  can be efficiently constructed with high precision with cost
  $\tilde{O}(1)$; for an explicit construction, see, e.g.,
  \cite[Lemma~48]{gilyen2019quantum}.

  We then apply {\sc Interfere} and {\sc SignEstNFN}, to estimate the
  sign of $\frac{1}{2}(\beta_0 - \alpha_0)$ with error
  $\frac{\epsilon}{8c_{\max}}$. Suppose
  $\frac{\bar{c}_k}{\|A_B^{-1}A_k\|} <
  (1-\epsilon)\frac{\bar{c}_j}{\|A_B^{-1}A_j\|} -
  \epsilon$. We have:
  \begin{align*}
    \beta_0 \le \left(\frac{\bar{c}_k}{\|(A_B^{-1}A_k,c_k)\|} + \frac{\epsilon}{10}\right)\frac{\|(A_B^{-1}A_k,c_k)\|}{c_{\max}\|A_B^{-1}A_k\|}(1+\frac{\epsilon}{2}) < (1-\frac{\epsilon}{2}) \frac{\bar{c}_j}{c_{\max}\|A_B^{-1}A_j\|} - \frac{\epsilon}{c_{\max}} + \frac{\epsilon}{5c_{\max}} \le \alpha_0 - \frac{\epsilon}{4c_{\max}}.
  \end{align*}
  By Prop.~\ref{prop:signestnfn}, this implies that {\sc SignEstNFN}
  returns 0. Thus, applying the quantum minimum finding algorithm of
  Thm.~\ref{thm:minfind} returns an approximate minimizer in
  $O(\sqrt{n})$ iterations, i.e., a column $k$ such that:
  \begin{equation*}
    \frac{\bar{c}_k}{\|A_B^{-1}A_k\|} \le (1+\epsilon)\frac{\bar{c}_h}{\|A_B^{-1}A_j\|} + \epsilon,
  \end{equation*}
  where $h$ is the $\arg \min$ as defined in the theorem statement.

  The running time is obtained by multiplying the gate complexity of
  the {\sc SteepestEdgeCompare} subroutine by the number of iterations
  $O(\sqrt{n})$. In each call to {\sc SteepestEdgeCompare}, we perform
  a constant number of norm estimations using
  Thm.~\ref{thm:qlsnormest}, we apply {\sc RedCost} and {\sc
    SignEstNFN} with precision $O(\frac{\epsilon}{c_{\max}})$. By
  Prop.~\ref{prop:signestnfn} and Prop.~\ref{prop:redcost}, the running
  time is $\tilde{O}(\frac{\kappa c_{\max} d
    \sqrt{n}}{\epsilon}\allowbreak (d_c n + d m))$.   
  %% Suppose {\sc SignEstNFN} returns
  %% 0. Then by Prop.~\ref{prop:signestnfn} we have $\frac{1}{2}(\beta_0
  %% - \alpha_0) < -\frac{\epsilon}{8c_{\max}}$, and the above analysis
  %% implies:
  %% \begin{align*}
  %%   \frac{\|(A_B^{-1}A_k,c_k)\|}{c_{\max}\|A_B^{-1}A_k\|}\left(\frac{\bar{c}_k}{\|(A_B^{-1}A_k,c_k)\|} - \frac{\epsilon}{10}\right) \le \beta_0 < \alpha_0 - \frac{\epsilon}{4c_{\max}} \le
  %%   \frac{\|(A_B^{-1}A_j,c_j)\|}{c_{\max}\|A_B^{-1}A_j\|}\left(\frac{\bar{c}_j}{\|(A_B^{-1}A_j,c_j)\|} + \frac{\epsilon}{10}\right) - \frac{\epsilon}{4c_{\max}}.
  %% \end{align*}
  %% Thus, $\frac{\bar{c}_k}{\|A_B^{-1}A_k\|} <
  %% (1+\epsilon)\frac{\bar{c}_j}{\|A_B^{-1}A_j\|} - \frac{\epsilon}{4}$.
\end{proof}

\subsection{Proofs from Section~\ref{sec:normest}.}
\begin{proof}{\it Proof of Prop.~\ref{prop:normest}.}
  The algorithm works as follows. We apply the unitary
  $U_{\text{rhs}}$ that maps $\ket{0^{\ceil{\log n}}} \otimes
  \ket{0^{\ceil{\log m}}} \to \sum_{k \in N} \frac{\|A_k\|}{\|A_N\|_F}
  \left(\ket{k} \otimes \ket{A_k}\right)$; we call the first register,
  that loops over $k \in N$, the ``column register''. This unitary can
  be constructed with $\tilde{O}(d_c n)$ gates, using
  Prop.~\ref{prop:sparsevec}. (The time to classically compute the
  norms can be ignored, as this only needs to be done once, and its
  time complexity is less than the total complexity of the algorithm.)
  We then apply the QLSA of \cite[Thm.~3]{childs2017quantum}, using
  $U_{\text{rhs}}$ as the oracle $P_{b}$, $P_{A_B}$ as the oracle for the
  constraint matrix, and precision $\frac{\epsilon}{2n}$. As a result,
  conditioned on the column register being $\ket{k}$, we obtain
  $\ket{A_B^{-1} A_k}$ in the output register of the QLSA
  algorithm. Following \cite{childs2017quantum}, there exists an
  auxiliary register that has value $\ket{0^r}$ with amplitude
  $\frac{1}{\alpha}\|\tilde{A}_B^{-1} \ket{A_k}\| =
  \frac{1}{\alpha\|A_k\|}\|\tilde{A}_B^{-1} A_k\|$, where $\alpha$ is
  a known number with $\alpha = O(\kappa
  \sqrt{\log(n\kappa/\epsilon)})$, and $\tilde{A}_B^{-1}$ is an
  operator that is $\frac{\epsilon}{2n}$-close to $A_B^{-1}$ in the
  operator norm. Since this is true for all $k$, the probability of
  obtaining $\ket{0^r}$ in the auxiliary register is:
  \begin{equation}
    \label{eq:amppsucc}
    \sum_{k \in N} \frac{\|A_k\|^2}{\|A_N\|_F^2}
    \frac{1}{\alpha^2\|A_k\|^2}\|\tilde{A}_B^{-1} A_k\|^2 =
    \frac{\|\tilde{A}_B^{-1} A_N\|_F^2}{\alpha^2 \|A_N\|_F^2}.
  \end{equation}
  Using Thm.~\ref{thm:ampest} to estimate this probability, amplitude
  estimation with precision $\frac{\epsilon}{4 \pi \alpha^2}$ yields
  an estimate $\tilde{a}$ of \eqref{eq:amppsucc} with error
  $\pm(\frac{\epsilon}{4 \alpha^2} + \frac{\epsilon^2}{16 \alpha^4})$,
  see \citep[Thm.~12]{brassard2002quantum}. Our estimate for
  $\|A_B^{-1} A_N\|_F^2$ is $\rho : =\tilde{a} \alpha^2 \|A_N\|_F^2$.
  We have:
  \begin{align*}
    \frac{\rho}{\|A_B^{-1} A_N\|_F^2} &\le \frac{\alpha^2
      \|A_N\|_F^2}{\|A_B^{-1} A_N\|_F^2}\left(\frac{\|\tilde{A}_B^{-1}
      A_N\|_F^2}{\alpha^2 \|A_N\|_F^2} + \frac{\epsilon}{4 \alpha^2} +
    \frac{\epsilon^2}{16 \alpha^4}\right) \le 1 + \frac{\epsilon}{2} +
    \frac{\alpha^2 \|A_N\|_F^2}{\|A_B^{-1}
      A_N\|_F^2}\left(\frac{\epsilon}{4 \alpha^2} +
    \frac{\epsilon^2}{16 \alpha^4}\right) \\ &\le 1 + \frac{\epsilon}{2}
    + \alpha^2 \left(\frac{\epsilon}{4 \alpha^2} +
    \frac{\epsilon^2}{16 \alpha^4}\right) \le 1 + \epsilon,
  \end{align*}
  where we have used the fact that $\frac{\|A_N\|_F^2}{\|A_B^{-1}
    A_N\|_F^2} \le 1$ because the smallest singular value of of
  $A_B^{-1}$ is $1$, and the fact that we can assume $\epsilon \le
  \alpha^2$ so that $\frac{\epsilon^2}{16 \alpha^2} \le
  \frac{\epsilon}{16}$. A similar calculation yields the lower bound,
  yielding the desired approximation. Regarding the running time,
  amplitude estimation with precision $\frac{\epsilon}{4 \pi
    \alpha^2}$ requires $\frac{4 \pi \alpha^2}{\epsilon} =
  \tilde{O}(\frac{\kappa^2}{\epsilon})$ applications of the QLSA. The
  running time for the QLSA is $\tilde{O}(\kappa d_c n + \kappa d^2
  m)$, because $P_{A_B}$ has gate complexity $\tilde{O}(dm)$ and $P_b$
  is the routine $U_{\text{rhs}}$, which has gate complexity
  $\tilde{O}(d_c n)$. Thus, we obtain a total running time of
  $\tilde{O}(\frac{\kappa^2}{\epsilon}(\kappa d_c n + \kappa d^2
  m))$. 
\end{proof}

\subsection{Proofs from Section~\ref{sec:ratio_test}.}
\begin{proof}{\it Proof of Prop.~\ref{prop:isunbounded}.}
  Let $\ket{\tilde{x}}$ be the state produced by the QLSA, where
  $\tilde{x}$ is an approximate solution to the linear system; by
  Thm.~\ref{thm:qls}, we have $\|\tilde{x} - \frac{A_B^{-1}
    A_k}{\|A_B^{-1} A_k\|}\| \le \frac{\delta}{10}$.

  Suppose the algorithm returns 1; this implies that the index $\ell$
  is not found. Then if all algorithms are successful, the routine
  {\sc SignEstNFN}${}^+(U_{\text{LS}}, \ell, \frac{9\delta}{10})$ must
  have returned 0 for all $\ell$. by Prop.~\ref{prop:signestnfn},
  $\tilde{x}_{\ell} < \frac{9\delta}{10}$, thus:
  \begin{align*}
    \frac{(A_B^{-1} A_k)_{\ell}}{\|A_B^{-1} A_k\|} \le \tilde{x}_{\ell} + \frac{\delta}{10}  < \frac{9\delta}{10} + \frac{\delta}{10} = \delta,
  \end{align*}
  which implies $(A_B^{-1} A_k)_{\ell} < \delta {\|A_B^{-1}
    A_k\|}$. Thus, if {\sc IsUnbounded} returns 1 and all algorithms
  are successful, it must be that $(A_B^{-1} A_k)_{\ell} <
  \delta{\|A_B^{-1} A_k\|}$ for all $\ell$, which is the condition
  in the statement of the proposition; in this case, the LP is
  indeed unbounded.

  We now analyze the running time.  To determine with constant
  probability if the sought index $\ell$ exists, i.e., if $g(\ell) =
  1$ for some value of $\ell$, we apply amplitude estimation with
  $O(\sqrt{m})$ applications of the Grover operator, see
  Thm.~\ref{thm:ampest}. Each application takes time
  $O(\frac{1}{\delta}(\kappa d^2 m))$, which is the complexity of
  running the QLSA and the sign estimation. The probability of success
  for the QLSA is bounded, and we have a way of determining success,
  therefore we can boost the probability of correctness arbitrarily
  high with enough repetitions of the algorithm. 
\end{proof}

\begin{proof}{\it Proof of Thm.~\ref{thm:qratio_test}.}
  Recall the formula for the ratio test, eq.~\eqref{eq:ratio_test}:
  \begin{equation*}
    r^* := \min_{j=1,\dots,m : u_j > 0} \frac{x_{B(j)}}{u_j},
  \end{equation*}
  where $x_B = A_B^{-1} b, u = A_B^{-1} A_k$. Note that since $x_B \ge
  0$, this can be rewritten as follows:
  \begin{equation}
    r^* := \max \{r : x_B - r u \ge 0 \}.
  \end{equation}
  Thus, an exact solution of the ratio test is attained by any index $j$
  such that $(A_B^{-1} b - r^* A_B^{-1}A_k)_j = (x_B - r^* u)_j = 0$.

  In Alg.~\ref{alg:findrow}, on line~\ref{al:compfind} we define the
  unitary $U_r$. By Prop.~\ref{prop:signestnfn}, with the tolerances set
  in the algorithm, if:
  \begin{equation*}
    \frac{(A_B^{-1} (b - r A_k))_j}{\|A_B^{-1}(b - r A_k)\|} < - \frac{\delta}{4\|A_B^{-1}(b - r A_k)\|},
  \end{equation*}
  then {\sc SignEstNFN} returns 0. Removing the normalization factor
  and taking into account the error of the QLSA, which we set to be at
  most $\frac{\delta}{4\|A_B^{-1}(b - r A_k)\|}$, we have that if
  $(A_B^{-1} (b - r A_k))_j < - \frac{\delta}{2}$ then {\sc
    SignEstNFN} returns 0 with high probability (that can be boosted,
  as usual, by doing multiple repetitions, taking the majority vote,
  and using the Chernoff bound). Similarly, by adjusting the tolerance
  and using the other side of Prop.~\ref{prop:signestnfn}, we can
  define a unitary that returns 1 if $(A_B^{-1} (b - r A_k))_j \ge -
  \frac{\delta}{2}$.

  We now analyze the binary search at line~\ref{al:binsearch}. We can
  compute an upper bound on the maximum value of $r$ by starting with
  $r = 1$ and doubling every time until we find a value such that
  $A_B^{-1} b - r A_B^{-1}A_k \not\ge -\frac{\delta}{2}
  \mathbf{1}_m$; this gives an estimate of the upper range for $r$
  that is off by at most a factor of two. Note that the maximum
  value for $r$ is $O(\max\{\|A_B^{-1}b\|, \|A_B^{-1} A_k\|\})$. We are
  seeking a value $\tilde{r}$ such that $A_B^{-1} b - \tilde{r}
  A_B^{-1}A_k \not\ge -\frac{\delta}{2} \mathbf{1}_m$ but $A_B^{-1}
  b - (\tilde{r}-\frac{\delta}{2\kappa\|A_k\|}) A_B^{-1}A_k \ge
  -\frac{\delta}{2}\mathbf{1}_m$; this can be done by binary search
  in time $O(\log \frac{\kappa\max\{A_B^{-1}b, A_B^{-1}
    A_k\}\|A_k\|}{\delta}) = \tilde{O}(1)$. For the return value
  $\tilde{r}$, for every (basic) index $j$, we have $(A_B^{-1} (b -
  (\tilde{r} - \frac{\delta}{2\kappa\|A_k\|}) A_k))_j \ge -
  \frac{\delta}{2}$, which implies $(A_B^{-1} (b - \tilde{r}
  A_k))_j \ge - \frac{\delta}{2} - \frac{\delta}{2\kappa\|A_k\|}
  (A_B^{-1}A_k)_j \ge - \delta$. This implies that the new basic
  solution obtained by moving by $\tilde{r}$ along the nonbasic edge
  $k$ is $\delta$-feasible, i.e., it violates nonnegativity by at
  most $\delta$. Additionally, we know that there exists $j$ such
  that $(A_B^{-1} (b - \tilde{r} A_k))_j < -\frac{\delta}{2}$, and
  such an index $j$ is precisely the value returned by {\sc FindRow};
  thus, the algorithm returns a row index for the pivot that yields
  the next basic (approximately) feasible solution.

  Regarding the running time, the binary search requires
  $\tilde{O}(1)$ iterations, as stated; at each binary search
  iteration, we need $\tilde{O}(\sqrt{m})$ amplitude estimation
  iterations (Thm.~\ref{thm:ampest}) at line~\ref{al:compfind}, that
  multiplies the cost of running {\sc SignEstNFN} on the solution of
  the QLSA, which is $\tilde{O}(\frac{\|A_B^{-1}A_k\|}{\delta}
  T_{\text{LS}}(A_B, A_k, \frac{\delta}{8\|A_B^{-1}(b - r A_k)\|}))
  = \tilde{O}(\frac{\|A_B^{-1}A_k\|}{\delta} \kappa d^2 m)$. The total running time of the algorithm is
  therefore $\tilde{O}(\frac{1}{\delta} \eta d^2 \kappa^2 m^{1.5})$. 
\end{proof}

\subsection{Proofs from Section~\ref{sec:harris}.}
\begin{proof}{\it Proof of Prop.~\ref{prop:isfeasible}.}
  Let $\ket{\tilde{x}}$ be the state produced by the QLSA; by
  Thm.~\ref{thm:qls}, we have $\|\tilde{x} - \frac{A_B^{-1}
    b}{\|A_B^{-1} b\|}\| \le \frac{\delta}{10}$.

  Suppose $A_B^{-1} b \not\ge -\delta \mathbf{1}_m$, i.e.,
  the basic solution is infeasible. Then, for some index $\ell$, we
  must have $\frac{(A_B^{-1} b)_{\ell}}{\|A_B^{-1} b\|} <
  -\frac{\delta}{\|A_B^{-1} b\|}$. This implies:
  \begin{align*}
    \tilde{x}_{\ell} \le \frac{(A_B^{-1} b)_\ell}{\|A_B^{-1} b\|} +
    \frac{\delta}{10\|A_B^{-1} b\|} < -\frac{\delta}{\|A_B^{-1} b\|} + \frac{\delta}{10\|A_B^{-1} b\|} =
    -\frac{9\delta}{10\|A_B^{-1} b\|}.
  \end{align*}
  By Prop.~\ref{prop:signestnfp}, if the routine {\sc
    SignEstNFN}$(U_{\text{LS}}, \ell, \frac{9\delta}{20\|A_B^{-1} b\|})$ is
  successful it returns zero, so that the function $g(\ell)$ evaluates
  to 1. This implies that if all subroutines are successful, {\sc
    IsFeasible} returns 0, as desired. The running time analysis is
  the same as in Prop.~\ref{prop:isunbounded}. 
\end{proof}

\subsection{Proofs from Section~\ref{sec:qram}}
\begin{proof}{\it Proof of Prop.~\ref{prop:qlsaqram}.}
  By \citep[Lemma 50]{gilyen2019quantum}, we can construct a
  $(\mu(A_B), O(\log n), \epsilon/(\kappa^2 \log^3
  \frac{\kappa}{\epsilon}))$ block encoding for $A_B$ using suitable
  data structures, in time $\tilde{O}(1)$. Using the techniques in
  \citep[Sect.~4.3]{chakraborty2018power}, we can then compute a
  normalized version of $A_B^{-1} \ket{x}$ in time $\tilde{O}(\mu(A_B)
  \kappa)$.

  The QRAM data structures describing $A_B$ and $A_N$ can be prepared
  in the claimed time following \cite{kerenidis2018quantum} and
  Prop.~\ref{prop:sparsevec}. After each iteration of the simplex
  method, we simply need to reindex the structures in memory, i.e.,
  swap one nonbasic column with a basic column, which takes at most
  $\tilde{O}(m)$ operations since each column has size $m$. 
\end{proof}

\end{document}